\pgfplotsset{compat=newest}
\DeclareMathOperator{\Tr}{Tr}
\newcommand{\vspan}{\mathrm{span}}
\newcommand{\coleq}{\mathrel{\mathop:}\nobreak\mkern-1.2mu=}
\newcommand{\eqcol}{\mkern-1.2mu=\nobreak\mathrel{\mathop:}}
\newcommand{\od}[1]{^{(#1)}}
\newcommand{\pt}{\mathrm{pt}}
\newcommand{\mf}{\mathfrak}
\newcommand{\mc}{\mathcal}
\newcommand{\mr}{\mathrm}
\newcommand{\mbb}{\mathbb}
\newcommand{\expval}[1]{{\langle #1 \rangle}}
\newcommand{\E}{{\mathop{\mbb{E}}}}
\newcommand{\Pn}{{{\sf P}^n}}
\newtheorem{theorem}{Theorem}
\newtheorem{proposition}[theorem]{Proposition}
\newtheorem{lemma}{Lemma}
\newtheorem{corollary}[theorem]{Corollary}
\newtheorem{definition}{Definition}
\definecolor{applegreen}{rgb}{0.55, 0.71, 0.0}
\newcommand{\algorithmfootnote}[2][\footnotesize]{%
  \let\old@algocf@finish\@algocf@finish
  \def\@algocf@finish{\old@algocf@finish
    \leavevmode\rlap{\begin{minipage}{\linewidth}
    #1#2
    \end{minipage}}%
  }%
}
\NewDocumentCommand{\LeftComment}{s m}{%
  \Statex \IfBooleanF{#1}{\hspace*{\ALG@thistlm}}\(\triangleright\) #2}
\algnewcommand{\LineComment}[1]{\Statex // #1}
\begin{document}
\title{The learnability of Pauli noise}
\author{Senrui Chen}
\thanks{S.C. and Y.L. contributed equally to this work (alphabetical order). Correspondence and requests for materials should be addressed to S.C. (\href{mailto:csenrui@uchicago.edu}{csenrui@uchicago.edu}), Y.L. (\href{mailto:yunchaoliu@berkeley.edu}{yunchaoliu@berkeley.edu}) or L.J. (\href{mailto:liang.jiang@uchicago.edu}{liang.jiang@uchicago.edu}).}
\affiliation{Pritzker School of Molecular Engineering, University of Chicago, IL 60637, USA}
\author{Yunchao Liu}
\thanks{S.C. and Y.L. contributed equally to this work (alphabetical order). Correspondence and requests for materials should be addressed to S.C. (\href{mailto:csenrui@uchicago.edu}{csenrui@uchicago.edu}), Y.L. (\href{mailto:yunchaoliu@berkeley.edu}{yunchaoliu@berkeley.edu}) or L.J. (\href{mailto:liang.jiang@uchicago.edu}{liang.jiang@uchicago.edu}).}
\affiliation{Department of Electrical Engineering and Computer Sciences, University of California, Berkeley, CA 94720, USA}
\author{Matthew Otten}
\affiliation{HRL Laboratories, LLC, 3011 Malibu Canyon Rd., Malibu, CA 90265, USA}
\author{Alireza Seif}
\affiliation{Pritzker School of Molecular Engineering, University of Chicago, IL 60637, USA}
\author{Bill Fefferman}
\affiliation{Department of Computer Science, University of Chicago, IL 60637, USA}
\author{Liang Jiang}
\thanks{S.C. and Y.L. contributed equally to this work (alphabetical order). Correspondence and requests for materials should be addressed to S.C. (\href{mailto:csenrui@uchicago.edu}{csenrui@uchicago.edu}), Y.L. (\href{mailto:yunchaoliu@berkeley.edu}{yunchaoliu@berkeley.edu}) or L.J. (\href{mailto:liang.jiang@uchicago.edu}{liang.jiang@uchicago.edu}).}
\affiliation{Pritzker School of Molecular Engineering, University of Chicago, IL 60637, USA}

\date{\today}
\begin{abstract}
Recently, several quantum benchmarking algorithms have been developed to characterize noisy quantum gates on today’s quantum devices. A fundamental issue in benchmarking is that not everything about quantum noise is learnable due to the existence of gauge freedom, leaving open the question what information is learnable and what is not, which is unclear even for a single CNOT gate. Here we give a precise characterization of the learnability of Pauli noise channels attached to Clifford gates using graph theoretical tools. Our results reveal the optimality of cycle benchmarking in the sense that it can extract all learnable information about Pauli noise. We experimentally demonstrate noise characterization of IBM’s CNOT gate up to 2 unlearnable degrees of freedom, for which we obtain bounds using physical constraints. In addition, we show that an attempt to extract unlearnable information by ignoring state preparation noise yields unphysical estimates, which is used to lower bound the state preparation noise.
\end{abstract}
\maketitle

\section{Introduction}
Characterizing quantum noise is an essential step in the development of quantum hardware~\cite{eisert2020quantum,preskill2018quantum}. Remarkably, despite recent progress in both gate-level and scalable noise characterization methods~\cite{Emerson2005scalable,Knill2008randomized,Dankert2009exact,Magesan2011scalable,Magesan2012characterizing,helsen2020general,erhard2019characterizing,flammia2020efficient,harper2020efficient,harper2020fast,flammia2021pauli,liu2021benchmarking,flammia2021averaged,chen2022quantum}, the full characterization of the noise channel of a single CNOT/CZ gate remains infeasible. This is unlikely to be caused by limitations of existing benchmarking algorithms. Instead, it is believed to be related to the fundamental question of what information about a quantum system can be learned, in a setting where initial states, gates, and measurements are all subject to unknown quantum noise. It is well-known that \emph{some} information about quantum noise can be learned (such as the gate fidelity learned by randomized benchmarking~\cite{Emerson2005scalable,Knill2008randomized,Dankert2009exact,Magesan2011scalable,Magesan2012characterizing} or cycle benchmarking~\cite{erhard2019characterizing}), but \emph{not everything} can be learned (due to the gauge freedom in gate set tomography~\cite{Merkel2013Self-consistent,Blume-Kohout2013Robust,nielsen2021gate}). The boundary of learnability of quantum noise -- a precise understanding of what information is learnable and what is not, still remains an open question.

Recently, there has been an interest in formulating noise characterization as learning unknown gate-dependent Pauli noise channels~\cite{erhard2019characterizing,harper2020efficient}. This is motivated by randomized compiling, a technique that has been proposed to suppress coherent errors via inserting random Pauli gates~\cite{wallman2016noise,hashim2020randomized}. As an added benefit, randomized compiling twirls the gate-dependent CPTP noise channel into Pauli noise, thus reducing the number of parameters to be learned. Note that the twirled Pauli noise channel corresponds to the diagonal of the process matrix of the CPTP map, so Pauli noise learning is a necessary step for characterizing the CPTP map, regardless of whether randomized compiling is performed.

However, even under this simplified setting of Pauli noise learning, all prior experimental attempts can only partially characterize the noise channel of a single CNOT/CZ gate~\cite{hashim2020randomized,berg2022probabilistic,Ferracin2022Efficiently}, which only has 15 degrees of freedom. A natural question is whether this limitation is caused by the fundamental unlearnability of the noise channel, and if so, which part of the noise channel and how many degrees of freedom among the 15 are unlearnable?
        
In this paper, we give a precise characterization of what information in the Pauli noise channel attached to Clifford gates is learnable, in a way that is robust against state preparation and measurement (SPAM) noise. 
We develop a systematic method for characterizing learnable degrees of freedom of a Clifford gate set using notions from algebraic graph theory
and show that learnable information exactly corresponds to the cycle space of the Pauli pattern transfer graph, while unlearnable information exactly 
corresponds to the cut space. This characterization can be used to write down a list of linear functions of the noise model that corresponds to all independent learnable degrees of freedom. As an example, we show that the Pauli noise channel of an arbitrary 2-qubit Clifford gate has at most 2 unlearnable degrees of freedom. We perform an experimental characterization of a CNOT gate on IBM Quantum hardware~\cite{ibmquantum} up to 2 unlearnable degrees of freedom. Although the unlearnable information cannot be estimated with high precision, we can determine a feasible region of those freedoms using the constraint that the noise model must be physical (\textit{i.e.}, all Pauli error rates are nonnegative).

A corollary of our result is that cycle benchmarking is optimal in the setting we consider, in the sense that it can learn all the information that is learnable. This reveals a fundamental fact about noise benchmarking, namely that cycle benchmarking -- the idea of repeatedly applying the same gate sequence interleaved by single qubit gates, is the ``right'' algorithm for benchmarking Clifford gates, because of the fact that learnable information forms a cycle space.
As an interesting side remark, the term ``cycle'' in cycle benchmarking originally refers to parallel gates applied in a clock cycle. Here we show that the term can also be understood in a graph-theoretical context.

In addition, we also explore ways to overcome the unlearnability barrier. It has been recognized that the unlearnability does not apply if the initial state $\ket{0}^{\otimes n}$ can be prepared perfectly~\cite{flammia2021averaged,Ferracin2022Efficiently}, and it has been suggested that state preparation noise could be much smaller than gate and/or measurement noise in practice~\cite{Maciejewski2020mitigationofreadout,Bravyi2021Mitigating,Ferracin2021Experimental}, which would make gate noise fully learnable up to small error. We develop an algorithm based on cycle benchmarking that fully learns gate-dependent Pauli noise channel assuming perfect initial state preparation, and experimentally demonstrate the method on IBM's CNOT gate. Based on the experiment data, we conclude that this assumption is unlikely to be correct in our experiment as it gives unphysical estimates that are outside the feasible region we determined. Furthermore, we use the data to obtain a lower bound on the state preparation noise and conclude that it has the same order of magnitude as gate noise on the device we used. Therefore, the issue of unlearnability is a practically relevant concern, for which the noise on initial states is an important factor that cannot be neglected on current quantum hardware.

\section{Results}
\subsection{Theory of learnability}
We start by considering the learnability of the Pauli noise channel of a single $n$-qubit Clifford gate. A Pauli channel can be written as 
\begin{equation}
    \Lambda(\cdot) = \sum_{a\in{{\sf P}^n}}p_a P_a(\cdot)P_a,
\end{equation}
where $\{p_a\}$ is a probability distribution on ${\sf{ P}}^n=\{I,X,Y,Z\}^n$. The goal is to learn this distribution, which has $4^n-1$ degrees of freedom. Considering $\Lambda$ as a linear map, its eigenvectors exactly correspond to all $n$-qubit Pauli operators, as
\begin{equation}
    \Lambda(P_a) = \lambda_a P_a,\quad\forall a\in{{\sf P}^n}
\end{equation}
where $\lambda_a = \sum_{b\in{{\sf P}^n}}p_b(-1)^\expval{a,b}$ is the Pauli fidelity associated with the Pauli operator $P_a$. Therefore $\Lambda$ is a linear map with known eigenvectors and unknown eigenvalues, so a natural way to learn $\Lambda$ is to first learn all the Pauli fidelities $\lambda_a$, and then reconstruct the Pauli errors via $p_a = \frac{1}{4^n}\sum_{b\in{{\sf P}^n}}\lambda_b(-1)^\expval{a,b}$.

\begin{figure}[t]
    \centering
    \subfloat[standard CB]{
    \centering
    \includegraphics[width=0.58\linewidth]{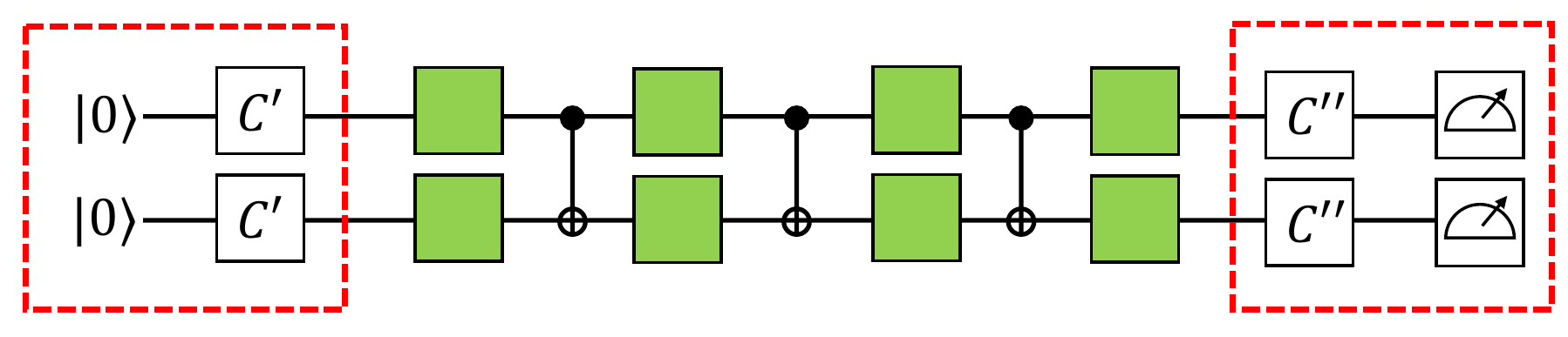}
    }\\
    \subfloat[CB with interleaved gates]{
    \centering
    \includegraphics[width=0.7\linewidth]{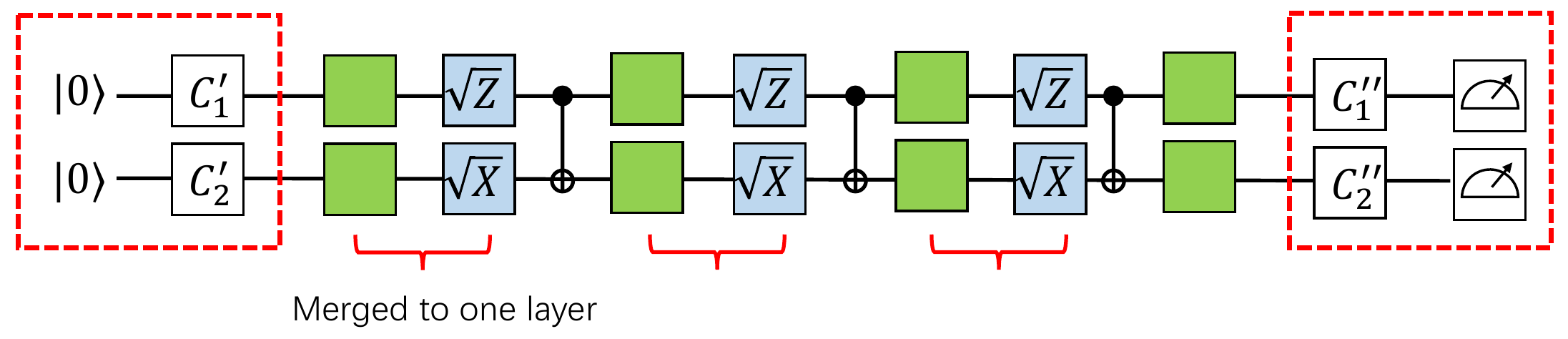}
    }
    \caption{Cycle benchmarking for learning the Pauli noise channel of a CNOT gate. (a) Standard CB circuits, where CNOT gates are interleaved by random Pauli gates (green boxes), with initial stabilizer states and Pauli basis measurements (red boxes). (b) CB circuits with additional interleaved single qubit Clifford gates (blue boxes).}
    \label{fig:main_cb}
\end{figure}

The convenience of working with Pauli fidelities is further demonstrated by the fact that some Pauli fidelities can be directly learned by cycle benchmarking, even with noisy state preparation and measurement. For example, consider the CNOT gate which maps the Pauli operator $IX$ to itself. Fig.~\ref{fig:main_cb} (a) shows the cycle benchmarking circuit. Imagine that we put the Pauli operator $IX$ after the left red box and evolve it with the circuit, then the evolved operator (before the right red box) equals $\lambda_{IX}^3\cdot IX$, up to a $\pm$ sign (which comes from the random Pauli gates and can always be accounted for during post-processing).
Here we use the convention that the noise channel happens before each CNOT gate. In experiments, we prepare a $+1$ eigenstate of $IX$ (such as $\ket{+}\ket{+}$), measure the expectation value of $IX$ at the end, and average over random Pauli twirling sequences. These SPAM operations are noisy and are represented as the red boxes. It is shown~\cite[Theorem~1 in Supplementary Information]{erhard2019characterizing} that the measured expectation value equals
\begin{equation}
    \E \expval{IX}=A_{IX}\cdot \lambda_{IX}^d
\end{equation}
where the expectation is over random Pauli twirling gates and randomness of quantum measurement, and $A_{IX}$ depends on SPAM noise but is independent of circuit depth $d$. From this $\lambda_{IX}$ can be learned by estimating the observable $IX$ at several different depths and perform a curve fitting.

The Pauli operator $IX$ is special as it is invariant under CNOT. Consider another example: CNOT maps $XZ$ to $YY$ and vice versa. Consider Fig.~\ref{fig:main_cb} (b) where we insert additional layers of single-qubit Clifford gates $\sqrt{Z}\otimes\sqrt{X}$ that also maps $XZ$ to $YY$ and vice versa (up to a minus sign that can always be accounted for during post-processing). After $XZ$ picks up a coefficient $\lambda_{XZ}$ in front of the CNOT gate, it gets mapped to $\lambda_{XZ}\cdot YY$ by CNOT but then rotated back to $\lambda_{XZ}\cdot XZ$ by $\sqrt{Z}\otimes\sqrt{X}$. Following the same argument we conclude that both $\lambda_{XZ}$ and $\lambda_{YY}$ are learnable. For simplicity here we make an assumption that single qubit gates are noiseless, motivated by the fact that single qubit gates are 1-2 magnitudes less noisy than 2-qubit gates on today's quantum hardware~\cite{ibmquantum}. In practice, it is a standard assumption to model noise on single-qubit gates as gate-independent (\textit{e.g.}~\cite[Sec. II A]{Ferracin2022Efficiently}), and our noise characterization result can be interpreted as the noise channel induced by a dressed cycle which consists of a CNOT gate and two single-qubit gates~\cite{wallman2016noise}.

The main challenge comes with the next example: CNOT maps $IZ$ to $ZZ$ and vice versa. By directly applying cycle benchmarking as in Fig.~\ref{fig:main_cb} (a) (with even depth $d$) we obtain
\begin{equation}
    \E\expval{IZ}=A_{IZ}\cdot \lambda_{IZ}\lambda_{ZZ}\lambda_{IZ}\lambda_{ZZ}\cdots = A_{IZ}\left(\lambda_{IZ}\lambda_{ZZ}\right)^{d/2},
\end{equation}
and curve fitting gives $\sqrt{\lambda_{IZ}\lambda_{ZZ}}$ (similar results have been obtained in~\cite{erhard2019characterizing,hashim2020randomized,berg2022probabilistic,Ferracin2022Efficiently}). To learn $\lambda_{IZ}$, we may consider applying the same technique in Fig.~\ref{fig:main_cb} (b). However, the problem is that once $IZ$ gets mapped to $ZZ$, it cannot be rotated back to $IZ$ because $I$ is invariant under single qubit unitary gates. The main difference between this example and previous examples is that here the \emph{Pauli weight pattern} (an $n$-bit binary string with 0 indicating identity and 1 indicating non-identity) changes from 01 to 11, thus making the single qubit rotation tool inapplicable.

In fact we can go on to prove that $\lambda_{IZ}$ (as well as $\lambda_{ZZ}$) is unlearnable. Here unlearnable means that there exists two noise models such that the parameter $\lambda_{IZ}$ is different, but the two noise models are indistinguishable by any quantum experiment, meaning that any quantum experiment generates exactly the same output statistics with the two noise models. The result also generalizes to arbitrary $n$-qubit Clifford gates.

\begin{theorem}\label{thm:mainpaulifidelity}
Given an $n$-qubit Clifford gate $\mc G$ and an $n$-qubit Pauli operator $P_a$,
the Pauli fidelity $\lambda_a$ of the noise channel attached to $\mc G$ is learnable if and only if $\pt(\mc G(P_a))= \pt(P_a)$. Here $\pt$ denotes the Pauli weight pattern.
\end{theorem}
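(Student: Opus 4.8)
The plan is to characterize learnability through the \emph{gauge freedom} of the noisy gate set, working entirely in the Pauli transfer matrix (PTM) representation. In this representation the ideal Clifford $\mc G$ is a signed permutation matrix sending $P_b \mapsto \pm P_{\mc G(b)}$, the twirled Pauli noise channel $\Lambda$ is the diagonal matrix $\mathrm{diag}(\lambda_b)$, and the output statistics of any experiment is a product of such matrices contracted against a (noisy) state-preparation vector and measurement covector. I would first recall the standard fact that inserting $\mc D^{-1}\mc D$ between every pair of operations leaves all statistics invariant; hence two gate sets related by a \emph{gauge transformation} $\mc D$ are indistinguishable, and a function of the noise parameters is learnable only if it is invariant under every admissible $\mc D$.

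The key reduction is to determine which gauges are admissible. I would argue that (i) to preserve the Pauli-diagonal form of the twirled noise and the signed-permutation form of the ideal gate, $\mc D$ must itself be diagonal in the Pauli basis, $\mc D = \mathrm{diag}(d_b)$ with $d_{0\cdots0}=1$; and (ii) because the interleaving single-qubit Clifford gates are assumed noiseless, $\mc D$ must commute with every single-qubit Clifford, and these act transitively on Paulis of a fixed weight pattern. Together these force $d_b$ to depend only on $\pt(P_b)$. A short computation in the PTM then gives the transformation law $\lambda_b \mapsto \lambda_b\, d_{\pt(\mc G(P_b))}/d_{\pt(P_b)}$.

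From this transformation law both directions follow. For the ``if'' direction, when $\pt(\mc G(P_a))=\pt(P_a)$ the two pattern-indexed gauge factors coincide, so $\lambda_a$ is gauge-invariant; I would then exhibit an explicit cycle-benchmarking protocol that actually extracts it, choosing a single-qubit Clifford $\mc C$ that returns $\mc G(P_a)$ to $P_a$ (possible exactly because the patterns agree), so that the dressed cycle $\mc C\mc G$ fixes $P_a$ up to sign and yields $\E\expval{P_a}=A_a\lambda_a^d$ with a SPAM-dependent but depth-independent $A_a$. For the ``only if'' direction, when $\pt(\mc G(P_a))\ne\pt(P_a)$ I would choose the pattern function $d$ to take different values on these two distinct patterns, making the gauge factor on $\lambda_a$ nontrivial, so that $\lambda_a$ is genuinely gauge-dependent; taking $d$ infinitesimally close to the all-ones vector keeps the transformed state, effects, and noise channel physical, producing two distinct but statistically identical models.

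The step I expect to be the main obstacle is pinning down the admissible gauge group and proving completeness of the construction. The ``if'' direction is essentially constructive, but the ``only if'' direction requires (a) justifying that no experiment --- including those preparing entangled states or exploiting the full Clifford group --- can break the pattern-level degeneracy beyond what the diagonal, pattern-respecting gauge already captures, and (b) verifying that the gauge-transformed gate set is a genuinely physical model (valid density operators, POVM effects, and a completely positive Pauli channel), which I would handle by a perturbative argument around a noise model lying in the interior of the physical region.
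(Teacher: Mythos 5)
Your proposal is correct and follows essentially the same route as the paper: the learnable direction is handled by the identical cycle-benchmarking construction (dress $\mc G$ with a single-qubit Clifford layer that returns $\mc G(P_a)$ to $P_a$, then extract $\lambda_a$ from the depth dependence, e.g.\ ratios of expectation values at consecutive depths), and the unlearnable direction by a GST-style gauge transformation given by a Pauli-diagonal, pattern-dependent map kept physical by perturbing around a noise model with strictly positive Pauli error rates --- the paper's specific gauge is the special case of a single-qubit depolarizing channel on a qubit where $\pt(P_a)$ and $\pt(\mc G(P_a))$ differ, which makes commutation with the noiseless single-qubit gates immediate, and your more general pattern-diagonal family is exactly what the paper deploys later for the cut-space theorem. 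One clarification: the step you flag as the main obstacle, item (a) on completeness of the admissible gauge group, is not needed at all, since unlearnability only requires exhibiting \emph{one} valid gauge that moves $\lambda_a$ (soundness plus your physicality check (b)), while learnability is certified by the explicit protocol rather than by gauge invariance, so you never have to prove that pattern-respecting diagonal maps exhaust the gauge freedom.
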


The ``if'' part follows directly from cycle benchmarking as discussed above. For the ``only if'' part, when $\pt(\mc G(P_a))\neq \pt(P_a)$, we construct a gauge transformation to prove the unlearnability of $\lambda_a$, following ideas from gate set tomography~\cite{Merkel2013Self-consistent,Blume-Kohout2013Robust,nielsen2021gate}. A gauge transformation is an invertible linear map $\mc M$ that converts a noise model (initial states $\rho_i$, POVM operators $E_j$, noisy gates $G_k$) to a new noise model as
\begin{equation}
    \rho_i\mapsto \mc M(\rho_i),\quad E_j\mapsto (\mc M^{-1})^\dagger (E_j),\quad G_k\mapsto \mc M\circ G_k \circ\mc M^{-1},
\end{equation}
with the constraint that the new noise model is physical. Note that the old and new noise models are indistinguishable by definition. To construct such a gauge transformation, as $\pt(\mc G(P_a))\neq \pt(P_a)$, there exists a bit on which the two Pauli weight patterns differ. We then define $\mc M$ as a single-qubit depolarizing noise channel on the corresponding qubit. In this way we can show that the old and new noise models assign different values to $\lambda_a$, which means $\lambda_a$ is unlearnable. This proof naturally implies that using other noisy gates from the gate set (that are subject to different unknown noise channels) does not change the learnability of Pauli fidelities. 
More details of the proof are given in Supplementary Section II B.
As a side remark, it is known that under the stronger assumption of gate-independent noise (where different multi-qubit gates are assumed to have the same noise channel), the noise channel is fully learnable~\cite{kimmel2014robust,helsen2021estimating,huang2022foundations}.

Theorem~\ref{thm:mainpaulifidelity} provides a simple condition for determining the learnability of individual Pauli fidelities, but it is not sufficient for characterizing the learnability of joint functions of different Pauli fidelities. In the CNOT example, we know that both $\lambda_{IZ}$ and $\lambda_{ZZ}$ are unlearnable, but we also know that their product $\lambda_{IZ}\lambda_{ZZ}$ is learnable. This means that there is only one unlearnable degree of freedom in the two parameters $\{\lambda_{IZ},\lambda_{ZZ}\}$. In the following we show how to determine learnable and unlearnable degrees of freedom of Pauli noise, and also generalize the discussion from a single gate to a gate set.

We start by defining learnable information. Consider a Clifford gate set with $m$ gates, where we model each gate as an $n$-qubit gate associated with an $n$-qubit Pauli noise channel. This model is applicable to both individual gates (\textit{e.g.} a 2-qubit system where each 2-qubit gate is implemented by a different physical process and subject to a different noise channel) as well as parallel applications of gates (\textit{e.g.} an $n$-qubit system where each ``gate'' in the gate set is implemented by a layer of 2-qubit gates; the $n$-qubit noise channel models the crosstalk among the 2-qubit gates). The goal is to characterize the learnable degrees of freedom among the $m\cdot 4^n$ parameters.  

Recall that the output of cycle benchmarking is a product of Pauli fidelities (including SPAM noise). We further show that without loss of generality this is the only type of information that we need to obtain from quantum experiments for the purpose of noise learning. This is because in general the output probability of any quantum experiment can be expressed as a sum of products of Pauli fidelities, and each individual product can be learned by cycle benchmarking (Supplementary Section IV). We therefore consider learning functions of the noise model that can be expressed as a product of Pauli fidelities (also see below Eq.~\eqref{eq:mainapproximation} for a related discussion). This can be reduced to considering functions of the form $f=\sum_{a,\mc G}v_{a}^{\mc G}\cdot l_a^{\mc G}$, where $l_a^{\mc G}:=\log \lambda_a^{\mc G}$ is the log Pauli fidelity, $v_{a}^{\mc G}\in\mathbb{R}$, and the superscript $\mc G$ denotes the corresponding Clifford gate. In the CNOT example $l_{IZ}+l_{ZZ}$ is a learnable function. The idea of learning log Pauli fidelities in benchmarking has also been considered in~\cite{flammia2021averaged,nielsen2022first}. The advantage of considering log Pauli fidelities here is that the set of all learnable functions $f$ forms a vector space. Therefore to characterize all independent learnable degrees of freedom, we only need to determine a basis of the vector space.

\begin{figure}[t]
    \centering
    \includegraphics[width=1.0\linewidth]{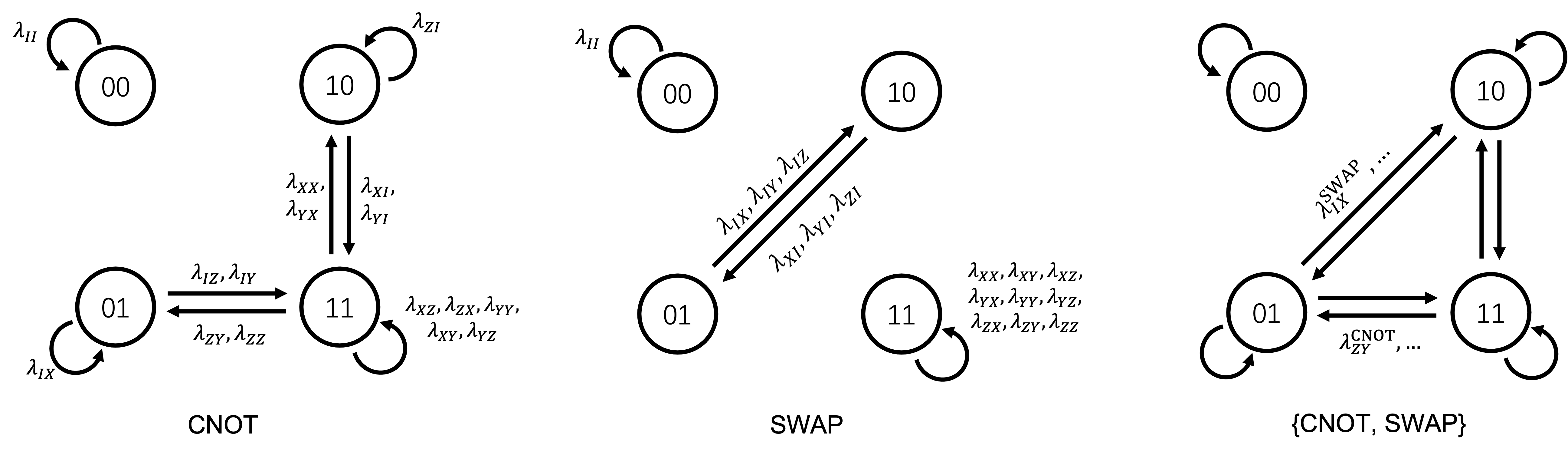}
    \caption{Pattern transfer graph of CNOT, SWAP, and a gate set consisting of CNOT and SWAP. Here, multiple edges are represented by a single edge with multiple labels. 
    The labels on the first two graphs are gate dependent, though we omit the superscripts of CNOT or SWAP.
    The labels on the last graph are a combination of the first two graphs and are omitted for clarity.}
    \label{fig:main_patterntransfer}
\end{figure}

Recall that the reason that $l_{IZ}+l_{ZZ}$ is learnable in the CNOT example is because the path of Pauli operator in the cycle benchmarking circuit forms a cycle $IZ\to ZZ\to IZ\to\cdots$, and the product of Pauli fidelities along the cycle ($\lambda_{IZ}\lambda_{ZZ}$) can be learned via curve fitting. In general, as we can also insert single qubit Clifford gates in between, we do not need to differentiate between $X,Y,Z$. We therefore consider the \emph{pattern transfer graph} associated with a Clifford gate set where vertices corresponds to binary Pauli weight patterns and each edge is labeled by the Pauli fidelity of the incoming Pauli operator. The graph has $2^n$ vertices and $m\cdot 4^n$ directed edges. They can also be merged to form the pattern transfer graph of the gate set $\{\mathrm{CNOT},\mathrm{SWAP}\}$. Fig.~\ref{fig:main_patterntransfer} shows the pattern transfer graph of CNOT, SWAP, and the gate set of $\{\text{CNOT}, \text{SWAP}\}$. Consider an arbitrary cycle in the pattern transfer graph $C=(e_1,\dots,e_k)$ where each edge $e_i$ is associated with some Pauli fidelity $\lambda_i$. Following Fig.~\ref{fig:main_cb} (b), a cycle benchmarking circuit can be constructed which learns the product of the Pauli fidelites along the cycle, or equivalently the function $f_C:=\sum_{e_i\in C}\log \lambda_i$ can be learned. This implies that the set of functions defined by linear combination of cycles $\{\sum_{C\in\text{cycles}}\alpha_C f_C:\alpha_C\in\mathbb{R}\}$ are learnable. In the following we show that this in fact corresponds to all learnable information about Pauli noise.

We label the edges of the pattern transfer graph as $e_1,\dots,e_M$ where $M=m\cdot 4^n$ and each edge $e_i$ is a variable that represents some log Pauli fidelity. The goal is to characterize the learnability of linear functions of the edge variables $f=\sum_{i=1}^M v_i e_i$, $v_i\in\mathbb{R}$. The set of linear functions can be equivalently understood as a vector space of dimension $M$, called the \emph{edge space} of the graph, where $f$ corresponds to a vector $(v_1,\dots,v_M)$ and we think of $e_1,\dots,e_M$ as the standard basis. Following the above discussion, the \emph{cycle space} of the graph is defined as $\vspan\{\sum_{e\in C}e:C\text{ is a cycle}\}$, which is a subspace of edge space. We also define another subspace, the \emph{cut space}, as $\vspan\{\sum_{e\in C}(-1)^{e\text{ from }V_1\text{ to }V_2}e:C\text{ is a cut between a partition of vertices }V_1,V_2\}$. It is known that the edge space is the orthogonal direct sum of cycle space and cut space for any graph~\cite{bollobas1998modern}. Interestingly, we show that the complementarity between cycle and cut space happens to be the dividing line that determines the learnability of Pauli noise. 

\begin{theorem}\label{thm:mainpaulilearnability}
  The vector space of learnable functions of the Pauli noise channels associated with an $n$-qubit Clifford gate set is equivalent to the cycle space of the pattern transfer graph. In other words,
  \begin{equation}
      \begin{aligned}
       \text{All information}\quad &\equiv \quad \text{Edge space},\\
       \text{Learnable information}\quad &\equiv \quad \text{Cycle space},\\
       \text{Unlearnable information}\quad &\equiv \quad \text{Cut space}.\\
      \end{aligned}
  \end{equation}
This implies that the number of unlearnable degrees of freedom equals $2^n - c$, where $c$ is the number of connected components of the pattern transfer graph.
\end{theorem}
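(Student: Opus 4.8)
The plan is to prove the two inclusions learnable $\supseteq$ cycle space and learnable $\subseteq$ cycle space, and then read off the dimension count from standard algebraic graph theory. The first inclusion is already in hand: for any cycle $C$ in the pattern transfer graph, cycle benchmarking learns $f_C=\sum_{e\in C}\log\lambda_e$, as discussed above Theorem~\ref{thm:mainpaulilearnability}, so every function whose coefficient vector lies in the cycle space is learnable. The substance of the theorem is the reverse inclusion: no function with a nonzero component in the cut space can be learned. Since the edge space is the orthogonal direct sum of the cycle space and the cut space, it suffices to exhibit, for each nonzero cut vector, a pair of indistinguishable noise models that assign it different values.

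To build these models I would generalize the single-qubit depolarizing gauge used in the proof of Theorem~\ref{thm:mainpaulifidelity} to an arbitrary vertex potential. Fix a function $\phi:\{0,1\}^n\to\mathbb{R}$ on the vertices (Pauli patterns) with $\phi(0^n)=0$, and let $\mc M_\phi$ be the Pauli-diagonal gauge acting as $\mc M_\phi(P_a)=e^{\phi(\pt(a))}P_a$. This map is invertible and Hermiticity-preserving, hence a legitimate gauge; because it is diagonal in the Pauli basis and $\mc G$ is Clifford, conjugating a noisy gate $G=\mc U_{\mc G}\circ\Lambda$ by $\mc M_\phi$ again yields a gate of the form $\mc U_{\mc G}\circ\Lambda'$ with $\Lambda'$ a Pauli channel. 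A direct computation gives $\lambda_a'=e^{\phi(\pt(\mc G(a)))-\phi(\pt(a))}\,\lambda_a$, i.e. in log coordinates
\begin{equation}
  l_a^{\mc G}\;\longmapsto\;l_a^{\mc G}+\phi(h(e))-\phi(t(e)),
\end{equation}
where $e$ is the edge carrying $\lambda_a^{\mc G}$, running from $t(e)=\pt(a)$ to $h(e)=\pt(\mc G(a))$. Thus the gauge shifts the edge-variable vector by exactly the coboundary $\partial\phi$ of $\phi$, and the \emph{same} $\phi$ acts coherently across all gates because every gate shares the common vertex set of patterns. The constraint $\phi(0^n)=0$ forced by $\mc M_\phi(I)=I$ is harmless, since the all-identity vertex only carries self-loops, which contribute trivially.

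The key linear-algebra observation is then that, as $\phi$ ranges over all vertex potentials, the coboundaries $\partial\phi$ span precisely the cut space, whose dimension equals $2^n-c$: the coboundary map $\mathbb{R}^{2^n}\to$ edge space has kernel the functions constant on each of the $c$ connected components. Consequently a learnable function $f=\sum_e v_e e$, being invariant under every gauge $\mc M_\phi$ (gauge-equivalent models produce identical data), must satisfy $\langle v,\partial\phi\rangle=0$ for all $\phi$; that is, $v$ is orthogonal to the cut space and hence lies in the cycle space. Combined with the first inclusion this pins down learnable information as exactly the cycle space, unlearnable information as its orthogonal complement the cut space, and the number of unlearnable degrees of freedom as $\dim(\text{cut space})=2^n-c$.

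I expect the main obstacle to be not the algebra but the physicality bookkeeping in the gauge argument. One must verify that $\mc M_\phi$ produces a genuinely physical model (completely positive gates, valid states and effects) so that the two models are truly indistinguishable. I would handle this by taking $\phi=\varepsilon\psi$ infinitesimally around a noise model in the interior of the physical region (all Pauli error rates strictly positive, SPAM states full rank), where the perturbed model stays physical and $\tfrac{d}{d\varepsilon}f=\langle v,\partial\psi\rangle$ already detects any nonzero cut component. A secondary point is to confirm that generalizing from the single-qubit depolarizing channel, which realizes only potentials linear in the pattern bits, to an arbitrary pattern potential is both necessary and legitimate: sweeping out the full $(2^n-c)$-dimensional cut space genuinely requires these correlated Pauli-diagonal gauges rather than tensor products of single-qubit ones.
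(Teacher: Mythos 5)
Your treatment of the hard direction is essentially the paper's own argument in different clothing. The paper proves $F_L\subseteq Z(G)$ by attaching to each cut $V=V_1\cup V_2$ the Pauli-diagonal gauge $\mc M(P)=\eta^{\mathbf{1}_{V_1}[\pt(P)]}P$, which is exactly your $\mc M_\phi$ with $\phi=(\log\eta)\,\mathbf{1}_{V_1}$, and then sweeps out the cut space by composing such gauges over a cut basis; your single-step construction with an arbitrary vertex potential, together with the standard fact that coboundaries span the cut space, is an equivalent (and slightly tidier) packaging. Your update rule $l_a^{\mc G}\mapsto l_a^{\mc G}+\phi(\pt(\mc G(a)))-\phi(\pt(a))$ matches the paper's transformed fidelities, and your infinitesimal handling of physicality ($\phi=\varepsilon\psi$ around an interior model) plays the same role as the paper's choice of $\eta$ sufficiently close to $1$ under Assumption 4. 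One quibble: your reason that the normalization $\phi(0^n)=0$ is harmless is off — it is harmless because coboundaries are invariant under constant shifts on each connected component, not because the vertex $0^n$ carries only self-loops — but the conclusion stands.

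There is, however, a genuine gap in how you dispose of the forward inclusion. You declare $Z(G)\subseteq F_L$ "already in hand" because cycle benchmarking learns $\sum_{e\in C}\log\lambda_e$ for any cycle $C$. But cycle benchmarking runs gates forward only, so it learns these sums along \emph{circuits} (directed closed walks); a general element of the cycle space may traverse edges against their orientation, with coefficient $-1$, and no experiment directly realizes that. In a general directed graph, circuits do not span the cycle space: two parallel edges $u\to v$ give a one-dimensional cycle space containing no circuit at all. The paper closes exactly this hole by proving that the pattern transfer graph is a union of strongly connected subgraphs — every Clifford gate has finite order, $\mc G^{d}=\mc I$, so every edge lies on a circuit — and then invoking the theorem that such graphs admit a circuit basis for their cycle space. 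Without this step your first inclusion is unproven. A second, smaller omission: for your gauge pair to witness unlearnability under the paper's definition, the transformed model must still satisfy Assumption 1 (noiseless single-qubit gates). This is precisely where pattern-constancy of $\mc M_\phi$ earns its keep — products of single-qubit unitaries preserve Pauli patterns, hence commute with $\mc M_\phi$ — and the paper verifies it explicitly; your proposal uses the right gauge but never performs the check, even though it is the reason the gauge must be a function of the pattern rather than a generic Pauli-diagonal map.
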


The learnability of cycle space follows from cycle benchmarking as discussed above. To prove the unlearnability of cut space, we use a similar argument as in Theorem~\ref{thm:mainpaulifidelity} and show that a gauge transformation can be constructed for each cut in the pattern transfer graph. By linearity, this implies that any vector in the cut space corresponds to a gauge transformation. By definition, a learnable function must be orthogonal to all such vectors and thus orthogonal to the entire cut space. More details of the proof are given in Supplementary Section II C.

It is a well-known fact in graph theory that the cycle space of a directed graph $G=(V,E)$ has dimension $|E|-|V|+c$ while the cut space has dimension $|V|-c$, where $c\geq 1$ is the number of connected components in $G$~\cite{bollobas1998modern} 
(a (weakly) connected component is a maximal subgraph in which every vertex is reachable from every other vertex via an undirected path).
Theorem~\ref{thm:mainpaulilearnability} implies that among the $m\cdot 4^n$ degrees of freedom of the Pauli noise associated with a Clifford gate set, there are $2^n -c$ unlearnable degrees of freedom. This shows that while the number of unlearnable degrees of freedom can be exponentially large, they only occupy an exponentially small fraction of the entire space. In addition, a cycle and cut basis can be efficiently determined for a given graph, though in our case this takes exponential time because the pattern transfer graph itself is exponentially large. However, computing the cycle/cut basis is not the bottleneck as the information to be learned also grows exponentially with the number of qubits.
For small system sizes such as 2-qubit Clifford gates, we can write down a cycle basis as shown in Table~\ref{tab:main:CNOT_full} (a) for the CNOT and SWAP gates, which represents all learnable information about these gates. The CNOT gate has 2 unlearnable degrees of freedom while the SWAP gate has 1 unlearnable degree of freedom. As the pattern transfer graph has at least 2 connected components, we conclude that the Pauli noise channel of a 2-qubit Clifford gate has at most 2 unlearnable degrees of freedom. Note that when treating $\{\mathrm{CNOT},\mathrm{SWAP}\}$ together as a gate set, there are only 2 unlearnable degrees of freedom according to Theorem~\ref{thm:mainpaulilearnability} instead of $2+1=3$, because there is one additional learnable degree of freedom (such as $l_{IZ}^{\mathrm{CNOT}}+l_{XX}^{\mathrm{CNOT}}+l_{XI}^{\mathrm{SWAP}}$) that is a joint function of the two gates.

\begin{table}[t]
    \centering
    \begin{tabular}{|c|c|c|}
        \hline
        Gate & CNOT & SWAP \\
        \hline
        \makecell{(a) Cycle basis}
          &\makecell{ $l_{II},l_{ZI},l_{IX},l_{ZX},
        l_{XZ},l_{YY},l_{XY},l_{YZ},$ \\ $l_{IZ}+l_{ZZ},l_{IY}+l_{ZY},l_{IZ}+l_{ZY},$\\$l_{XI}+l_{XX},l_{YI}+l_{YX},l_{XI}+l_{YX}$ }  & \makecell{$l_{II},l_{XX},l_{XY},l_{XZ},l_{YX},l_{YY},l_{YZ},l_{ZX},l_{ZY},$\\$l_{ZZ},l_{IX}+l_{XI},l_{IY}+l_{YI},l_{IZ}+l_{ZI},$\\$l_{XI}+l_{IY},l_{XI}+l_{IZ}$}
        \\ \hline
        \makecell{(b) Learnable\\ Pauli fidelities}
         &\makecell{ $\lambda_{II},\lambda_{ZI},\lambda_{IX},\lambda_{ZX},
        \lambda_{XZ},\lambda_{YY},\lambda_{XY},\lambda_{YZ},$ 
        \\$\lambda_{IZ}\cdot\lambda_{ZZ},\lambda_{IY}\cdot\lambda_{ZY},\lambda_{IZ}\cdot\lambda_{ZY},$\\$\lambda_{XI}\cdot\lambda_{XX},\lambda_{YI}\cdot\lambda_{YX},\lambda_{XI}\cdot\lambda_{YX}$}  &\makecell{$\lambda_{II},\lambda_{XX},\lambda_{XY},\lambda_{XZ},\lambda_{YX},\lambda_{YY},\lambda_{YZ},\lambda_{ZX},\lambda_{ZY},$\\$\lambda_{ZZ},\lambda_{IX}\cdot\lambda_{XI},\lambda_{IY}\cdot\lambda_{YI},\lambda_{IZ}\cdot\lambda_{ZI},$\\$\lambda_{XI}\cdot\lambda_{IY},\lambda_{XI}\cdot\lambda_{IZ}$} 
        \\ \hline
        \makecell{(c) Learnable\\ Pauli errors}
         &\makecell{ $p_{II},p_{ZI},p_{IX},p_{ZX},
        p_{XZ},p_{YY},p_{XY},p_{YZ},$ 
        \\$p_{IZ}+p_{ZZ},p_{IY}+p_{ZY},p_{IZ}+p_{ZY},$\\$p_{XI}+p_{XX},p_{YI}+p_{YX},p_{XI}+p_{YX}$}  &\makecell{$p_{II},p_{XX},p_{XY},p_{XZ},p_{YX},p_{YY},p_{YZ},p_{ZX},p_{ZY},$\\$p_{ZZ},p_{IX}+p_{XI},p_{IY}+p_{YI},p_{IZ}+p_{ZI},$\\$p_{XI}+p_{IY},p_{XI}+p_{IZ}$} \\
        \hline
        \makecell{(d) Unlearnable\\  degrees of freedom}
         & $\lambda_{XI},\lambda_{IZ}$ & $\lambda_{XI}$ \\
        \hline
    \end{tabular}
    \caption{A complete basis for the learnable linear functions of log Pauli fidelities and Pauli error rates for a single CNOT/SWAP gate.
    }
    \label{tab:main:CNOT_full}
\end{table}

Finally, the learnability of Pauli errors can be determined by the learnability of Pauli fidelities according to the Walsh-Hadamard transform $p_a = \frac{1}{4^n}\sum_{b\in{{\sf P}^n}}\lambda_b(-1)^\expval{a,b}$. An issue here is that Pauli errors are linear functions of $\{\lambda_b\}$ instead of $\{\log \lambda_b\}$. Here we make a standard assumption in the literature~\cite{erhard2019characterizing,flammia2020efficient} that the total Pauli error is sufficiently small. In this case all individual Pauli errors are close to 0 while all individual Pauli fidelities are close to 1. Therefore the Pauli errors can be estimated via
\begin{equation}\label{eq:mainapproximation}
    p_a = \frac{1}{4^n}\sum_{b\in{{\sf P}^n}}\lambda_b(-1)^\expval{a,b}\approx\frac{1}{4^n}\sum_{b\in{{\sf P}^n}}(-1)^\expval{a,b}\left(1+\log\lambda_b\right),
\end{equation}
which means that their learnability can be determined by Theorem~\ref{thm:mainpaulilearnability}. In fact it has been suggested~\cite{nielsen2022first} that any function of Pauli fidelities can be estimated in this way (as a linear function of log Pauli fidelities) up to a first-order approximation, which means that the learnability of any function of Pauli fidelities can be determined by Theorem~\ref{thm:mainpaulilearnability}. In Table~\ref{tab:main:CNOT_full} (c) we show the learnable Pauli errors for CNOT and SWAP, where ``learnable'' is in an approximate sense up to Eq.~\eqref{eq:mainapproximation}. 
Interestingly, for these two gates, the learnable functions of Pauli errors have the same form as the cycle basis, \textit{i.e.} the cycle space is invariant under Walsh-Hadamard transform. 
We calculate the learnable Pauli errors for up to 4-qubit random Clifford gates and this seems to be true in general.
We leave a rigorous investigation into this phenomenon for future work.

\subsection{Experiments on IBM Quantum hardware}
We demonstrate our theory on IBM quantum hardware~\cite{ibmquantum} using a minimal example -- characterizing the noise channel of a CNOT gate. In our experiments both the gate noise and SPAM noise are twirled into Pauli noise using randomized compiling. In the following we show how to extract all learnable information of Pauli noise SPAM-robustly, and also attempt to estimate the unlearnable degrees of freedom by making additional assumptions.

First, we conduct two types of cycle benchmarking (CB) experiments, the standard CB and CB with interleaving single-qubit gates (called \emph{interleaved CB}), as shown in Fig.~\ref{fig:main_cb}.
The results are shown in Fig.~\ref{fig:main_exp_cbraw}. Here a set of two Pauli labels in the $x$-axis (\textit{e.g.}, $\{IZ,ZZ\}$) corresponds to the geometric mean of the Pauli fidelity (\textit{e.g.}, $\sqrt{\lambda_{IZ}\lambda_{ZZ}}$).
Comparing to Table~\ref{tab:main:CNOT_full}, we see that all learnable information of Pauli fidelities (including learnable individual and 2-product) are successfully extracted.
Also note from Fig.~\ref{fig:main_exp_cbraw} that the two types of CB experiments give consistent estimates, in terms of both the process fidelity and individual Pauli fidelities (\textit{e.g.}, $\sqrt{\lambda_{XZ}\lambda_{YY}}$ estimated from standard CB is consistent with $\lambda_{XZ}$ and $\lambda_{YY}$ from interleaved CB).

\begin{figure}[t]
    \centering
    \includegraphics[width=\linewidth]{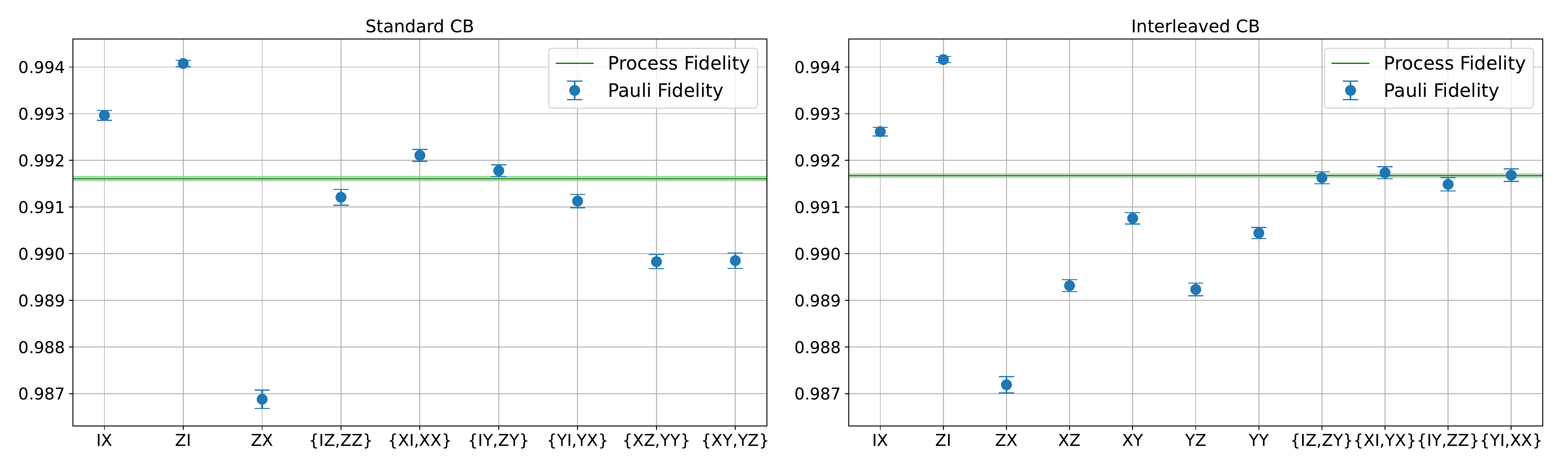}
    \caption{Estimates of Pauli fidelities of IBM's CNOT gate via standard CB (left) and CB with interleaved gates (right), using circuits shown in Fig.~\ref{fig:main_cb}. Data are collected from \texttt{ibmq\_montreal} on 2022-03-23. Each Pauli fidelity is fitted using seven different circuit depths $L=[2,2^2,...,2^7]$. For each depth $C=60$ random circuits and $1000$ shots of measurements are used. Throughout this paper, the error bar represents the standard error.
    }
    \label{fig:main_exp_cbraw}
\end{figure}

We have shown that all 13 learnable degrees of freedom (excluding the trivial $\lambda_{II}=1$) are extracted in Fig.~\ref{fig:main_exp_cbraw} by comparing with Table~\ref{tab:main:CNOT_full}, and there remain 2 unlearnable degrees of freedom. We can bound the feasible region of the 2 unlearnable degrees of freedom using physical constraints, \textit{i.e.}, the reconstructed Pauli noise channel must be completely positive. 
This is equivalent to requiring $p_a\ge 0$ for all Pauli error rates $p_a$. 
We choose $\lambda_{XX}$ and $\lambda_{ZZ}$ as a representation of the unlearnable degrees of freedom, and plot the calculated feasible region in Fig.~\ref{fig:main_exp_cbfeasible} (a), which happens to be a rectangular area. We also calculate the feasible region for each unlearnable Pauli fidelity and Pauli error rate, which are presented in Fig.~\ref{fig:main_exp_cbfeasible} (b), (c). In particular, we choose two extreme points (blue and green dots in Fig.~\ref{fig:main_exp_cbfeasible} (a)) in the feasible region and plot the corresponding noise model in Fig.~\ref{fig:main_exp_cbfeasible} (b), (c). Note that the (approximately) learnable Pauli error rates (on the left of the red vertical dashed line) are nearly invariant under change of gauge degrees of freedom, but they can be estimated to be negative due to statistical fluctuation. Thus, when we calculate the physical constraints, we only require those unlearnable Pauli error rates (on the right of the red vertical dashed line) to be non-negative.

\begin{figure}[t]
    \centering
    \subfloat[feasible region]{
    \centering
    \includegraphics[width=0.35\linewidth]{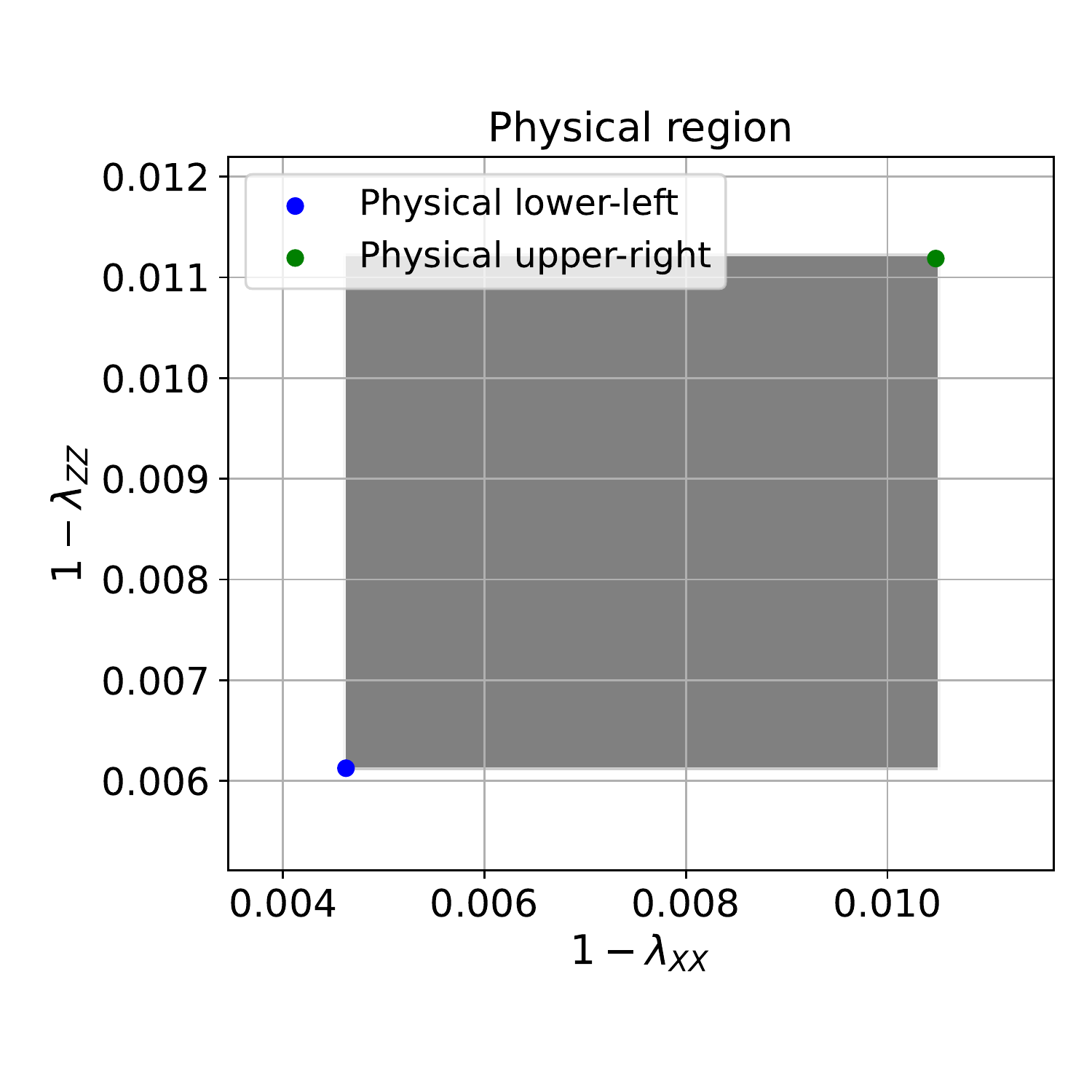}
    }\\
    \subfloat[Pauli fidelities]{
    \centering
    \includegraphics[width=0.5\linewidth]{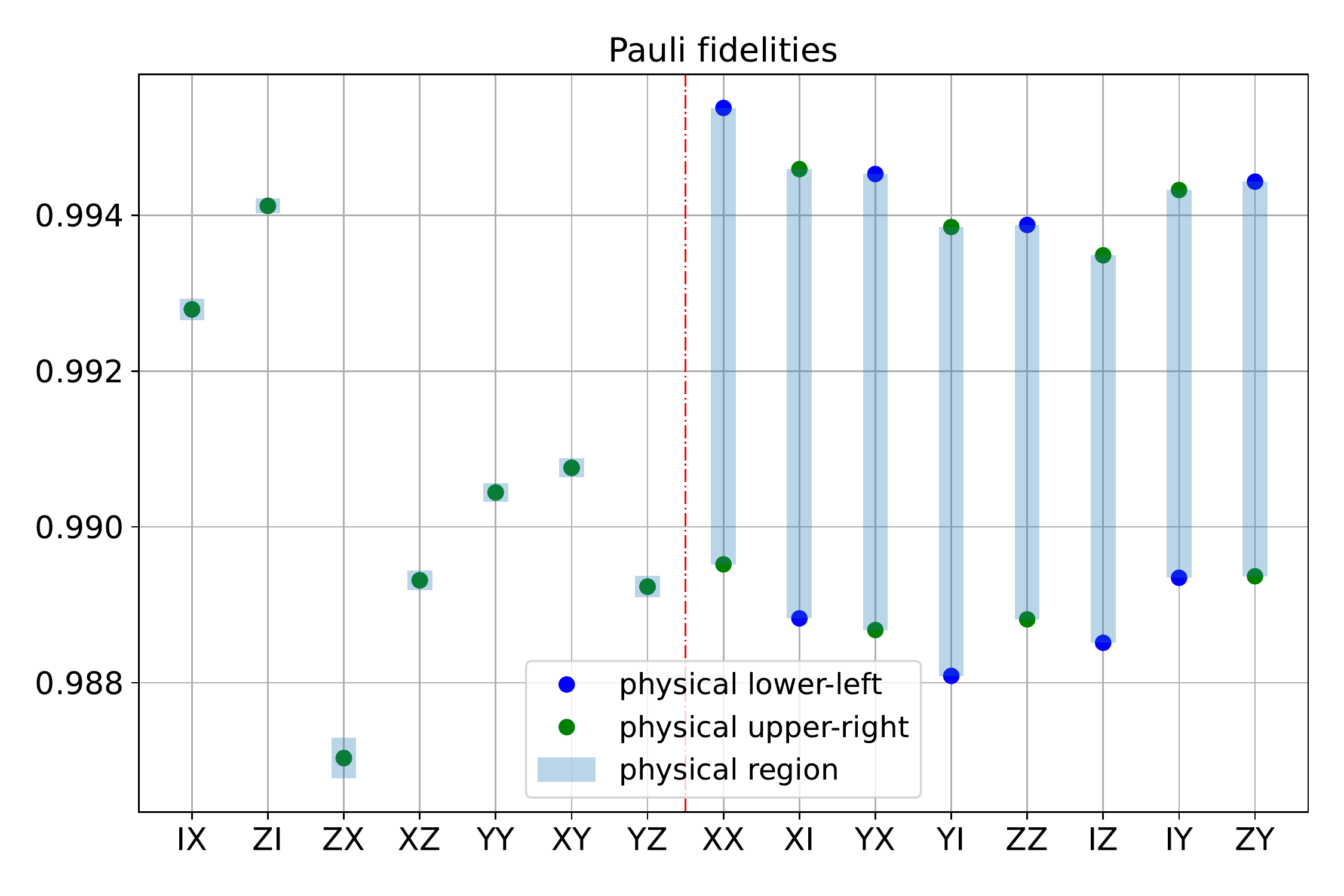}
    }
    \subfloat[Pauli errors]{
    \centering
    \includegraphics[width=0.5\linewidth]{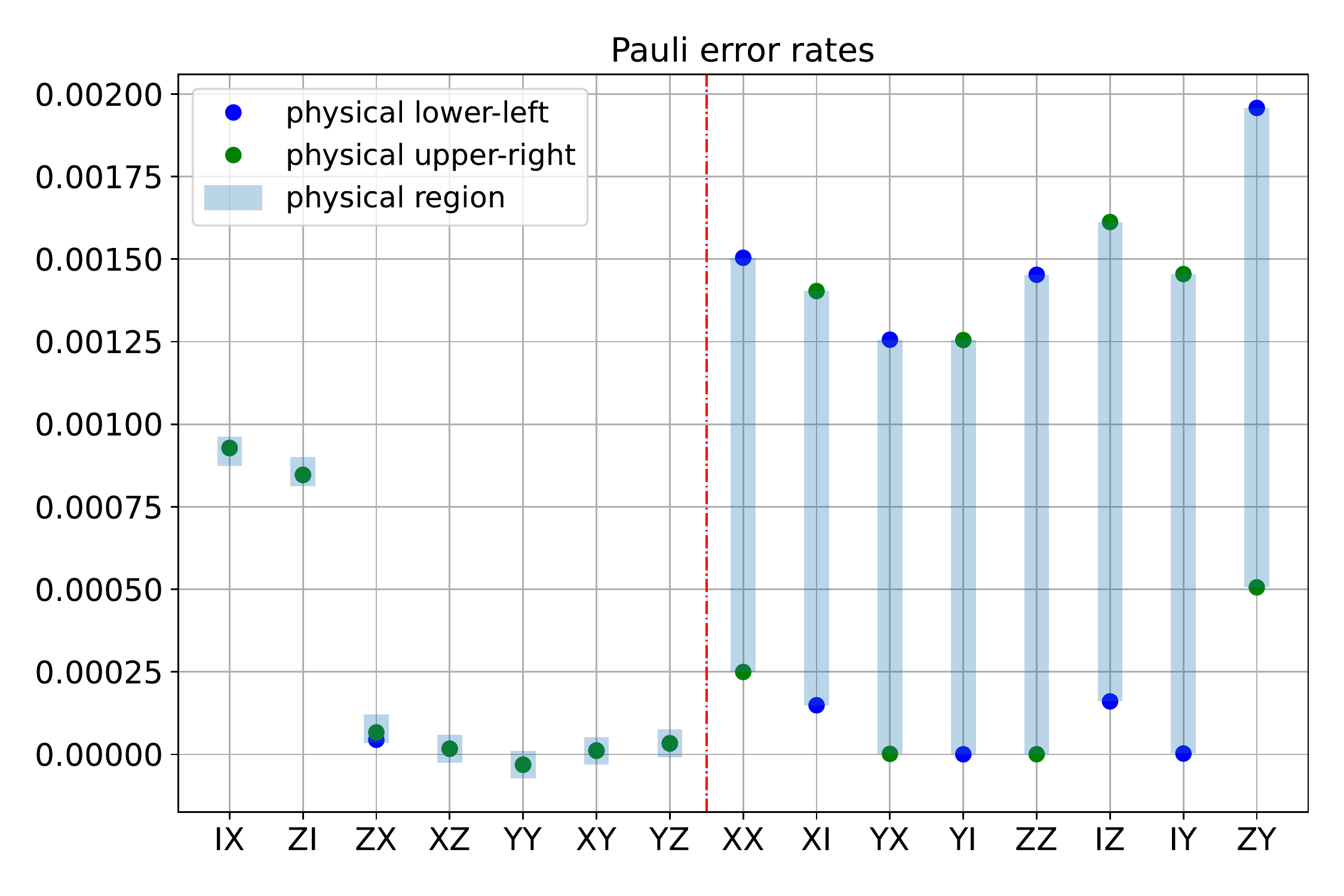}
    }
    \caption{Feasible region of the learned Pauli noise model, using data from Fig.~\ref{fig:main_exp_cbraw}. (a) Feasible region of the unlearnable degrees of freedom in terms of $\lambda_{XX}$ and $\lambda_{ZZ}$. (b) Feasible region of individual Pauli fidelities. (c) Feasible region of individual Pauli errors.}
    \label{fig:main_exp_cbfeasible}
\end{figure}

Next, we explore an approach to estimate the unlearnable information with additional assumptions.
Suppose that one can prepare $\ket{0}^{\otimes n}$ perfectly.
Since we assume noiseless single-qubit gates, this means we can prepare a set of perfect tomographically complete states $\{\ket{0/1},\ket{\pm},\ket{\pm i}\}$.
In this case, all the unlearnable degrees of freedom become learnable, as one can first perform a measurement device tomography, and then directly estimate the process matrix of a noisy gate with measurement error mitigated~\cite{Maciejewski2020mitigationofreadout}. 
Following this general idea, we propose a variant of cycle benchmarking for Pauli noise characterization, which we call \emph{intercept CB} as it uses the information of intercept in a standard cycle benchmarking protocol. Given an $n$-qubit Clifford gate $\mc G$, let $m_0$ be the smallest positive integer such that $\mc G^{m_0} = \mc I$.
For any Pauli fidelity $\lambda_a$ (regardless of whether learnable or not according to Theorem~\ref{thm:mainpaulifidelity}), consider the following two CB experiments using the standard circuit as in Fig.~\ref{fig:main_cb} (a). First, prepare an eigenstate of $P_a$, run CB with depth $l m_0+1$ for some non-negative integer $l$, and estimate the expectation value of $P_b\coleq\mc G( P_a )$. The result equals 
\begin{equation}
    \E\expval{P_b}_{l m_0+1}=\lambda^S_{P_a}\lambda^M_{P_b}\lambda_{a}\left(\prod_{k=1}^{m_0}\lambda_{\mc G^k(P_a)}\right)^l,
\end{equation}
where $\lambda_{P_{a/b}}^{S/M}$ is the Pauli fidelity of the state preparation and measurement noise channel, respectively (earlier we have absorbed these two coefficients into a single coefficient $A$ for simplicity). Second, prepare an eigenstate of $P_b$, run CB with depth $l m_0$, and estimate the expectation value of $P_b$. The result equals
\begin{equation}
    \E\expval{P_b}_{l m_0}=\lambda^S_{P_b}\lambda^M_{P_b}\left(\prod_{k=1}^{m_0}\lambda_{\mc G^k(P_a)}\right)^l.
\end{equation}
By fitting both $\E\expval{P_b}_{l m_0+1}$ and $\E\expval{P_b}_{l m_0}$ as exponential decays in $l$, extracting the intercepts (function values at $l=0$), and taking the ratio,
we obtain an estimator $\widehat{\lambda}^{\text{ICB}}_a$
that is asymptotically unbiased to $\lambda_{a}\cdot{\lambda^{S}_{P_a}}/{\lambda^{S}_{P_b}}$.
This estimator is robust against measurement noise. Note that $\lambda^S_{P_a}=\lambda^S_{P_b}=1$ if we assume perfect initial state preparation, and in this case the above shows that $\lambda_a$ is learnable, and thus the entire Pauli noise channel is learnable.
We note that, instead of fitting an exponential decay in $l$, one could in principle just take $l=0$ and estimate the ratio of $\E\expval{P_b}_{0}$ and $\E\expval{P_b}_{1}$, which also yields a consistent estimate for $\lambda_a\cdot\lambda^S_{P_a}/\lambda^S_{P_b}$. 
If one has already obtained all the learnable information from previous experiments, this could be a more efficient approach. 
However, if one has not done those experiments, the intercept CB with multiple depths can estimate the intercept (unlearnable information) and slope (learnable information) simultaneously, which is more sample efficient. 

We numerically simulate intercept CB for characterizing the CNOT gate under different state preparation (SP) and measurement (M) noise. As shown in Fig.~\ref{fig:main_sim_intercept}, this method yields relatively precise estimate when there is only measurement noise even if the noise is orders of magnitude stronger than the gate noise, but will have large deviation from the true noise model even under small state preparation noise. We refer the reader to Supplementary Section III for more details about the numerical simulation.

Finally, we experimentally implement intercept CB to estimate $\lambda_{XX}$ and $\lambda_{ZZ}$, which are the two unlearnable degrees of freedom of CNOT, allowing us to determine all the Pauli fidelities and Pauli error rates. 
One challenge in interpreting the results is that we do not know in general whether the low SP noise assumption holds, therefore it is unclear if the learned results should be trusted. 
However, for the estimate to be correct, it should at least lie in the physically feasible region we obtained earlier in Fig.~\ref{fig:main_exp_cbfeasible}.
In Fig.~\ref{fig:main_exp_intercept}, we present our experimental results of intercept CB. It turns out that certain Pauli fidelities are far away from the physical region by several standard deviations. 
This gives strong evidence that the low SP noise assumption was \emph{not} true on the platform we used.

The data collected here can further be used to give a lower bound for the SP noise. Suppose we obtain the physical region of $\lambda_a$ to be $[\widehat{\lambda}_{a,\mr{min}},\widehat{\lambda}_{a,\mr{max}}]$. Combining with the expression of intercept CB, we have
\begin{equation}
    {\widehat{\lambda}^\mr{ICB}_a}/{\widehat{\lambda}_{a,\mr{max}}}\le{\lambda^S_{P_a}}/{\lambda^S_{P_b}}\le{\widehat{\lambda}^\mr{ICB}_a}/{\widehat{\lambda}_{a,\mr{min}}}.
\end{equation}
Applying this to the data of $IZ$ and $ZZ$ in Fig.~\ref{fig:main_exp_intercept} (a), we have $\lambda^S_{IZ}/\lambda^S_{ZZ} \le 0.9879(23)$.
If we make a physical assumption that the state preparation noise is a random bit-flip during the qubit initialization, one can conclude the bit-flip rate on the first qubit is lower bounded by $0.61(12)\%$.
One can in principle bound the bit-flip rate on the second qubit by looking at $\lambda^S_{XX}/\lambda^S_{XI}$. Unfortunately, our estimate of $\lambda^S_{XX}$ from intercept CB falls in the physical region within one standard deviation, so there is no nontrivial lower bound. One could expect obtaining a useful lower bound by looking at a CNOT gate with reversed control and target.
The lower bound of SP noise obtained here is completely independent of the measurement noise and does not suffer from the issue of gauge freedom~\cite{nielsen2021gate}, as long as all of our noise assumptions are valid, \textit{i.e.}, there is no significant contribution from time non-stationary, non-Markovian, or single-qubit gate-dependent noise.

\begin{figure}
    \centering
    \includegraphics[width=0.6\linewidth]{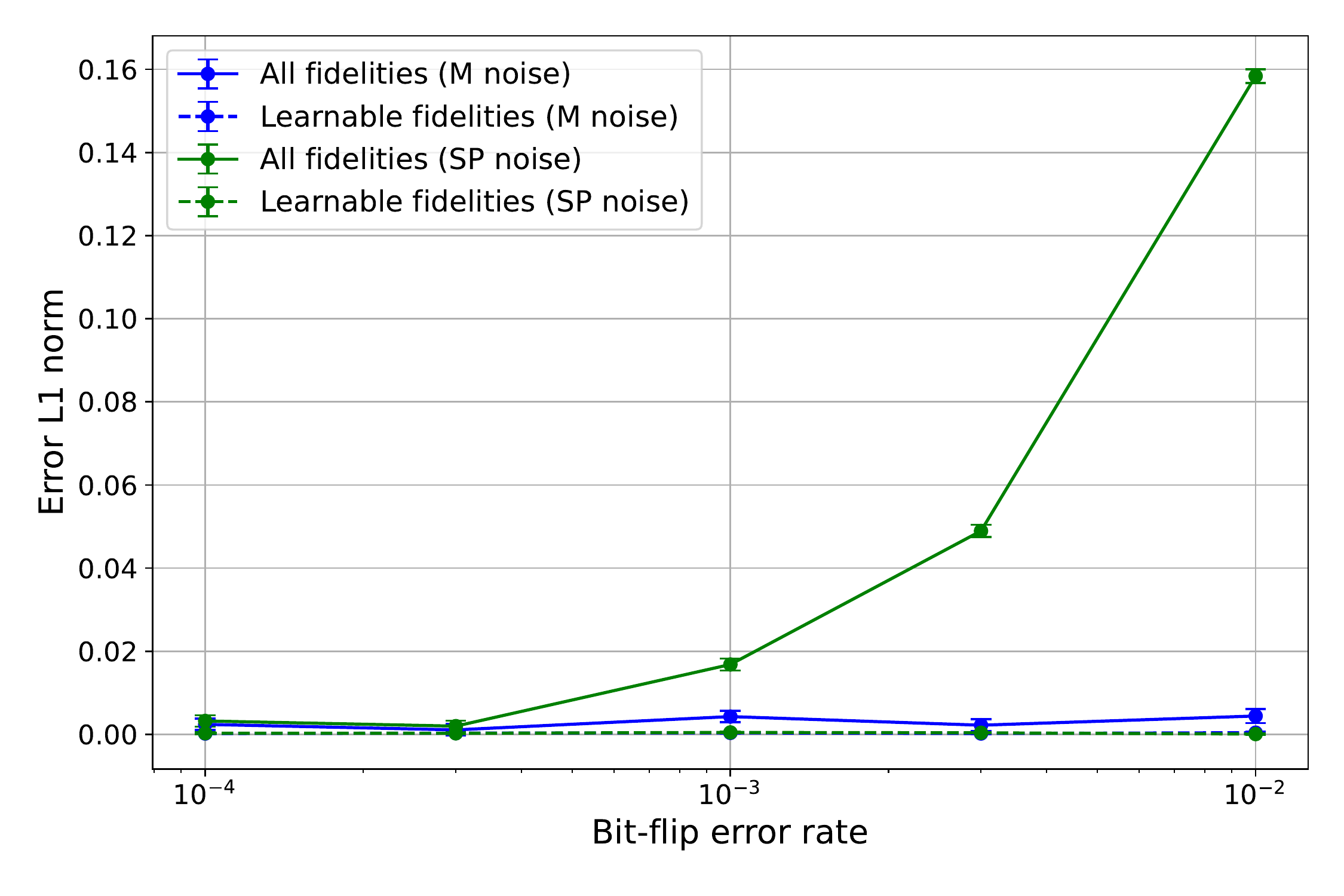}
    \caption{Simulation of intercept CB on CNOT under different SPAM noise rate. The simulated noise channel is a $2$-qubit amplitude damping channel with effective noise rate $5\%$, and SPAM noise are modeled as bit-flip errors. For the blue (green) lines, we introduce random bit-flip errors to the measurement (state preparation). The solid lines show the $l_1$-distance of the estimated Pauli fidelities from the true Pauli fidelities. The solid lines show the $l_1$-distance of the (individually) learnable Pauli fidelities from the ground truth.
    }
    \label{fig:main_sim_intercept}
\end{figure}

\begin{figure}[t]
    \centering
    \subfloat[]{
    \centering
    \includegraphics[width=0.5\linewidth]{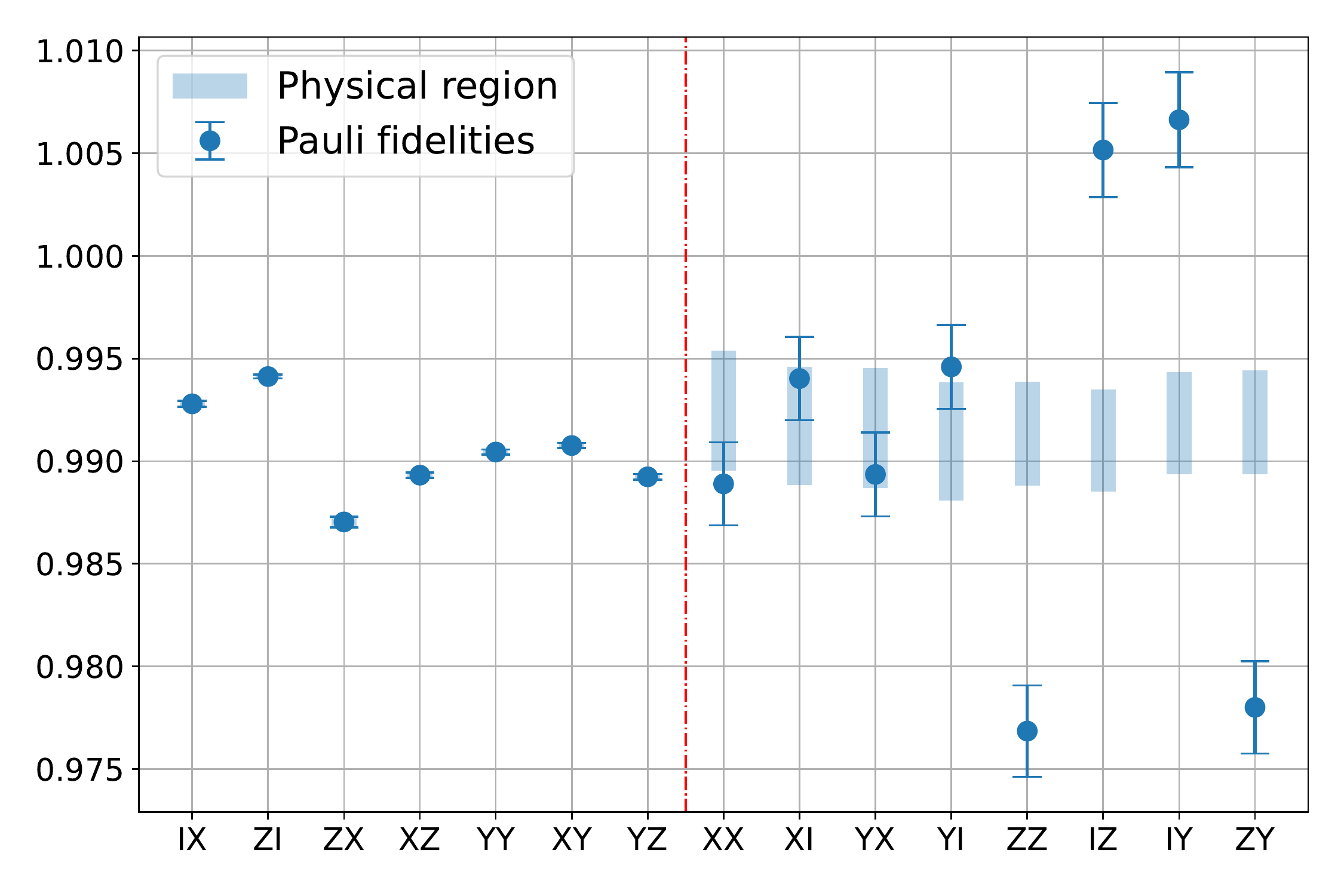}
    }
    \subfloat[]{
    \centering
    \includegraphics[width=0.5\linewidth]{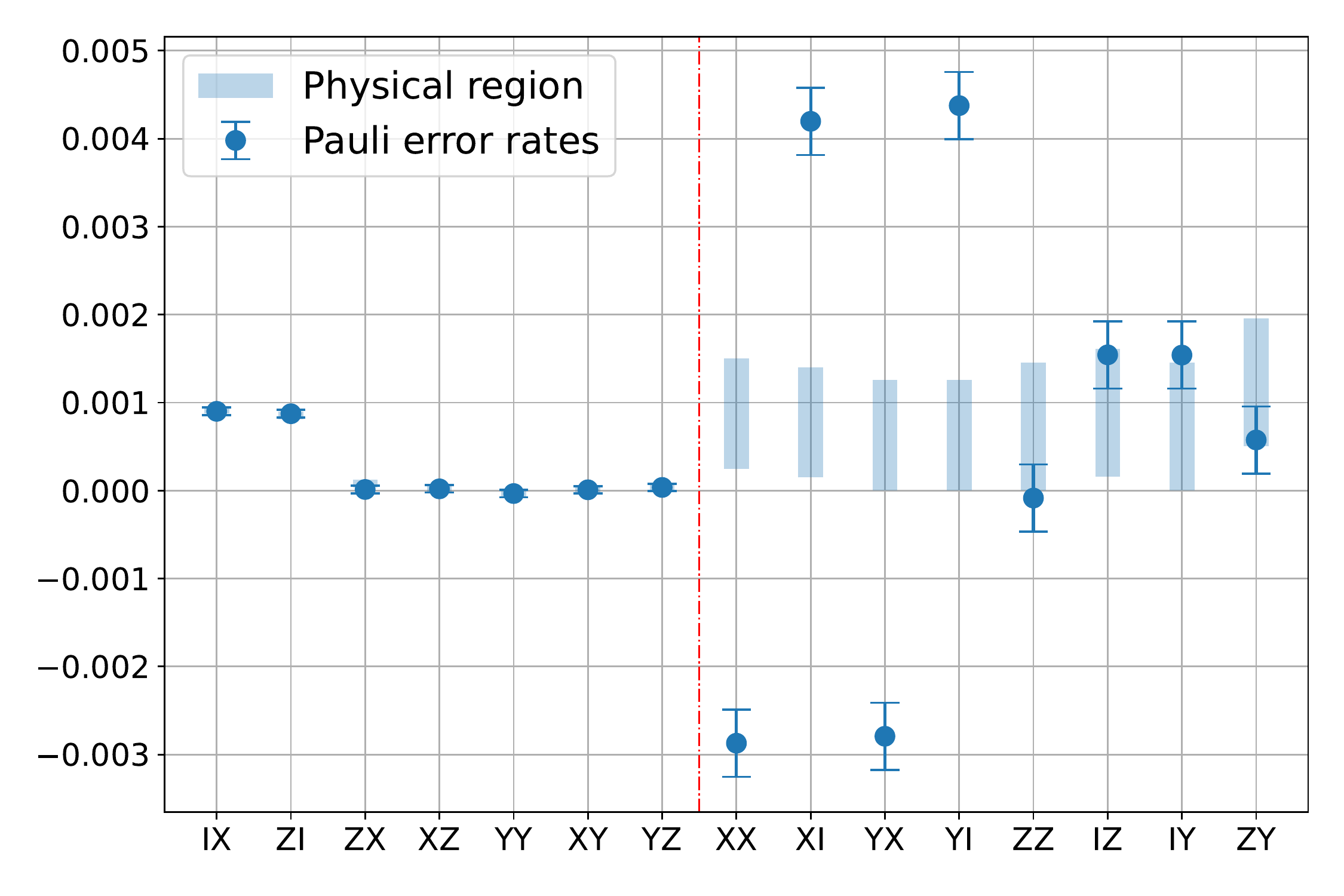}
    }
    \caption{The learned Pauli noise model using intercept CB. The feasible region (blue bars) are taken from Fig.~\ref{fig:main_exp_cbfeasible}. Estimates of Pauli fidelities (a) and Pauli error rates (b). Each data point is fitted using seven different circuit depths $L=[2,2^2,...,2^7]$. For each depth $C=150$ random circuits and $2000$ shots of measurements are used. Data are collected from \texttt{ibmq\_montreal} on 2022-03-23.}
    \label{fig:main_exp_intercept}
\end{figure}

\section{Discussion}
We have shown how to characterize the learnability of Pauli noise of Clifford gates and discussed a method to extract unlearnable information by assuming perfect initial state preparation. It is also interesting to consider other physically motivated assumptions on the noise model to avoid unlearnability. For example, we can write down a parameterization of the noise model based on the underlying physical mechanism which may have fewer than $4^n$ parameters. The main issue here is that these assumptions are highly platform-dependent and should be decided case-by-case. Moreover, it is unclear to what extent should the learned results be trusted when additional assumptions are made, since in general we cannot test whether the assumptions hold due to unlearnability.

Another direction to overcome the unlearnability is to change the model of quantum experiments. Here we have been working with the standard model as in gate set tomography, where a quantum measurement decoheres the system and only outputs classical information. However, some platforms might support quantum non-demolition (QND) measurements, and in this case measurements can be applied repeatedly, which could potentially allow more information to be learned~\cite{laflamme2022algorithmic}.

Recently, Ref.~\cite{huang2022foundations} considered similar issues of noise learnability. They studied a different Pauli noise model with perfect initial state $\ket{0}$, perfect computational basis measurement, and noisy single qubit gates, and showed the existence of unlearnable information. In contrast, here we focus on the learnability of Pauli noise of multi-qubit Clifford gates assuming perfect single-qubit gates (with noisy SPAM), and in practice we make the standard assumption that noise on single-qubit gates is gate-independent (\textit{e.g.}~\cite[Sec. II A]{Ferracin2022Efficiently}), in which case our noise learning results are interpreted as characterizing a dressed cycle.

This work leaves open the question of noise learnability for non-Clifford gates. An issue here is that randomized compiling is not known to work with non-Clifford gates in general, so it is unclear if the general CPTP noise learnability problem can be reduced to Pauli noise. Recent work~\cite{liu2021benchmarking} shows that random quantum circuits can effectively twirl the CPTP noise channel into Pauli noise and can be used to learn the total Pauli error. The question of whether more information can be learned still remains open.

Another issue to address is the scalability in noise learning. It is impossible to estimate all learnable degrees of freedom efficiently as there are exponentially many of them (an exponential lower bound on the sample complexity is shown in~\cite{chen2022quantum}). One way to avoid the exponential scaling issue is to assume the noise model has certain special structure (such as sparsity or low-weight) such that the noise model only has polynomially many parameters~\cite{harper2020efficient,harper2021fast,flammia2020efficient,berg2022probabilistic}. It is an interesting open direction to study the characterization of learnability under these assumptions, and we give some related discussions in Supplementary Section II D.

\section*{Data availability}
The data generated in this study is available at \url{https://github.com/csenrui/Pauli_Learnability}  

\section*{Code availability}
The code that supports the findings of this study is available at \url{https://github.com/csenrui/Pauli_Learnability} 

\begin{acknowledgments}
We thank Ewout van den Berg, Arnaud Carignan-Dugas, Robert Huang, Kristan Temme and Pei Zeng for helpful discussions.
We thank the anonymous reviewer \#2 for suggesting an alternative approach to intercept cycle benchmarking. 
S.C. and L.J. acknowledge support from the ARO (W911NF-18-1-0020, W911NF-18-1-0212), ARO MURI (W911NF-16-1-0349, W911NF-21-1-0325), AFOSR MURI (FA9550-19-1-0399, FA9550-21-1-0209), AFRL (FA8649-21-P-0781), DoE Q-NEXT, NSF (OMA-1936118, EEC-1941583, OMA-2137642), NTT Research, and the Packard Foundation (2020-71479).
Y.L. was supported by DOE NQISRC QSA grant \#FP00010905, Vannevar Bush faculty fellowship N00014-17-1-3025, MURI Grant FA9550-18-1-0161 and NSF award DMR-1747426. A.S. is supported by a Chicago Prize Postdoctoral Fellowship in Theoretical Quantum Science. B.F. acknowledges support from AFOSR (YIP number FA9550-18-1-0148 and FA9550-21-1-0008). This material is based upon work partially supported by the National Science Foundation under Grant CCF-2044923 (CAREER) and by the U.S. Department of Energy, Office of Science, National Quantum Information Science Research Centers. This research used resources of the Oak Ridge Leadership Computing Facility at the Oak Ridge National Laboratory, which is supported by the Office of Science of the U.S. Department of Energy under Contract No. DE-AC05-00OR22725.
\end{acknowledgments}

\section*{Author contributions}
S.C. and Y.L. developed the theory and performed the experiments. B.F. and L.J. supervised the project. All authors contributed important ideas during initial discussions and contributed to writing the manuscript.

\section*{Competing interests}
The authors declare no competing interests.

\bibliography{PauliNoise}

\newpage
\begin{appendix}

\date{\today}
\maketitle
\tableofcontents

\section{Preliminaries}\label{sec:pre}
Define ${\sf P}^n$ to be the $n$-qubit Pauli group modulo its center. 
We can label any Pauli operator $P_a\in{\sf P}^n$ with a $2n$-bit string $a$.
Specifically, we define $P_{\bm 0}$ to be the identity operator $I$.
We will use the notations $P_a$ and $a$ interchangeably when there is no confusion.

The \emph{pattern} of an $n$-qubit Pauli operator $P_a$, denoted as $\pt(P_a)$, is an $n$-bit string that takes $0$ at the $j$th bit if $P_a$ equals to $I$ at the $j$th qubit and takes $1$ otherwise. 
For example, $\pt(XYIZI)=\pt(XXIXI)=11010$.

\medskip

An $n$-qubit \emph{Pauli diagonal map} $\Lambda$ is a linear map of the following form
\begin{equation}
	\Lambda(\cdot) = \sum_{a\in{{\sf P}^n}}p_a P_a(\cdot)P_a,
\end{equation}
where $\bm{p}\coleq \{p_a\}_a$ are called the \emph{Pauli error rates}.
If $\Lambda$ is further a CPTP map, which corresponds to the condition $p_a\ge 0$ and $\sum_a p_a = 1$, then it is called a \emph{Pauli channel}.
An important property of Pauli diagonal maps is that their eigen-operators are exactly the $4^n$ Pauli operators. Thus, an alternative expression for $\Lambda$ is
\begin{equation}
	\Lambda(\cdot) = \frac{1}{2^n}\sum_{b\in{{\sf P}^n}}\lambda_b\Tr(P_b(\cdot))P_b,
\end{equation}
where $\bm\lambda \coleq \{\lambda_b\}_b$ are called the \emph{Pauli fidelities} or \emph{Pauli eigenvalues} ~\cite{flammia2020efficient,flammia2021pauli,chen2020robust}. 
These two sets of parameters, $\bm p$ and $\bm \lambda$, are related by the Walsh-Hadamard transform
\begin{equation}
	\begin{aligned}
		\lambda_b = \sum_{a\in{{\sf P}^n}}p_a(-1)^\expval{a,b},\quad
		p_a = \frac{1}{4^n}\sum_{b\in{{\sf P}^n}}\lambda_b(-1)^\expval{a,b},
	\end{aligned}
\end{equation}
where $\expval{a,b}$ equals to $0$ if $P_a,P_b$ commute and equals to $1$ otherwise.

\medskip

For a general linear map $\mc E$, define its \emph{Pauli twirl} as
\begin{equation}
    \mc E^{P}\coleq \sum_{a\in\Pn} \mc P_a\mc E \mc P_a.
\end{equation}
Here we use the calligraphic $\mc P_a$ to represent the unitary channel of Pauli gate $P_a$, $\mc P_a(\cdot) := P_a(\cdot)P_a$.
The Pauli twirl of any linear map (quantum channel) is a Pauli diagonal map (Pauli channel). When we talk about the Pauli fidelities of a non-Pauli channel, we are effectively referring to the Pauli fidelities of its Pauli twirl.

\section{Theory on the learnability of Pauli noise}

In this section, we give a precise characterization of what information in the Pauli noise channel associated with Clifford gates can be learned in the presence of state-preparation-and-measurement (SPAM) noise. 
Our results show that certain Pauli fidelities of a noisy multi-qubit Clifford gate cannot be learned in a SPAM-robust manner, even with the assumption that single-qubit gates can be perfectly implemented.
The proof is related to the notion of \emph{gauge freedom} in the literature of gate set tomography~\cite{nielsen2021gate}.
We note that the results presented in this section emphasizes on the no-go part, \textit{i.e.}, some information about the Pauli noise is (SPAM-robustly) unlearnable even with many favorable assumptions on the experimental conditions.
As shown in the main text, the learnable information about Pauli noise can be extracted in a much more practical setting using cycle benchmarking~\cite{erhard2019characterizing} and its variant.

\subsection{Assumptions and definitions}\label{sec:noisemodelAssumptions}

We focus on an $n$-qubit quantum system. Below are our assumptions on the noise model.

\begin{itemize}
    \item \textbf{Assumption 1.} All single qubit unitary operation can be perfectly implemented.

    \item \textbf{Assumption 2.} A set of multi-qubit Clifford gates $\mf G \coleq \{\mc G\}$ can be implemented and are subject to gate-dependent Pauli noise, \textit{i.e.}, $\widetilde{\mc G} = \mc G\circ \Lambda_{\mc G}$ where $\Lambda_\mc G$ is some $n$-qubit Pauli channel.

    \item \textbf{Assumption 3.} Any state preparation and measurement can be implemented, up to some fixed Pauli noise channel $\mc E^S$ and $\mc E^M$, respectively.

    \item\textbf{Assumption 4.} The Pauli noise channels appearing in the above assumptions satisfy that all Pauli fidelities and Pauli error rates are strictly positive.
    
\end{itemize}
    Assumption 1 is motivated by the fact that the noise of single-qubit gates are usually much smaller than that of multi-qubit gates on today's hardware. Such approximation is widely adopted in the literature~\cite{erhard2019characterizing,wallman2016noise} with slight modifications.
    In Assumption 2, we view every Clifford gate as an $n$-qubit gate, and allow the noise to be $n$-qubit. This means we are taking all crosstalk into account. 
    A Clifford gate acting on a different (ordered) subset of qubits is viewed as a different gate and can thus have a different noise channel (\textit{e.g.}, CNOT$_{12}$, CNOT$_{21}$, CNOT$_{23}$ have different noise channels.)
    We will discuss the no-crosstalk situation in Sec.~\ref{sec:no_crosstalk}.
    The rationale for assuming Pauli noise in Assumption 2 and 3 is that we can always use randomized compiling~\cite{wallman2016noise,hashim2020randomized} to tailor general noise into Pauli channels. 
    Finally, Assumption 4 is mostly for technical convenience. The requirement of positive Pauli error rates roughly implies the Pauli channels are at the interior of the CPTP polytope, and will be useful later in constructing valid gauge transformations. The requirement of positive Pauli fidelities is also reasonable for any physically interesting noise model.
    
    \medskip
    
    Specifying a Clifford gate set $\mf G$, a \emph{noise model} satisfying our assumptions is determined by the Pauli channels describing gate noise and SPAM noise. 
    We can thus view a noise model as a collection of Pauli fidelities, denoted as $\mc N = \{\mc E^S,\mc E^M,\Lambda\}$, where $\mc E^{S/M} = \{\lambda_a^{S/M}\}_a$ describes the SPAM noise and $\Lambda = \{\lambda_a^{\mc {G}}\}_{a,\mc G}$ describes the gate noise.
    We note that this is an example of \emph{parametrized gate set} in the language of gate set tomography~\cite{nielsen2021gate}.
    
    In order to gain information about an unknown noise model, one needs to conduct \emph{experiments}. In the circuit model, any experiment can be described by some state preparation, a sequence of quantum gates, and some POVM measurements. 
    An experiment conducted with different underlying noise model would yield different measurement outcome distributions.
    Explicitly, consider an (ideal) experiment with initial state $\rho_0$, gate sequence $\mc C$, POVM measurements $\{E_o\}_o$. Denote the noisy implementation of these objects within a certain noise model $\mc N$ with a tilde. Then the experiment effectively maps $\mc N$ to a probability distribution $p_{\mc N}(o) = \Tr(\widetilde E_o(\widetilde{\mc C}(\widetilde \rho_0)))$. 
    We call two noise models $\mc N_1$, $\mc N_2$ \emph{indistinguishable} if for all possible experiments we have $p_{\mc N_1}=p_{\mc N_2}$, and distinguishable otherwise.
    
    \begin{definition}[Learnable and unlearnable function]\label{de:learnability}
      A function $f$ of noise models is learnable if
      \begin{equation}
        f(\mc N_1)\ne f(\mc N_2) \implies \mc N_1, \mc N_2~\text{are distinguishable},
      \end{equation} 
      for any noise models $\mc N_1$, $\mc N_2$. 
      In contrast, $f$ is unlearnable if there exist indistinguishable noise models $\mc N_1$, $\mc N_2$ such that $f(\mc N_1)\ne f(\mc N_2)$. 
    \end{definition}
    
    Note that the above definition of ``learnable'' does not necessarily mean that the value of the function can be learned. However, throughout this paper whenever some function is ``learnable'' according to Definition~\ref{de:learnability}, it is also learnable in the stronger sense that we can design an experiment to estimate it up to arbitrarily small error with high success probability.
    
    In the language of gate set tomography, an unlearnable function is a \emph{gauge-dependent} quantity of the gate set~\cite{nielsen2021gate}.
    On the other hand, any learnable function can in principle be learned to arbitrary precision.
    In the following, we will focus the learnability of the functions of the gate noise, including individual and multiplicative combinations of Pauli fidelities.
    
\medskip

\subsection{Learnability of individual Pauli fidelity}

We first study the learnability of individual Pauli fidelities associated with a Clifford gate. This has been an open problem in recent study of quantum benchmarking.
Perhaps surprisingly, we obtain the following simple criteria on the learnability of Pauli fidelities with any Clifford gate.

\begin{theorem}\label{th:nogo}
With Assumptions 1-4, for any $n$-qubit Clifford gate $\mc G$ and Pauli operator $P_a$, the Pauli fidelity $\lambda_a^{\mc G}$ is unlearnable if and only if $\mc G$ changes the pattern of $P_a$, \textit{i.e.,} $\pt(\mc G(P_a))\ne \pt(P_a)$.
\end{theorem}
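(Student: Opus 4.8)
The plan is to prove the two directions separately, matching the structure announced in the main text. The ``only if'' direction is the contrapositive statement that $\pt(\mc G(P_a))=\pt(P_a)$ implies $\lambda_a^{\mc G}$ is learnable (and hence not unlearnable), and it follows from cycle benchmarking. Since the patterns agree, on every qubit the single-qubit Paulis of $\mc G(P_a)$ and $P_a$ are simultaneously identity or simultaneously non-identity; as single-qubit Cliffords act transitively on $\{X,Y,Z\}$, there is a perfect single-qubit Clifford layer $\mc U$ (Assumption 1) with $\mc U(\mc G(P_a))=\pm P_a$. Running cycle benchmarking on the dressed gate $\mc U\circ\mc G$ as in Fig.~\ref{fig:main_cb}(b) then produces a decay $\E\expval{P_a}_d = A\,(\lambda_a^{\mc G})^d$ with $A$ SPAM-dependent but depth-independent, so a curve fit recovers $\lambda_a^{\mc G}$ SPAM-robustly, with residual signs absorbed in post-processing and off-diagonal contributions killed by the Pauli-twirl average exactly as in~\cite{erhard2019characterizing}.

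For the substantive ``if'' direction I would fix a qubit $j$ on which $\pt(\mc G(P_a))$ and $\pt(P_a)$ differ, and take the gauge map $\mc M$ to be a single-qubit depolarizing channel on qubit $j$ with fidelity parameter $\eta\neq1$ (chosen close to $1$). As a Pauli-diagonal map, $\mc M$ has eigenvalues $\mu_b=\eta$ when $P_b$ is non-identity on qubit $j$ and $\mu_b=1$ otherwise, so $\mu_b$ depends only on the $j$-th pattern bit of $P_b$. Applying $\mc M$ as a gauge transformation sends $\widetilde{\mc G}=\mc G\circ\Lambda_{\mc G}\mapsto\mc M\circ\mc G\circ\Lambda_{\mc G}\circ\mc M^{-1}=\mc G\circ\Lambda'_{\mc G}$, where $\Lambda'_{\mc G}=\mc G^{-1}\mc M\mc G\,\Lambda_{\mc G}\,\mc M^{-1}$ is again Pauli-diagonal, since conjugating a Pauli-diagonal map by a Clifford merely permutes its eigenvalues. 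Writing $P_b=\mc G(P_a)$ and reading off the eigenvalue at $P_a$ gives the transformed fidelity
\begin{equation}
  (\lambda_a^{\mc G})' = \frac{\mu_{b}}{\mu_a}\,\lambda_a^{\mc G}.
\end{equation}
Because the $j$-th pattern bits of $\mc G(P_a)$ and $P_a$ differ, exactly one of $\mu_b,\mu_a$ equals $\eta$ and the other equals $1$, so the ratio is $\eta^{\pm1}\neq1$ and the fidelity genuinely changes. As gauge-related models are indistinguishable by construction (inserting $\mc M^{-1}\mc M=\id$ between consecutive circuit elements leaves every outcome probability invariant), this witnesses unlearnability of $\lambda_a^{\mc G}$.

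The crux, and the step I expect to be the main obstacle, is verifying that the gauge-transformed collection is still a legitimate noise model obeying Assumptions 1--4 rather than an abstract relabeling. Two requirements must hold at once. First, Assumption 1 must survive: each perfect single-qubit gate $\mc U$ maps to $\mc M\,\mc U\,\mc M^{-1}$, which must again be noiseless. This is precisely where the depolarizing choice is forced, since depolarizing channels are the unitarily covariant ones, so $\mc M$ commutes with every single-qubit unitary on qubit $j$ and acts trivially elsewhere, giving $\mc M\,\mc U\,\mc M^{-1}=\mc U$; a channel with unequal $X,Y,Z$ eigenvalues would fail this and spoil the construction. Second, physicality must be preserved: the transformed gate noises $\Lambda'_{\mc G'}$, the SP noise $\mc M\circ\mc E^S$, and the measurement noise $\mc E^M\circ\mc M^{-1}$ are all Pauli-diagonal, and because Assumption 4 places the original error rates strictly inside the CPTP polytope, choosing $\eta$ sufficiently close to $1$ keeps every transformed error rate nonnegative and every transformed POVM effect valid. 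I would also remark that the same $\mc M$ acts on all gates in $\mf G$ at once, so the argument simultaneously shows that invoking other, independently noisy gates cannot restore learnability.
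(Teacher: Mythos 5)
Your proposal is correct and follows essentially the same route as the paper's proof: the ``pattern preserved implies learnable'' direction via cycle benchmarking with an interleaved single-qubit Clifford layer that returns $\mc G(P_a)$ to $P_a$, and the ``pattern changed implies unlearnable'' direction via a gauge transformation by a single-qubit depolarizing channel on the differing qubit, including the key verifications that depolarizing commutes with single-qubit unitaries (preserving Assumption 1) and that Assumption 4 lets one take $\eta$ close to $1$ to keep all transformed channels physical. Your eigenvalue formula $(\lambda_a^{\mc G})'=(\mu_b/\mu_a)\lambda_a^{\mc G}$ matches the paper's update rule, so there is no gap to report.
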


The fact that certain Pauli fidelities are SPAM-robustly unlearnable is observed in some recent works~\cite{erhard2019characterizing,hashim2020randomized,berg2022probabilistic,Ferracin2022Efficiently}, described as ``degeneracy'' of the noise model. Our work is the first to give a rigorous argument for this by establishing connections to gate set tomography.
As an example, for the CNOT and SWAP gates, we can immediately list its learnable and unlearnable Pauli fidelities in Table~\ref{tab:cnot_swap_individual}.
We note that, the no-go theorem holds even under the no-crosstalk assumption as will be discussed in Sec.~\ref{sec:no_crosstalk},
so introducing ancillary qubits or other multi-qubit Clifford gates cannot help resolve the unlearnability.

\begin{table}[!htp]
    \centering
    \begin{tabular}{|c|c|c|}
        \hline
        Gate & Learnable & Unlearnable \\
        \hline
        CNOT&$\lambda_{II},\lambda_{ZI},\lambda_{IX},\lambda_{ZX},
        \lambda_{XZ},\lambda_{YY},\lambda_{XY},\lambda_{YZ}$ &
        $\lambda_{IZ},\lambda_{XI},\lambda_{ZZ},\lambda_{XX},
        \lambda_{IY},\lambda_{YI},\lambda_{ZY},\lambda_{YX}$\\
        \hline
        SWAP&$\lambda_{II},\lambda_{XX},\lambda_{XY},\lambda_{XZ},
        \lambda_{YX},\lambda_{YY},\lambda_{YZ},\lambda_{ZX},\lambda_{ZY},\lambda_{ZZ}$ &
        $\lambda_{IX},\lambda_{IY},\lambda_{IZ},
        \lambda_{XI},\lambda_{YI},\lambda_{ZI}$\\
        \hline
    \end{tabular}
    \caption{Learnability of individual Pauli fidelity of CNOT and SWAP.}
    \label{tab:cnot_swap_individual}
\end{table}

Before going into the proof, we make several remarks about Theorem~\ref{th:nogo}.
The correct interpretation of the no-go result in Theorem~\ref{th:nogo} is that certain Pauli fidelities cannot be learned in a fully SPAM-robust manner.
If one has some pre-knowledge that the SPAM noise is much weaker than the gate noise, there exist methods to give a pretty good estimate of those unlearnable Pauli fidelities, according to physical constraints.
See the discussions in the main text.
On the other hand, it is observed that the product of certain unlearnable Pauli fidelities can be learned in a SPAM-robust manner, such as $\lambda_{XI}\cdot\lambda_{XX}$ for the CNOT gate~\cite{erhard2019characterizing}. We will characterize the learnability of this kind of products of Pauli fidelities in the next subsection.

\begin{proof}[Proof of Theorem~\ref{th:nogo}]
    We start with the ``only if'' part, which is equivalent to saying that $\pt(P_a)=\pt(\mc G(P_a))$ implies $\lambda_a^{\mc G}$ being learnable.
    The condition $\pt(\mc G(P_a))=\pt(P_a)$ implies $\mc G(P_a)$ is equivalent to $P_a$ up to some local unitary transformation, \textit{i.e.}, there exists a product of single-qubit unitary gates $\mc U \coleq \bigotimes_{j=1}^n\mc U_j$ such that
    \begin{equation}
        \mc U\circ\mc G(P_a) = P_a.
    \end{equation}
    Now we design the following experiments parameterized by a positive integer $m$,
\begin{itemize}
    \item Initial state: $\rho_0 = (I+P_a)/2^n$,
    \item POVM measurement: $E_{\pm 1} = (I\pm P_a)/{2}$,
    \item Circuit: $\mc C^m = \left(\mc U\circ\mc G\right)^m$.
\end{itemize}
Consider the measurement probability by running these experiments within a noise model $\mc N$.
\begin{equation}
    \begin{aligned}
      p^{(m)}_{\pm 1}(\mc N) &= \Tr\left( \widetilde{E}_{\pm 1} \widetilde{\mc C}^m  (\widetilde{\rho}_0)\right) \\
      &= \Tr \left( \frac{I\pm P_a}{2} \cdot\left( \mc E^M \circ \left( 
        \mc U\circ\mc G      
      \right)^m\circ\mc E^S \right)
      \left(\frac{I+P_a}{2^n}\right) \right)\\
      &= \Tr\left(\frac{I\pm P_a}{2} \cdot\frac{I+\lambda^M_{a}
      \left(\lambda_a^{\mc G}\right)^m
      \lambda^S_{a}P_a}{2^n} \right)\\
      &= \frac{1\pm\lambda^M_{a}
      \left(\lambda_a^{\mc G}\right)^m
      \lambda^S_{a}}{2}.
    \end{aligned}
\end{equation}
Recall that $\lambda_a^{S/M}$ is the Pauli fidelity of the SPAM noise channel for $P_a$.
The expectation value is
\begin{equation}
    \mbb E^{(m)}(\mc N) = \lambda^M_{a}
     \left(\lambda_a^{\mc G}\right)^m
      \lambda^S_{a}.
\end{equation}
If we take the ratio of expectation values of two experiments with consecutive $m$, we obtain (recall that all these Pauli fidelities are strictly positive by Assumption 4)
\begin{equation}
    \mbb E^{m+1}(\mc N)/\mbb E^{m}(\mc N) = \lambda_a^{\mc G}.
\end{equation}
This implies that if two noise model assign different values for $\lambda_a^{\mc G}$, the above experiments would be able to distinguish between them. By definition~\ref{de:learnability}, we conclude $\lambda_a^{\mc G}$ is learnable.

    \bigskip
    
    Next we prove the ``if'' part. Fix an $n$-qubit Clifford gate $\mc G$.
    Let $P_a$ be any Pauli operator such that $\pt(\mc G(P_a))\neq \pt(P_a)$. We will show that $\lambda_a^{\mc G}$ is unlearnable by explicitly constructing indistinguishable noise models that assign different values to $\lambda_a^{\mc G}$.

    Recall that any experiment involves a noisy initial state $\tilde{\rho}_0$, a noisy measurement $\{\widetilde{E}_l\}_l$, and a quantum circuit consisting of noiseless single-qubit gates $\mc U \coleq \bigotimes_{j=1}^n\mc U_j$ and noisy multi-qubit Clifford gates $\widetilde{\mc T}$.
    Now, introduce an invertible linear map $\mc M:\mc L(\mc H_{2^n})\to \mc L(\mc H_{2^n})$, and consider the following transformation 
    \begin{equation}\label{eq:gauge_trans}
        \begin{aligned} &\widetilde{\rho}_0\mapsto \mc M(\widetilde{\rho}_0),\quad
            \widetilde{E}_l\mapsto (\mc M^{-1})^\dagger (\widetilde{E}_l),\\
            &\bigotimes_{j=1}^n\mc U_j \mapsto \mc M\circ\bigotimes_{j=1}^n\mc U_j\circ\mc M^{-1},\\
            &\widetilde{\mc T} \mapsto \mc M\circ \widetilde{\mc T}\circ\mc M^{-1}.
        \end{aligned}
    \end{equation}
    One can immediately see that any measurement outcome distribution $p_l\coleq\Tr(\widetilde{E}_l\widetilde{\mc C}(\widetilde{\rho}_0))$ remains unchanged via such transformation. Therefore the noise models related by this transformation are indistinguishable. This is called a \emph{gauge transformation} in the literature of gate set tomography~\cite{nielsen2021gate}. 
    To use this idea for the proof, we start with a noise model $\mc N$ and construct a map $\mc M$ such that
    \begin{enumerate}
        \item The transformation yields a physical noise model $\mc N'$ satisfying Assumptions 1-5 in Sec.~\ref{sec:noisemodelAssumptions}.
        \item The two noise models $\mc N$, $\mc N'$ assign different values to $\lambda_a^{\mc G}$.
    \end{enumerate}

    Starting with a generic noise model $\mc N = \{\mc E^S,\mc E^M,\Lambda\}$ satisfying the assumptions,
    we construct the gauge transform map $\mc M$ as follows.
    Since $\pt(\mc G(P_a))\ne \pt(P_a)$, there exists an index $i\in[k]$ such that one and only one of $(P_a)_i$ and $\mc G(P_a)_i$ equals to $I$. Let $\mc M$ be the single-qubit depolarizing channel on the $i$-th qubit, 
    \begin{equation}\label{eq:dep_trans}
            \mc M\coleq\mc D_{i}\otimes\mc I_{[n]\backslash i},
    \end{equation}
    where the single-qubit depolarizing channel is defined as
    \begin{equation}
        \forall P\in\{I,X,Y,Z\},\quad \mc D(P) = \left\{
        \begin{aligned}
        P,\quad&~\text{if}~ P=I,\\
        \eta P,\quad&~\text{otherwise},
        \end{aligned}
        \right.
    \end{equation}
    for some parameter $0<\eta<1$. We will specify the value of $\eta$ later.
    
    Now we calculate the transformed noise model $\mc N'=\{\mc E^{S'},\mc E^{M'},\Lambda'\}$. The SPAM noise channels are transformed as
    \begin{equation}\label{eq:SPAMupdate}
        \mc E^{S'} = \mc M\mc E^S,\quad \mc E^{M'} = \mc E^M\mc M^{-1},
    \end{equation}
    both of which are still Pauli diagnoal maps. Thanks to our Assumption 4, as long as $\eta$ is sufficiently close to $1$, they can be shown to be Pauli channels.

    Next, the single-qubit unitary gates are transformed as
    \begin{equation}
        \mc M \left(\bigotimes_{j=1}^n \mc U_j \right)\mc M^{-1}
        =\mc D_i\mc U_i\mc D_i^\dagger \otimes \bigotimes_{j\ne i}\mc U_j
        = \bigotimes_j \mc U_j,
    \end{equation}
    since the single-qubit deplorizing channel commutes with any single-qubit unitary. This implies the single-qubit unitary gates are still noiseless.

    Finally, consider an arbitrary $n$-qubit Clifford gate ${\mc T}$. We show that the transformed noisy gate takes the form $\widetilde{\mc T}'=\widetilde{\mc T}\circ \Lambda_{\mc T}'$ where $\Lambda_{\mc T}'$ is still a Pauli channel, with the Pauli fidelities updated as follows.
    \begin{equation}\label{eq:paulifidelityupdate}
        {\lambda_b^{\mc T}}'=\begin{cases}
            \eta \lambda_b^{\mc T}, & \text{if }\pt(P_b)_i=0\text{ and }\pt(\mc T(P_b))_i=1,\\
            \eta^{-1}\lambda_b^{\mc T}, & \text{if }\pt(P_b)_i=1\text{ and }\pt(\mc T(P_b))_i=0,\\
            \lambda_b^{\mc T}, &\text{if }\pt(P_b)_i=\pt(\mc T(P_b))_i.
        \end{cases}
    \end{equation}
    We give a proof for the first case. Note that
    \begin{equation}\label{eq:gate_noise_trans}
        \begin{aligned}
        \mc M\circ \widetilde{\mc T}\circ\mc M^{-1}
        &= \mc D_{i}\circ \widetilde{\mc T}\circ \mc D_{i}^{-1}\\
            &= \mc D_{i}\circ {\mc T}\circ\Lambda_{\mc T}\circ \mc D_{i}^{-1}\\
            &= {\mc T}\circ({\mc T}^{-1}\circ\mc D_{i}\circ {\mc T}\circ\Lambda_{\mc T}\circ \mc D_{i}^{-1}) \\
            &\eqcol\mc T\circ \Lambda'_{\mc T},
        \end{aligned}
    \end{equation}
    where we use $\mc D_i$ as a shorthand for $\mc D_i\otimes\mc I_{[n]\backslash i}$.
    The transformed noise channel can be written as
    \begin{equation}\label{eq:noisechannelupdate}
        \Lambda'_{\mc T}={\mc T}^{-1}\circ\mc D_{i}\circ {\mc T}\circ\Lambda_{\mc T}\circ \mc D_{i}^{-1}.
    \end{equation}
    Let us calculate its action on arbitrary $P_b$.
    \begin{equation}
        \begin{aligned}
            \Lambda_{\mc T}'
            ({P_{b}}) &= (\mc T^{-1}\circ \mc D_{i}\circ \mc T\circ\Lambda_{\mc T}\circ \mc D_{i}^{-1})(P_b)\\
            &= (\eta^{-1})^{\pt(P_b)_i} (\mc T^{-1}\circ \mc D_{i}\circ \mc T\circ\Lambda_{\mc T})(P_b)\\
            &= \lambda_b^{\mc T}(\eta^{-1})^{\pt(P_b)_i} (\mc T^{-1}\circ \mc D_{i}\circ \mc T)(P_b)\\
            &=\eta^{\pt(\mc T(P_b))_i}\lambda_b^{\mc T}(\eta^{-1})^{\pt(P_b)_i} ~P_b.
        \end{aligned}
    \end{equation}
    Thus, $\Lambda_{\mc T}'$ is indeed a Pauli diagonal map with Pauli fidelities given by Eq.~\eqref{eq:paulifidelityupdate}. The fact that $\Lambda_{\mc T}'$ is guaranteed to be a CPTP map by choosing appropriate $\eta$ will be verified later.
    Specifically, if we take $\mc T$ to be the Clifford gate $\mc G$ that we are interested in, we have $\lambda_a^{\mc G'} = \eta\lambda_a^{\mc G}$ or $\lambda_a^{\mc G'} = \eta^{-1}\lambda_a^{\mc G}$. In either case, $\lambda_a^{\mc G'} \ne \lambda_a^{\mc G}$. This means the two indistinguishable noise model $\mc N$, $\mc N'$ indeed assign different values to $\lambda_a^{\mc G}$.
    
    We now verify that $\mc N'$ is indeed a physical noise model and satisfies Assumptions 1-4. We have already shown that single-qubit unitary gates remain noiseless and that all gate noise and SPAM noise are described by Pauli diagonal maps. The only thing left is to make sure all these Pauli diagonal maps are CPTP and satisfy the positivity constraints in Assumption 4. 
    According to Eq.~\eqref{eq:SPAMupdate} and \eqref{eq:paulifidelityupdate}, any Pauli fidelity $\lambda_b$ of either SPAM noise or gate noise is transformed to one of the following $\lambda_b'\in\{\lambda_b,\eta\lambda_b,\eta^{-1}\lambda_b\}$, so $\lambda_b>0$ implies $\lambda_b'>0$. On the other hand, any transformed Pauli error rate can be bounded by
    \begin{equation}\label{eq:positivity}
        \begin{aligned}
            p_c' &= \frac{1}{4^n}\sum_{b\in{\sf P}^n}(-1)^\expval{b,c}\lambda_b'\\
            &\ge \frac{1}{4^n}\sum_{b\in{\sf P}^n}\left((-1)^\expval{b,c}\lambda_b - (\eta^{-1}-1)\lambda_b\right)\\
            &\ge p_c - (\eta^{-1}-1).
        \end{aligned}
    \end{equation}
    To ensure every $p'_c>0$, we can choose $1>\eta>(p_{\min}+1)^{-1}$ with $p_{\min}$ being the minimum Pauli error rate among all Pauli channels of both SPAM and gate noise, which is possible since $p_{\min}>0$ by Assumption 4. This means each transformed Pauli diagonal maps are completely positive (CP). To see they are also trace-preserving (TP), just notice from Eq.~\eqref{eq:SPAMupdate},~\eqref{eq:paulifidelityupdate} that $\lambda_{\bm 0}'=\lambda_{\bm 0} = 1$ always holds. 
    Now we conclude that $\mc N'$ is indeed a physical noise model satisfying all the assumptions. 
    Combining with the reasoning in the last paragraph, we see  $\lambda_a^{\mc G}$ is unlearnable.
    This completes our proof.
    \end{proof}
    
\subsection{Characterization of learnable space via algebraic graph theory}\label{sec:space}

We have characterized the learnability of individual Pauli fidelities associated with any Clifford gates in Theorem~\ref{th:nogo}.
Here, we want to understand the learnablity for a general function of the gate noise.
We first show that, in our setting, any measurement outcome probability in experiment can be expressed as a polynomial of Pauli fidelities of gate and SPAM noise, and each term in the polynomial can be learned via a CB experiment (see Sec.~\ref{sec:justification} for details).
Therefore, it suffices to study the monomials, \textit{i.e.}, products of Pauli fidelities.
For each Pauli fidelity $\lambda_a^{\mc G}$, we define the \emph{logarithmic Pauli fidelity} as $l_a^{\mc G}\coleq \log \lambda_a^{\mc G}$ ($\lambda_a^{\mc G}>0$ by Assumption 4). 
It then suffices to study the learnability of linear functions of the logarithmic Pauli fidelities. 
An alternative reason to only study this class of function is that, under a weak noise assumption, we have $l_a\to 0$, so we can express any function of the noise model as a linear function of $l_a$ under a first order approximation. Note that similar approaches have been explored in the literature~\cite{nielsen2022first,flammia2021averaged}.

Since we are working with Assumption 1-4 which takes all crosstalk into account, we treat the noise channel for each gate in $\mf G$ as $n$-qubit. 
The number of independent Pauli fidelities we are interested in is thus 
\begin{equation}
    |\Lambda| = |\mf G|\cdot 4^n.    
\end{equation}
Denote the space of all (real-valued) linear function of logarithmic Pauli fidelities as $F$, then we have $F\cong \mbb R^{|\Lambda|}$. A function $f\in F$ uniquely corresponds to a vector $\bm v\in \mbb R^{|\Lambda|}$ by $f(\bm l) = \bm v\cdot \bm l = \sum_{a,\mc G}v_{a,\mc G}l_a^{\mc G}$. We will use the vector to refer to the linear function when there is no ambiguity.

Denote the set of all learnable function in $F$ as $F_L$ (in the sense of Def.~\ref{de:learnability}). As shown in the following lemma, $F_L$ forms a linear subspace in $F$, so we call $F_L$ the \emph{learnable space}.

\begin{lemma}\label{le:learnable_is_space}
  $F_L$ is a linear subspace of $F$.
\end{lemma}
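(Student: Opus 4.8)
The plan is to prove that $F_L$ is closed under scalar multiplication and under addition, since it is already a subset of the vector space $F\cong\mbb R^{|\Lambda|}$. The key conceptual tool is the contrapositive form of Definition~\ref{de:learnability}: a function $f$ is learnable if and only if indistinguishability of any two noise models $\mc N_1,\mc N_2$ forces $f(\mc N_1)=f(\mc N_2)$. In other words, $f$ is learnable precisely when $f$ is constant on every equivalence class of indistinguishable noise models. This reformulation makes the linear structure transparent: learnability is the property of being invariant under the equivalence relation of indistinguishability.

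First I would fix two arbitrary indistinguishable noise models $\mc N_1,\mc N_2$ and reason about a fixed such pair throughout. For scalar closure, suppose $f\in F_L$ and $\alpha\in\mbb R$. Since $f$ is learnable and $\mc N_1,\mc N_2$ are indistinguishable, we have $f(\mc N_1)=f(\mc N_2)$, hence $(\alpha f)(\mc N_1)=\alpha f(\mc N_1)=\alpha f(\mc N_2)=(\alpha f)(\mc N_2)$. As this holds for every indistinguishable pair, $\alpha f$ satisfies the learnability criterion, so $\alpha f\in F_L$. For additive closure, take $f,g\in F_L$; then for any indistinguishable pair we have $f(\mc N_1)=f(\mc N_2)$ and $g(\mc N_1)=g(\mc N_2)$, so $(f+g)(\mc N_1)=f(\mc N_1)+g(\mc N_1)=f(\mc N_2)+g(\mc N_2)=(f+g)(\mc N_2)$, giving $f+g\in F_L$. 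Finally $F_L$ is nonempty since the zero function is trivially learnable. This establishes that $F_L$ is a linear subspace.

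The one subtlety I would flag, rather than a genuine obstacle, is that Definition~\ref{de:learnability} is stated as a one-directional implication (different function values imply distinguishable), and one must take care to use it in the equivalent contrapositive form (indistinguishable implies equal function values). The whole argument hinges on the fact that the values $f(\mc N)$, $g(\mc N)$ are \emph{linear} functionals of the logarithmic Pauli fidelities, so that taking linear combinations of functions corresponds to taking the same linear combination of real numbers once a noise model is fixed; this is exactly what lets equality of values propagate through $\alpha$-scaling and summation. No construction of gauge transformations or new experiments is needed here—the lemma is purely a formal consequence of learnability being an equality-preserving condition combined with the ambient linear structure of $F$. The main thing to get right is simply to phrase the indistinguishability quantifier correctly, ``for all indistinguishable pairs,'' when verifying each closure property.
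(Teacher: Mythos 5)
Your proof is correct and is essentially the same argument as the paper's: the paper verifies additive closure via the direct implication $(\bm v_1+\bm v_2)\cdot\bm l_{\mc N_1}\ne(\bm v_1+\bm v_2)\cdot\bm l_{\mc N_2}\implies \bm v_1\cdot\bm l_{\mc N_1}\ne\bm v_1\cdot\bm l_{\mc N_2}$ or $\bm v_2\cdot\bm l_{\mc N_1}\ne\bm v_2\cdot\bm l_{\mc N_2}$, which is exactly the contrapositive of the step you use. Your phrasing via ``learnable $=$ constant on indistinguishability classes'' is a clean (and equivalent) packaging, and it handles scalar closure, including $\alpha=0$, slightly more explicitly than the paper does.
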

\begin{proof}
  Given $\bm v_1,\bm v_2 \in F_L$, consider the learnability of $\bm v_1+\bm v_2$. For any noise models $\mc N_1,~\mc N_2$, 
  \begin{equation}
  \begin{aligned}
    (\bm v_1+\bm v_2)\cdot\bm l_{\mc N_1} \ne (\bm v_1+\bm v_2)\cdot\bm l_{\mc N_2} &\implies \bm v_1\cdot\bm l_{\mc N_1} \ne  \bm v_1\cdot\bm l_{\mc N_2} ~\text{or}~\bm v_2\cdot\bm l_{\mc N_1} \ne \bm v_2\cdot\bm l_{\mc N_2}\\
    &\implies \mc N_1, \mc N_2~\text{are distinguishable}.
  \end{aligned}
  \end{equation}  
  Thus $\bm v_1+\bm v_2\in F_L$.
  We also have $\bm v\in L \implies k\bm v\in F_L$ for all $k\in\mbb R$. Therefore, $F_L$ forms a vector space in $\mbb R^{|\Lambda|}$.
\end{proof}

Our goal is to give a precise characterization of the learnable space $F_L$. 
For example, we may want to know the dimension of $F_L$, which represents the learnable degrees of freedom for the noise.
This is also the maximum number of linearly-independent equations about the logarithmic Pauli fidelities we can expect to extract from experiments.
Conversely, the unlearnable degrees of freedom roughly correspond to the number of independent gauge transformations.
We summarize these definitions as follows.

\begin{definition}
    Given a Clifford gate set $\mf G$, the learnable degrees of freedom $\mr{LDF}(\mf G)$ and unlearnable degrees of freedom $\mr{UDF}(\mf G)$ are defined as, respectively,
    \begin{equation}
        \mr{LDF}(\mf G) \coleq \mr{dim}(F_L),\quad
        \mr{UDF}(\mf G) \coleq |\Lambda| - \mr{dim}(F_L).
    \end{equation}
\end{definition}

\medskip
\noindent Our approach is to relate $F_L$ to certain properties of a graph defined as follows.

\begin{definition}[Pattern transfer graph]
  The pattern transfer graph associated with a Clifford gate set $\mf G$ is a directed graph $G=(V,E)$ constructed as follows:
\begin{itemize}
    \item $V(G) = \{0,1\}^n$.
    \item $E(G) = \{e_{a,\mc G} \coleq (\pt(P_a),~\pt(\mc G(P_a)) ~|~\forall~ P_a\in{\sf P}^n,~\mc G\in\mf{G} \}$.
\end{itemize}
\end{definition}
The $2^n$ vertices each corresponds to a possible Pauli pattern.
The $|E| = |\Lambda| = |\mf G|\cdot 4^n$ edges each corresponds to a Pauli operator and a Clifford gate, describing how the Clifford gate evolves the pattern of the Pauli operator. One can also think each edge corresponds to a unique Pauli fidelity ($e_{a,\mc G}\leftrightarrow \lambda_{a}^{\mc G}$).
The rationale for only tracking the Pauli pattern is that we assume the ability to implement noiseless single-qubit unitaries, which makes the actual single-qubit Pauli operators unimportant. Fig.~2 of main text shows the pattern transfer graphs for a CNOT gate, a SWAP gate, and a gate set of CNOT and SWAP, respectively.

\medskip

Next, we give some definitions from graph theory (see~\cite{gleiss2003circuit,bollobas1998modern}). A \emph{chain} is an alternating sequences of vertices and edges $z=(v_0,e_1,v_1,e_2,v_2,...,v_{q-1},e_q,v_q)$ such that each edge satisfies $e_k=(v_{k-1},v_k)$ or $e_k=(v_{k},v_{k-1})$. 
A chain is \emph{simple} if it does not contain the same edge twice.
A closed chain (\textit{i.e.}, $v_0=v_q$) is called a \emph{cycle}. If an edge $e_k$ in a chain satisfies $e_k = (v_{k-1},v_k)$, it is called an \emph{oriented edge}. A chain consists solely of oriented edges is called a \emph{path}. A closed path is called a \emph{oriented cycle} or a \emph{circuit}.
A graph is called \emph{strongly connected} if there is a path from every vertex to every other vertex. A graph is called \emph{weakly connected} if there is a chain from every vertex to every other vertex. The number of (strongly or weakly) \emph{connected components} is the minimum number of partitions of the vertex set $V=V_1\cup\cdots\cup V_c$ such that each subgraph generated by a vertex partition is (strongly or weakly) connected.

We can equip a graph with vector spaces.
Following the notations of~\cite[Sec. II.3]{bollobas1998modern}, 
the \emph{edge space} $C_1(G)$ of a directed graph $G$ is the vector space of all linear functions from the edges $E(G)$ to $\mbb R$.
By construction, $C_1(G)\cong \mbb R^{|\Lambda|} \cong F$. 
Every linear function of the logarithmic Pauli fidelities naturally corresponds to a linear function of the edges according to the label of the edges ($l_{a}^{\mc G} \leftrightarrow e_{a,\mc G}$).
Again, we use vectors in $\mbb R^{|\Lambda|}$ to refer to elements of $C_1(G)$. The inner product on $C_1(G)$ is defined as the standard inner product on $\mbb R^{|\Lambda|}$.

There are two subspaces of $C_1(G)$ that is of special interest. 
For a simple cycle 
$z$ in $G$, we assign a vector $\bm v_z\in C_1(G)$ as follows
\begin{equation}
    \bm v_z(e) = \left\{
    \begin{aligned}
      +1,\quad& e\in z,~\text{$e$ is oriented.}\\
      -1,\quad& e\in z,~\text{$e$ is not oriented.}\\
      0,\quad& e\notin z.
    \end{aligned}
    \right.
\end{equation}
The \emph{cycle space} $Z(G)$ is the linear subspace of $C_1(G)$ spanned by all cycles $\bm v_z$ in $G$. 

Given a partition of vertices $V=V_1\cup V_2$ such that there is at least one edge between $V_1$ and $V_2$, a \emph{cut} is the set of all edges $e = (u,v)$ such that one of $u,v$ belongs to $V_1$ and the other belongs to $V_2$. For each cut $p$ we assign an vector $\bm v_p\in C_1(G)$ as follows
\begin{equation}\label{eq:cut}
    \bm v_p(e) = \left\{
    \begin{aligned}
      +1,\quad& e\in p,~\text{$e$ goes from $V_1$ to $V_2$.}\\
      -1,\quad& e\in p,~\text{$e$ goes from $V_2$ to $V_1$.}\\
      0,\quad& e\notin p.
    \end{aligned}
    \right.
\end{equation}
The \emph{cut space} $U(G)$ is the linear subspace of $C_1(G)$ spanned by all cuts $\bm v_p$ in $G$. 
Note that different partition of vertices may result in the same cut vector if $G$ is unconnected.

\begin{lemma}\cite[Sec. II.3, Theorem~1]{bollobas1998modern}\label{le:complement}
The edge space $C_1(G)$ is the orthogonal direct sum of the cycle space $Z(G)$ and the cut space $U(G)$, whose dimensions are given by
\begin{equation}
    \mr{dim}(Z(G)) = |E|-|V|+c(G),\quad 
    \mr{dim}(U(G)) = |V|-c(G),
\end{equation}
where $c(G)$ is the number of weakly connected components of $G$.
\end{lemma}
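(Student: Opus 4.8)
The plan is to realize all three claims through the vertex–edge incidence matrix $B\in\mbb R^{|V|\times|E|}$ of the directed graph $G$, defined so that the column of an edge $e=(u,v)$ has entry $-1$ in row $u$, entry $+1$ in row $v$, and $0$ elsewhere (a self-loop thus produces a zero column, and parallel edges each receive their own column). The two structural facts I would establish are the identifications $Z(G)=\ker B$ and $U(G)=\mr{im}\,B^{\mathsf T}$, i.e. the cut space is exactly the row space of $B$. Once these hold, the entire lemma follows from the fundamental theorem of linear algebra: $\ker B$ and $\mr{im}\,B^{\mathsf T}$ are orthogonal complements inside $\mbb R^{|E|}\cong C_1(G)$, which simultaneously delivers the orthogonality $Z(G)\perp U(G)$, the direct-sum decomposition $C_1(G)=Z(G)\oplus U(G)$, and the dimension relations $\mr{dim}(U(G))=\mr{rank}(B)$ and $\mr{dim}(Z(G))=|E|-\mr{rank}(B)$.

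For the cut-space identification I would observe that the row of $B$ indexed by a vertex $w$ is precisely the cut vector $\bm v_p$ of the partition $(V\setminus\{w\},\{w\})$: edges entering $w$ get $+1$ and edges leaving $w$ get $-1$, matching Eq.~\eqref{eq:cut}. A general cut vector for a partition $(V_1,V_2)$ is then recovered as $\sum_{w\in V_2}\bm b_w$, where $\bm b_w$ denotes the $w$-th row, since the edges internal to $V_2$ cancel and the crossing edges pick up exactly the signs prescribed by Eq.~\eqref{eq:cut}; this shows $U(G)=\mr{im}\,B^{\mathsf T}$. For the cycle-space identification, the inclusion $Z(G)\subseteq\ker B$ is flow conservation: a cycle vector is a unit circulation around a closed chain, so its net signed contribution at every vertex vanishes and $B\bm v_z=0$. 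The reverse inclusion $\ker B\subseteq Z(G)$ is the substantive step and the one I expect to be the main obstacle: I would show that every element of $\ker B$ (every real circulation) is a linear combination of simple-cycle vectors, by repeatedly peeling off a cycle supported on the edges where the circulation is nonzero and inducting on the size of that support.

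Finally, I would compute $\mr{rank}(B)=|V|-c(G)$ by identifying the left kernel of $B$: a vector $\bm y$ satisfies $\bm y^{\mathsf T}B=0$ if and only if $y_u=y_v$ for every edge $e=(u,v)$, i.e. if and only if $\bm y$ is constant on each weakly connected component (the edge orientations are irrelevant here, which is exactly why $c(G)$ must be counted weakly). The space of such locally constant labelings has dimension $c(G)$, hence $\mr{rank}(B)=|V|-c(G)$, and substituting into the two dimension formulas above yields $\mr{dim}(U(G))=|V|-c(G)$ and $\mr{dim}(Z(G))=|E|-|V|+c(G)$. The points demanding care are the sign bookkeeping under the orientation conventions of the cycle and cut vectors, and the treatment of self-loops and parallel edges—both of which the incidence-matrix formulation absorbs automatically—while the circulation-decomposition argument carries the real content; orthogonality itself also admits a direct combinatorial reading as the statement that any closed chain crosses a given cut equally often in each direction.
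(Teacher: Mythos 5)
Your proof is correct, but it is worth pointing out that the paper does not prove this statement at all: Lemma~\ref{le:complement} is imported verbatim from Bollob\'as \cite[Sec.~II.3, Theorem~1]{bollobas1998modern}, so the only comparison available is with the textbook's argument, and yours takes a genuinely different route. The standard proof in Bollob\'as fixes a spanning forest, builds the $|E|-|V|+c(G)$ fundamental cycles (one per chord) and the $|V|-c(G)$ fundamental cuts (one per forest edge), verifies directly that every cycle vector is orthogonal to every cut vector, and concludes by a dimension count that the two spans are orthogonal complements; the step that a circulation supported on a forest must vanish (leaf-pruning) replaces your cycle-peeling argument. Your route instead encodes everything in the incidence matrix $B$, identifies $U(G)=\mathrm{im}\,B^{\mathsf T}$ and $Z(G)=\ker B$, and lets the fundamental theorem of linear algebra deliver orthogonal complementarity and both dimension formulas at once, with $\mathrm{rank}(B)=|V|-c(G)$ read off from the left kernel (locally constant vertex labelings). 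Your identifications are sound for the multigraph at hand: parallel edges get independent columns, self-loops get zero columns and hence lie in $\ker B$, matching the paper's convention that self-loops belong to the cycle space; the row $\bm b_w$ matches the sign convention of Eq.~\eqref{eq:cut} for the cut $(V\setminus\{w\},\{w\})$ (with the harmless caveat that a row is the zero vector, not a cut, when $w$ has no non-loop edges); and the circulation-decomposition step you flag as the real content is exactly right and is argued correctly by induction on the support. What each approach buys: yours is more compact and makes the orthogonality structural rather than a check, while the spanning-forest proof produces explicit bases of both spaces, which is what the paper actually exploits downstream when it writes out cycle bases for CNOT and SWAP (Table~\ref{tab:main:CNOT_full}) and invokes circuit bases via Lemma~\ref{le:circuit}.
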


In some cases, we are more interested in circuits (oriented cycles) instead of general cycles. The following lemma gives a sufficient condition when the cycle spaces have a circuit basis, \textit{i.e.} the cycle space is spanned by oriented cycles.

\begin{lemma}\cite[Theorem~7]{gleiss2003circuit}\label{le:circuit}
    A directed graph has a circuit basis if it is strongly connected, or it is a union of strongly connected subgraphs. 
\end{lemma}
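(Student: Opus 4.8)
The plan is to reduce the statement to a single structural fact and then invoke flow decomposition. The key observation is that both hypotheses -- strong connectivity, or being a union of strongly connected subgraphs -- imply the weaker property that \emph{every edge lies on some circuit} (directed cycle). Indeed, if $G$ is strongly connected then for any edge $e=(u,v)$ there is a directed path from $v$ back to $u$, closing a circuit through $e$; and if $G$ is a union of strongly connected subgraphs then each edge sits inside one such subgraph and the same argument applies there. So I would first isolate the self-contained claim: \textbf{if every edge of $G$ lies on a circuit, then $Z(G)$ has a circuit basis}, and check separately that each stated hypothesis implies its premise.

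Second, I would recall that $Z(G)$ is, by its definition together with Lemma~\ref{le:complement}, spanned by the simple-cycle vectors $\bm v_z$, and that every circuit vector already lies in $Z(G)$ (a circuit is a closed chain). Hence it suffices to show that each simple-cycle vector $\bm v_z$ lies in the span of circuit vectors. A spanning set of circuits then contains a basis by elementary linear algebra, and since those vectors are elements of $Z(G)$, that basis is a genuine circuit basis.

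Third, the heart of the argument: take a simple cycle $z$, which may traverse some edges forward ($+1$ in $\bm v_z$) and some backward ($-1$). For each backward edge $e_k=(v_k,v_{k-1})$, the premise supplies a directed path $R_k$ from $v_{k-1}$ to $v_k$ (the complement of $e_k$ in a circuit through it), so $e_k$ together with $R_k$ forms a circuit $D_k$ with $\bm v_{D_k}(e_k)=+1$. Adding $\sum_k \bm v_{D_k}$ to $\bm v_z$ cancels every backward edge while contributing only $+1$'s along the paths $R_k$, so $\bm w \coleq \bm v_z + \sum_k \bm v_{D_k}$ is a circulation (an element of $Z(G)$, being a sum of cycle vectors) with all entries nonnegative. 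I would then apply the standard flow-decomposition theorem: a nonnegative integer circulation decomposes as a nonnegative integer combination $\bm w = \sum_i \bm v_{C_i}$ of circuits, proved by repeatedly following positive-flow edges (conservation guarantees one can always continue until a vertex repeats, extracting a directed cycle) and subtracting the minimum value along it. Consequently $\bm v_z = \sum_i \bm v_{C_i} - \sum_k \bm v_{D_k}$ is an integer combination of circuit vectors, as desired.

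The main obstacle is exactly this conversion from general cycles to circuits: the naive fundamental-cycle basis obtained from a spanning tree generally traverses some edges backward, and one cannot in general reroute the return path using only tree edges while respecting edge orientations. The crucial role of the hypothesis is to supply, for each backward edge, a \emph{forward} directed return path, and the nonnegativity of $\bm w$ is what then licenses the purely-forward flow decomposition into circuits. The one piece of bookkeeping I would verify carefully is that $\bm w$ is genuinely nonnegative even when several return paths $R_k$ overlap or happen to traverse another original backward edge: in every case the extra contributions are $+1$'s stacked on top of a $-1$ that is already cancelled by its own $D_k$, so no entry of $\bm w$ can drop below zero.
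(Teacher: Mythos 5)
Your proof is correct, but note that the paper itself never proves this lemma: it is imported verbatim from \cite{gleiss2003circuit} (Theorem~7 there) and used as a black box in the proof of Theorem~\ref{th:space}, so there is no internal proof to compare against. Your argument is a sound, self-contained substitute, and its skeleton --- (i) both hypotheses reduce to the premise that every edge lies on a circuit, (ii) repair each backward edge of a simple cycle by a directed return path supplied by that premise, (iii) flow-decompose the resulting nonnegative integer circulation into circuits --- is essentially the classical route to this fact and close in spirit to the cited source. Two small points should be made explicit if this were written out in full. First, step (iii) needs that every element of $Z(G)$ obeys Kirchhoff conservation at each vertex; this holds because each simple-cycle vector $\bm v_z$ does (check the four orientation cases of the two cycle edges meeting at a vertex) and conservation is linear, but you assert that $\bm w$ is a circulation without this verification. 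Second, when you close a backward edge $e_k=(v_k,v_{k-1})$ into a circuit $D_k$, you should take $R_k$ to be a \emph{simple} directed path from $v_{k-1}$ to $v_k$ (extracted from the walk obtained by deleting $e_k$ from a circuit through it; such a path cannot reuse $e_k$), so that $\bm v_{D_k}$ is the well-defined $\{0,1\}$-vector of a simple circuit. Neither point is a gap, just bookkeeping: your nonnegativity check for $\bm w$, including overlapping return paths and return paths running through other backward edges, is exactly right, and the conclusion that circuits span $Z(G)$ and hence contain a basis follows by elementary linear algebra.
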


\medskip

With all the graph theoretical tools introduced above, we are ready to present the main result of this section.

\begin{theorem}\label{th:space}
  Under the Assumptions 1-4.
  For any $\mf G$,
  $F_L \cong Z(G)$.
  Explicitly, a linear function $f_{\bm v}(\bm l) = \bm v\cdot\bm l$ is learnable if and only if $\bm v$ belongs to the cycle space $Z(G)$.
\end{theorem}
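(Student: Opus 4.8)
The plan is to establish the isomorphism $F_L \cong Z(G)$ by proving the two inclusions separately, since Lemma~\ref{le:complement} tells us that $C_1(G)$ decomposes as the orthogonal direct sum $Z(G)\oplus U(G)$. The strategy is to show that (i) every vector in the cycle space corresponds to a learnable function, giving $Z(G)\subseteq F_L$, and (ii) every vector in the cut space corresponds to a gauge-dependent (hence unlearnable) direction, so that any learnable $\bm v$ must be orthogonal to all of $U(G)$, giving $F_L\subseteq U(G)^\perp = Z(G)$. Together these yield equality. This mirrors exactly the complementary structure that Lemma~\ref{le:complement} hands us, so the orthogonal-complement machinery is what carries the argument.

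For the inclusion $Z(G)\subseteq F_L$, I would first handle a single oriented cycle (circuit) and then extend by linearity. Given a circuit $z=(e_1,\dots,e_k)$ with associated Pauli fidelities $\lambda_1,\dots,\lambda_k$, the fact that consecutive edges share a vertex means the Pauli patterns match up, so I can design a cycle-benchmarking experiment -- as sketched in the main text around Fig.~1(b) -- that applies the Clifford gates of the cycle interleaved with the appropriate single-qubit unitaries that rotate each output Pauli back to the next input Pauli (justified exactly as in the ``only if'' direction of Theorem~\ref{th:nogo}, where $\pt$-preservation guarantees the existence of local correcting unitaries). Running this at several depths $m$ and taking ratios of expectation values extracts $\prod_i \lambda_i$, equivalently $f_z(\bm l)=\sum_i l_i$, SPAM-robustly. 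For a general cycle (allowing non-oriented edges with $-1$ coefficients) one uses the inverse gate along reversed edges; Lemma~\ref{le:circuit} is relevant for reducing to circuits when the graph is a union of strongly connected pieces, but in general I would argue directly that any $\bm v_z$ is learnable. By Lemma~\ref{le:learnable_is_space}, $F_L$ is closed under linear combination, so learnability of each cycle vector promotes to learnability of all of $Z(G)$.

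For the reverse direction I would show $U(G)\subseteq F_L^\perp$, i.e. no nonzero component along the cut space can be learnable. The key step is to promote the single-qubit depolarizing gauge transformation from the proof of Theorem~\ref{th:nogo} to an arbitrary cut. Given a vertex partition $V=V_1\cup V_2$ defined by a subset $S\subseteq[n]$ of qubits (or more generally a characteristic function on patterns), I would construct a gauge map $\mc M$ as a product of single-qubit depolarizing channels whose effect on each Pauli fidelity $\lambda_b^{\mc G}$ is to multiply by $\eta$ or $\eta^{-1}$ exactly according to whether the edge $e_{b,\mc G}$ crosses the cut from $V_1$ to $V_2$ or back, reproducing the sign structure of Eq.~\eqref{eq:cut}. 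Taking logs, this gauge transformation shifts $\bm l$ by $(\log\eta)\,\bm v_p$ while leaving all measurement statistics invariant; hence $\mc N$ and $\mc N'$ are indistinguishable yet differ along $\bm v_p$. Any learnable $f_{\bm v}$ must take equal values on indistinguishable models, forcing $\bm v\cdot \bm v_p = 0$ for every cut $p$, i.e. $\bm v\perp U(G)$, so $F_L\subseteq U(G)^\perp=Z(G)$.

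The main obstacle I expect is the cut-space direction, specifically constructing a single gauge transformation realizing an \emph{arbitrary} cut rather than the elementary single-qubit-index cut used in Theorem~\ref{th:nogo}. A general partition of the $2^n$ patterns need not correspond to a single qubit index, so I must verify that the span of the ``elementary'' cuts induced by single-qubit depolarizing channels (one per qubit) already generates the entire cut space $U(G)$ -- or, alternatively, argue directly on a spanning set of cuts and invoke linearity. I would check that the $n$ elementary gauge directions, one per qubit, map under the edge-vector correspondence onto a spanning set of $U(G)$, using $\mr{dim}(U(G))=|V|-c(G)=2^n-c(G)$ from Lemma~\ref{le:complement} as the dimension count to confirm nothing is missed. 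A secondary technical point, already anticipated in Theorem~\ref{th:nogo}, is ensuring each composed gauge map yields a physical (CPTP, strictly positive) noise model by choosing $\eta$ close enough to $1$; this is routine given Assumption~4 but must be stated.
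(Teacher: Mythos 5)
Your cycle-space inclusion follows essentially the paper's route (circuit-by-circuit cycle benchmarking plus Lemma~\ref{le:learnable_is_space}), but note that your fallback for non-oriented edges---``use the inverse gate along reversed edges''---does not work as stated: the gate set $\mf G$ need not contain $\mc G^{-1}$, and one cannot apply the inverse of a noisy gate. The correct repair, which is what the paper does, is to observe that every Clifford gate has finite order ($\mc G^{d}=\mc I$), so every edge has a return path and the pattern transfer graph is a union of strongly connected subgraphs; Lemma~\ref{le:circuit} then gives a \emph{circuit} basis of $Z(G)$, so reversed edges never need to be handled. Treating Lemma~\ref{le:circuit} as optional, as you do, leaves this direction incomplete.

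The genuine gap is in the cut-space direction, exactly at the point you flag as your ``main obstacle,'' and your proposed resolution fails. You want to realize gauge transformations as products of single-qubit depolarizing channels and then check by dimension counting that these elementary cuts span $U(G)$. They do not: such products give at most $n$ independent gauge directions (one parameter per qubit), while $\dim U(G) = 2^n - c(G)$ by Lemma~\ref{le:complement}, which already exceeds $n$ for the paper's own examples in Table~\ref{tab:UDF} ($\mr{CIRC}_3$ has $\mr{UDF}=4$ and $\{\mr{CNOT}_{12},\mr{CNOT}_{23},\mr{CNOT}_{31}\}$ has $\mr{UDF}=6$, both with $n=3$). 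So the dimension count refutes, rather than confirms, your plan. The missing idea is that the gauge map need not be a tensor product of local channels: for an \emph{arbitrary} partition $V=V_1\cup V_2$ of patterns, define $\mc M$ directly as the Pauli diagonal map with $\mc M(P)=\eta P$ if $\pt(P)\in V_1$ and $\mc M(P)=P$ otherwise. The only property required for this to be a legitimate gauge transformation under Assumption~1 is that conjugation by $\mc M$ leaves ideal single-qubit unitaries unchanged, and this holds because a product of single-qubit unitaries maps any Pauli to a linear combination of Paulis \emph{with the same pattern}, so the $\eta$ factors cancel: $\mc M\,\mc U\,\mc M^{-1}=\mc U$. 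This non-local $\mc M$ shifts $\bm l$ by $(\log\eta)\,\bm v_p$ for exactly the cut vector of Eq.~\eqref{eq:cut}, CPTP-ness follows from Assumption~4 for $\eta$ near $1$ as in Theorem~\ref{th:nogo}, and then your orthogonality argument $\bm v\cdot\bm v_p=0$ goes through for every cut, giving $F_L\subseteq U(G)^{\perp}=Z(G)$.
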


We give the proof at the end of this section.
The proof involves two parts. 
The first is to show that every cycle is learnable using a variant of cycle benchmarking~\cite{erhard2019characterizing}, thus the cycle space belongs to the learnable space. The second part is to show that every cut induces a gauge transformation~\cite{nielsen2021gate}, and thus the learnable space must be orthogonal to the cut space, which implies it lies in the cycle space.

We remark that Theorem~\ref{th:nogo} can be viewed as a corollary of Theorem~\ref{th:space}. 
This is because an individual Pauli fidelity $\lambda_a^{\mc G}$ whose Pauli pattern changes (\textit{i.e.}, $\pt(P_a)\ne\pt(\mc G(P_a))$) corresponds to an simple edge in the pattern transfer graph, which does not belong to the cycle space and is thus unlearnable. On the other hand, a Pauli fidelity without Pauli pattern change corresponds to a self-loop in the pattern transfer graph, which belongs to the cycle space by definition, and is thus learnable.

\medskip

Combing Theorem~\ref{th:space} with Lemma~\ref{le:complement} leads to the following.
\begin{corollary}\label{co:udf}
    The learnable and unlearnable degrees of freedom associated with $\mf G$ are given by
    \begin{equation}
        \mr{LDF}(\mf G) = |\mf G|\cdot 4^n-2^n+c(\mf G),\quad
        \mr{UDF}(\mf G) = 2^n-c(\mf G),
    \end{equation}
    where $c(\mf G)$ is the number of connected components of the pattern transfer graph associated with $\mf G$.
\end{corollary}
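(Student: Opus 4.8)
The corollary is an immediate dimension count once Theorem~\ref{th:space} is in hand, so I would first record the one-line derivation and then explain where the real work sits. By Theorem~\ref{th:space} we have $F_L\cong Z(G)$, hence $\mr{LDF}(\mf G)=\dim F_L=\dim Z(G)$. The pattern transfer graph has $|V|=2^n$ vertices and $|E|=|\Lambda|=|\mf G|\cdot 4^n$ edges, so Lemma~\ref{le:complement} gives $\dim Z(G)=|E|-|V|+c(\mf G)=|\mf G|\cdot4^n-2^n+c(\mf G)$, which is the claimed $\mr{LDF}$. Then $\mr{UDF}(\mf G)=|\Lambda|-\mr{LDF}(\mf G)=2^n-c(\mf G)$, which coincides with $\dim U(G)$ from the same lemma. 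Thus the entire content is inherited from Theorem~\ref{th:space}, whose proof is where I would concentrate the effort.

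For Theorem~\ref{th:space} the plan is to prove the two inclusions $Z(G)\subseteq F_L$ and $F_L\subseteq Z(G)$ separately, glued by the orthogonal decomposition $C_1(G)=Z(G)\oplus U(G)$ of Lemma~\ref{le:complement}. Since $F_L$ is a subspace (Lemma~\ref{le:learnable_is_space}), each inclusion only needs to be checked on a spanning set of the relevant space. For $Z(G)\subseteq F_L$ I would first reduce to \emph{circuits} (oriented cycles). The key observation is that the pattern transfer graph is balanced: a vertex of Hamming weight $w$ has in-degree and out-degree both equal to $|\mf G|\cdot3^{w}$, since for each gate the Paulis with input pattern $v$, and the Paulis whose image has pattern $v$, both number $3^{w}$ (using that $\mc G$ acts bijectively on $\Pn$). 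A balanced weakly connected digraph is strongly connected, so $G$ is a disjoint union of strongly connected components and Lemma~\ref{le:circuit} supplies a circuit basis of $Z(G)$. It then suffices to learn each circuit $C=(v_0,e_1,\dots,e_k,v_0)$ with $e_j\leftrightarrow(P_{a_j},\mc G_j)$, $\pt(P_{a_j})=v_{j-1}$, $\pt(\mc G_j(P_{a_j}))=v_j$. Because $\pt(\mc G_j(P_{a_j}))=v_j=\pt(P_{a_{j+1}})$, the two Paulis differ by single-qubit Cliffords, so there is a noiseless layer $\mc U_j$ with $\mc U_j(\mc G_j(P_{a_j}))=\pm P_{a_{j+1}}$. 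Interleaving $\mc G_j$ with $\mc U_j$ (the interleaved cycle benchmarking of the main text), preparing an eigenstate of $P_{a_1}$ and measuring $P_{a_1}$ after $d$ loops, yields $\E\langle P_{a_1}\rangle=A\,(\prod_j\lambda_{a_j}^{\mc G_j})^{d}$ with a depth-independent SPAM prefactor $A$ and a fixed sign absorbed in post-processing; an exponential fit extracts $\sum_j l_{a_j}^{\mc G_j}=\bm v_C\cdot\bm l$.

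For $F_L\subseteq Z(G)$ I would generalize the single-qubit depolarizing gauge of Theorem~\ref{th:nogo} to a \emph{pattern-diagonal} gauge map $\mc M(P_b)=\mu(\pt(P_b))\,P_b$ with $\mu>0$ and $\mu(\bm 0)=1$. Since every single-qubit unitary preserves Pauli patterns, $\mc M$ commutes with $\bigotimes_j\mc U_j$ and keeps single-qubit gates noiseless, and it maps Pauli channels to Pauli channels that stay physical by the positivity estimate of Eq.~\eqref{eq:positivity} when $\mu$ is taken close to $1$. Repeating the computation of Eq.~\eqref{eq:noisechannelupdate} gives ${\lambda_b^{\mc T}}'=\lambda_b^{\mc T}\,\mu(\pt(\mc T(P_b)))/\mu(\pt(P_b))$, so in log-coordinates the gauge shifts $\bm l\mapsto\bm l+\delta g$, where $g\coleq\log\mu$ is a vertex function and $(\delta g)(e)=g(w)-g(u)$ for $e=(u,w)$ is its graph coboundary. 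The decisive algebraic fact is that, as $g$ ranges over vertex functions with $g(\bm 0)=0$, the coboundaries $\delta g$ range over exactly the cut space $U(G)$. Indistinguishability of the gauge-related models then forces any learnable $\bm v$ to obey $\bm v\cdot\delta g=0$ for every cut, i.e. $\bm v\perp U(G)$, hence $\bm v\in Z(G)$.

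I expect the main obstacle to be this second inclusion: verifying that pattern-diagonal gauges respect all of Assumptions 1--4 and, more importantly, that they realize \emph{every} cut rather than only the $n$ coordinate cuts given by single-qubit depolarizing, which amounts to identifying the gauge shift with the coboundary operator and matching its image with the cut space (with the orientation and $g(\bm 0)=0$ conventions of Eq.~\eqref{eq:cut} lined up correctly). On the first-inclusion side the only nontrivial point is the reduction to circuits via the balanced-degree observation together with Lemma~\ref{le:circuit}, plus careful tracking of the twirl-induced signs.
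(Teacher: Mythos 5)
Your derivation of the corollary itself is exactly the paper's: the paper offers no proof beyond ``combining Theorem~\ref{th:space} with Lemma~\ref{le:complement},'' i.e.\ $\mr{LDF}=\dim Z(G)=|E|-|V|+c(\mf G)$ and $\mr{UDF}=|\Lambda|-\mr{LDF}=2^n-c(\mf G)$, which is your first paragraph verbatim. Where you genuinely diverge is in the supporting proof of Theorem~\ref{th:space}, in two places, both correctly. First, to obtain a circuit basis, the paper proves each weakly connected component is strongly connected via the finite order of Clifford gates ($\mc G^d=\mc I$, so every edge has a return path along powers of $\mc G$), whereas you use the balanced-degree argument (in-degree $=$ out-degree $=|\mf G|\cdot 3^w$ at every weight-$w$ vertex, and a weakly connected balanced digraph is strongly connected); both are sound, the paper's being more elementary and yours a cleaner structural observation. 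Second, for $F_L\subseteq Z(G)$, the paper builds one two-valued gauge map per cut ($\eta$ on $V_1$, $1$ on $V_2$) and spans the cut space by composing such transformations over a cut basis, whereas you introduce the general pattern-diagonal gauge $\mc M(P_b)=\mu(\pt(P_b))P_b$ and identify the gauge shift with the coboundary $\delta g$, $g=\log\mu$, whose image is the cut space. These are equivalent, but your unified version buys a cleaner statement, and your normalization $\mu(\bm 0)=1$ makes trace preservation of the transformed SPAM channels explicit, a point the paper handles only implicitly (its per-cut map is trace-preserving only when the all-zero pattern sits on the unscaled side of the cut). The physicality check via $\eta$ (resp.\ $\mu$) close to $1$ is shared with Eq.~\eqref{eq:positivity} in both routes.
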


Note that the unlearnable degrees of freedom always constitute an exponentially small portion, though they can grow exponentially. 

Examples of some gate sets are given in Table~\ref{tab:UDF} and Figure~\ref{fig:more_gate_sets}. One can notice some interesting properties. The UDF of CNOT and SWAP equals to $2$ and $1$, respectively, but a gate set containing both has $\mr{UDF}=2$. This means UDF is not ``additive''. The interdependence between different gates can give us more learnable degrees of freedom.
However, Corollary~\ref{co:udf} implies that the UDF of a gate set cannot be smaller than the UDF of any of its subset. 
This is because adding new gates can only decrease the number of connected components $c(\mf G)$ of the pattern transfer graph.

\begin{table}[h]
    \centering
    \begin{tabular}{|c|c|c|c|}
        \hline
        Number of qubits $n$ & Gate set $\mf G$ & Number of parameters $|\Lambda|=4^n|\mf G|$ &  $\mr{UDF}(\mf G)$\\
        \hline
        2 & CNOT & 16 & 2\\
        2 & SWAP & 16 & 1\\        
        2 & \{CNOT, SWAP\} & 32 & 2\\
        3 & $\mr{\{CNOT_{12},CNOT_{23},CNOT_{31}\}}$ & 192 & 6\\
        3 & $\mr{CIRC_3}$ & 64 & 4\\
        \hline
    \end{tabular}
    \caption{The unlearnable degrees of freedom of some gate sets. Here $\mr{CIRC_3}$ is the circular permutation on $3$ qubits. UDF is calculated by applying Corollary~\ref{co:udf} to the corresponding pattern transfer graph in Fig.~2 of main text and Fig.~\ref{fig:more_gate_sets}.}
    \label{tab:UDF}
\end{table}

\begin{figure}[htp]
    \centering
    \includegraphics[width=\columnwidth]{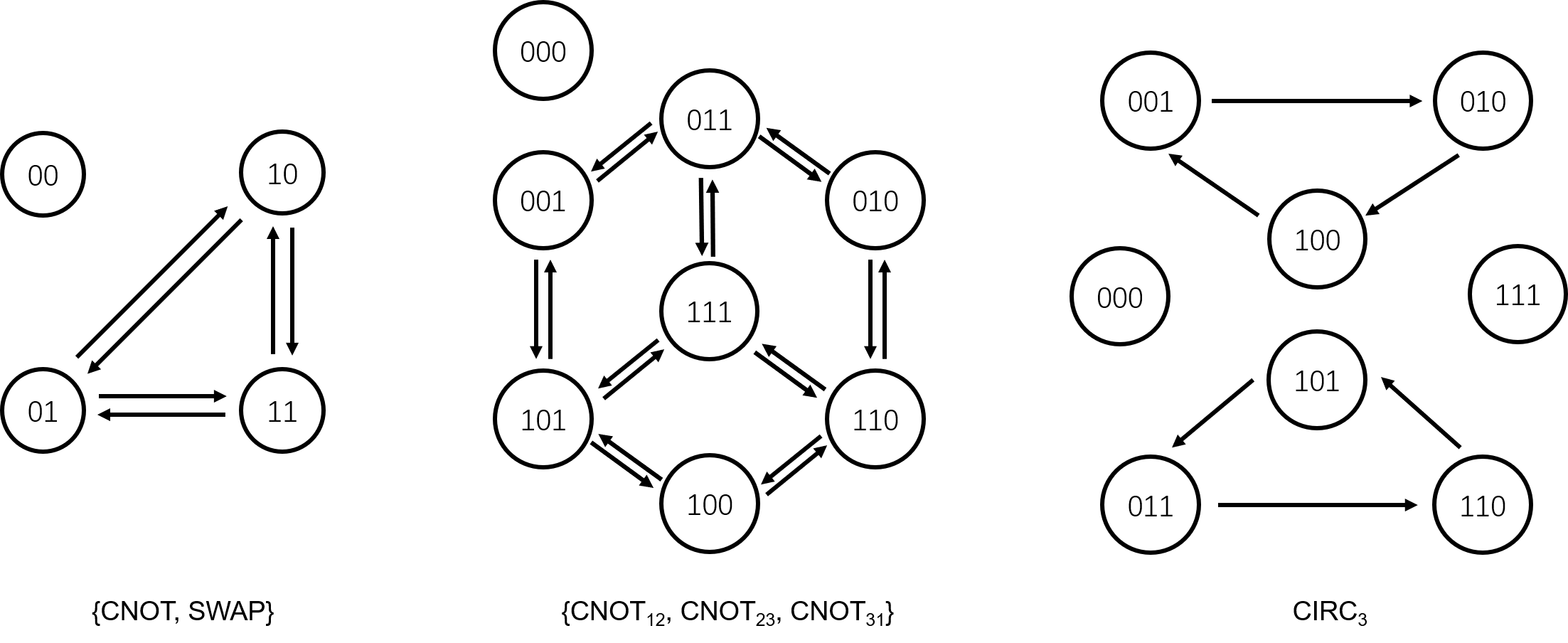}
    \caption{Pattern transfer graphs for $\mr{\{CNOT,~SWAP\}}$, $\mr{\{CNOT_{12},CNOT_{23},CNOT_{31}\}}$, and $\mr{CIRC}_3$. 
    For clarity, we omit labels of the edges, multiple edges, and self-loop. These omissions do not change the cut space of the graph. }
    \label{fig:more_gate_sets}
\end{figure}

\begin{proof}[Proof of Theorem~\ref{th:space}]
The proof is divided into showing $Z(G)\subseteq F_L$ and $F_L\subseteq Z(G)$ (up to the natural isometry between $F$ and $C_1(G)$).

\medskip

$Z(G)\subseteq F_L$: Roughly, this is equivalent to saying that all cycles are learnable. 
We will first show that the pattern transfer graph always has a circuit basis, and then show that the linear function associated with each circuit can be learned using a variant of cycle benchmarking protocol~\cite{erhard2019characterizing}.

We begin by showing that the pattern transfer graph $G$ associated with a gate set $\mf G$ is a union of strongly connected subgraphs. This is equivalent to saying that for any vertices $u,v\in V(G)$, if there is a path from $u$ to $v$, there must be a path from $v$ to $u$.
It suffices to show that for each edge $e=(u,v)$ there is a path from $v$ to $u$, since any path is just concatenation of edges.
By definition, the existence of $e=(u,v)$ implies there exists $P\in {\sf P}^n$ and $\mc G\in\mf G$ such that $\pt(P) = u$ and $\pt(Q) = v$ where $Q\coleq \mc G(P)$. Since a Clifford gate is a permutation on the Pauli group, there must exist some integer $d>0$ such that $\mc G^d = \mc I$, thus $P = \mc G^{d-1}(Q)$, which induces the following path from $v$ to $u$:
\begin{equation*}
(
    \pt(Q),~e_{Q,\mc G},~\pt(\mc G(Q)),~e_{\mc G(Q),\mc G},~ \pt(\mc G^2(Q)),~\cdots,~ \pt(\mc G^{d-2}(Q)),~ e_{\mc G^{d-1}(Q),\mc G},~ \pt(\mc G^{d-1}(Q))
).
\end{equation*}
One can verify this is a path according to the definition of $G$. This shows that $G$ is indeed a union of strongly connected subgraphs. According to Lemma~\ref{le:circuit}, $G$ has a circuit basis that spans the cycle space $Z(G)$.

Now we show that every circuit in $G$ represents a learnable function. 
Consider an arbitrary circuit $z = (v_0,e_1,v_1,e_2,v_2,...,v_{q-1},e_q,v_q\equiv v_0)$. For each $k=1...q$, the edge $e_k$ corresponds to a Pauli operator $P_k\in{\sf P}^n$ and a Clifford gate $\mc G_k \in \mf G$ such that $\pt(P_k) = v_{k-1}$ and $\pt(Q_k) = v_{k}$ where $Q_k\coleq \mc G_k(P_k)$.
On the other hand, since $\pt(Q_k)=\pt(P_{k+1})$, there exists a product of single qubit unitaries $\mc U_k$ such that $P_{k+1} = \mc U_k(Q_k)$ for $k=1...q$ (where we define $P_{q+1}\coleq P_1$, as $\pt(Q_q)=\pt(P_1)$ by assumptions). 
Consider the following gate sequence,
\begin{equation}
    \mc C \coleq \mc U_q\mc G_q\mc U_{q-1}\mc G_{q-1}\cdots\mc U_1\mc G_1
\end{equation}
One can see that $\mc C(P_1) = P_1$. Now we design the following experiments parameterized by a positive integer $m$,
\begin{itemize}
    \item Initial state: $\rho_0 = (I+P_1)/2^n$,
    \item POVM measurement: $E_{\pm 1} = (I\pm P_1)/{2}$,
    \item Circuit: $\mc C^m = \left(\mc U_q\mc G_q\mc U_{q-1}\mc G_{q-1}\cdots\mc U_1\mc G_1\right)^m$.
\end{itemize}
Consider the outcome distribution generated by running these experiments within a noise model $\mc N$.
\begin{equation}
    \begin{aligned}
      p^{(m)}_{\pm 1}(\mc N) &= \Tr\left( \widetilde{E}_{\pm 1} \widetilde{\mc C}^m  (\widetilde{\rho}_0)\right) \\
      &= \Tr \left( \frac{I\pm P_1}{2} \cdot\left( \mc E^M \circ \left(\mc U_q\widetilde{\mc G}_q
      \cdots\mc U_1\widetilde{\mc G}_1\right)^m\circ\mc E^S \right)
      \left(\frac{I+P_1}{2^n}\right) \right)\\
      &= \Tr\left(\frac{I\pm P_1}{2} \cdot\frac{I+\lambda^M_{P_1}
      \left(\lambda^{\mc G_q}_{P_q}\cdots\lambda^{\mc G_2}_{P_2}\lambda^{\mc G_1}_{P_1}\right)^m
      \lambda^S_{P_1}P_1}{2^n} \right)\\
      &= \frac{1\pm\lambda^M_{P_1}
      \left(\lambda^{\mc G_q}_{P_q}\cdots\lambda^{\mc G_2}_{P_2}\lambda^{\mc G_1}_{P_1}\right)^m
      \lambda^S_{P_1}}{2}.
    \end{aligned}
\end{equation}
The expectation value is
\begin{equation}
    \mbb E^{(m)}(\mc N) = \lambda^M_{P_1}
      \left(\lambda^{\mc G_q}_{P_q}\cdots\lambda^{\mc G_2}_{P_2}\lambda^{\mc G_1}_{P_1}\right)^m
      \lambda^S_{P_1}.
\end{equation}
If we take the ratio of expectation values of two experiments with consecutive $m$, we obtain (recall that all these Pauli fidelities are strictly positive by Assumption 4)
\begin{equation}
    \mbb E^{m+1}(\mc N)/\mbb E^{m}(\mc N) = \lambda^{\mc G_q}_{P_q}\cdots\lambda^{\mc G_2}_{P_2}\lambda^{\mc G_1}_{P_1}.
\end{equation}
This implies that if two noise models have different values for the product of Pauli fidelities $\lambda^{\mc G_q}_{P_q}\cdots\lambda^{\mc G_2}_{P_2}\lambda^{\mc G_1}_{P_1}$, the above experiments would be able to distinguish between them. Therefore, $\lambda^{\mc G_q}_{P_q}\cdots\lambda^{\mc G_2}_{P_2}\lambda^{\mc G_1}_{P_1}$ is a learnable function. 
By taking the logarithm of this expression, we see that $f(\bm l)\coleq\sum_{k=1}^q l_{P_q}^{\mc G_q}$ is a learnable linear function of the logarithmic Pauli fidelities.
Notice that $f(\bm l)$ exactly corresponds to the circuit of $z$ according to the natural isometry between $F$ and $C_1(G)$. This tells us that every circuit in $G$ indeed corresponds to a learnable linear function. 
Combining with the fact that the circuits in $G$ span the cycle space $Z(G)$, and the fact that learnable functions are closed under linear combination (Lemma~\ref{le:learnable_is_space}), we conclude that $Z(G)\subseteq F_L$.

\medskip

$F_L\subseteq Z(G)$: 
For this part, we just need to show that $F_L$ is orthogonal to the cut space $U(G)$, which is the orthogonal complement of the cycle space $Z(G)$. To show this, we will construct a gauge transformation for each element of $U(G)$. The definition of learnability then requires a learnable linear function to be orthogonal to all gauge transformations, thus orthogonal to the entire cut space.

Consider a cut $V = V_1 \cup V_2$ (such that there is at least one edge between $V_1$ and $V_2$). We define the \emph{gauge transform map} $\mc M$ as the following Pauli diagonal map,
\begin{equation}
    \mc M(P) \coleq \left\{
        \begin{aligned}
          \eta P, \quad& \text{if}~\pt(P)\in V_1,\\
          P, \quad& \text{if}~\pt(P)\in V_2,
        \end{aligned}
    \right.\quad\forall P\in{\sf P}^n,
\end{equation}
for a positive parameter $\eta\ne 1$. The gauge transformation induced by $\mc M$ is defined in the same way as Eq.~\eqref{eq:gauge_trans}. 
We will show that there exists two noise models satisfying all the assumptions that are related by a gauge transformation (thus indistinguishable) but yields different values for the function corresponding to the cut $V_1\cup V_2$.

Starting with a noise model $\mc N = \{\mc E^S,\mc E^M, \Lambda\}$, we first calculate the gauge transformed noise model $\mc N'$.
The SPAM noise channels are transformed as
\begin{equation}\label{eq:spam_update_2}
    \mc E^{S'} = \mc M\mc E^S,\quad \mc E^{M'} = \mc E^M\mc M^{-1},
\end{equation}
which are still Pauli diagonal maps. 
Using exactly the same argument as in the proof of Theorem~\ref{th:nogo}, by choosing $\eta$ to be sufficiently close to $1$, these transformed maps are guaranteed to be CPTP and satisfy Assumption~4.

Secondly, the single-qubit unitaries are transformed as $\mc U' = \mc M\mc U\mc M^{-1}$. Calculate the following inner product for any $P,Q\in{\sf P}^n$,
\begin{equation}\label{eq:gauge_trans_2}
\begin{aligned}
    \Tr(P\cdot\mc U'(Q))&=\Tr(\mc M^\dagger(P)\cdot \mc U(\mc M^{-1}(Q)) )\\
    &= \eta^{\bm 1_{V_1}[\pt(P)]}(\eta^{-1})^{\bm 1_{V_1}[\pt(Q)]}\Tr(P\cdot\mc U(Q)).
\end{aligned}
\end{equation}
Here $\bm 1_{V_1}$ is the indicator function of $V_1$. 
We see that $\Tr(P\cdot\mc U'(Q))=\Tr(P\cdot\mc U(Q))$ if $\pt(P)=\pt(Q)$.
A crucial observation is that a product of single-qubit unitaries can never change the pattern of the input Pauli. More precisely, $\mc U(Q)$ is a linear combination of Pauli operators with the same pattern as $Q$. Therefore, if $\pt(P)\ne\pt(Q)$, we would have $\Tr(P\cdot\mc U'(Q))=\Tr(P\cdot\mc U(Q))=0$.
Combining the two cases, we conclude $\mc U'=\mc U$, \textit{i.e.}, the single-qubit unitaries are still noiseless in $\mc N'$.

Finally, the noisy Clifford gates are transformed as
\begin{equation}
    \begin{aligned}
        \widetilde{\mc G}'&= \mc M{\mc G}\Lambda_{\mc G}\mc M^{-1}\\
        &= \mc G\mc G^{-1}\mc M{\mc G}\Lambda_{\mc G}\mc M^{-1}\\
        &\eqcol \mc G\Lambda_{\mc G}'
    \end{aligned}
\end{equation}
where the transformed noise channel $\Lambda_{\mc G}'\coleq \mc G^{-1}\mc M{\mc G}\Lambda_{\mc G}\mc M^{-1}$ is a Pauli diagonal map. We now calculate its Pauli eigenvalues. For $P\in{\sf P}^n$,
\begin{equation}\label{eq:f_update_2}
    \begin{aligned}
        \Lambda_{\mc G}'(P) &=  \mc G^{-1}\mc M{\mc G}\Lambda_{\mc G}\mc M^{-1}(P)\\
        &=\eta^{\bm 1_{V_1}[\pt(\mc G(P))]}(\eta^{-1})^{\bm 1_{V_1}[\pt(P)]}\lambda^{\mc G}_P~P\\
        &=\left\{
        \begin{aligned}
            \eta \lambda_P^{\mc G},\quad& \pt(P)\in V_1,~\pt(\mc G(P))\in V_2.\\
            \eta^{-1} \lambda_P^{\mc G},\quad& \pt(P)\in V_2,~\pt(\mc G(P))\in V_1.\,\\
             \lambda_P^{\mc G},\quad& \text{otherwise}.\\ 
        \end{aligned}
        \right.
    \end{aligned}
\end{equation}
Again, Assumption 4 guarantees that $\Lambda_{\mc G}'$ is a CPTP map satisfying all of our noise assumptions as long as $\eta$ is sufficiently close to $1$. We omit the argument here as it is the same as in the previous proof.
Define $t_p \coleq \log \eta$ where $p$ denotes the cut $V_1\cup V_2$. The above gauge transformation of the log Pauli fidelity can be written as
\begin{equation}
    \bm l' = \bm l + t_p \bm v_p
\end{equation}
where $\bm v_p$ is the cut vector of $V = V_1\cup V_2$ as defined in Eq.~\eqref{eq:cut}.

\medskip

We have just defined a gauge transformation $\mc M_p$ for an arbitrary cut $p$.
Fix a basis of the cut space $B$ (where vectors in $B$ has the form in Eq.~\eqref{eq:cut}). For a generic element of the cut space $\bm v\in U(G)$, we can decompose it as $\bm v = \sum_{p\in B} t_p \bm v_p$ ($t_p\in\mathbb{R}$). We define the gauge transformation $\mc M_{\bm v}$ associated with $\bm v$ as a consecutive application of the gauge transformations $\{\mc M_p\}$ for each $p\in B$, each with parameter $t_p$. Here we assume that each $|t_p|$ is sufficiently small, as otherwise we can rescale the vector. This implies that $\mc M_{\bm v}$ is a valid gauge transformation.
The effect of such a transformation is
\begin{equation}
    \bm l' = \bm l + \bm v.
\end{equation}

Now, Definition~\ref{de:learnability} implies that a learnable function $\bm f$ must remain unchanged under gauge transformations (as they result in indistinguishable noise models), which means that $\bm f\cdot \bm l' = \bm f\cdot \bm l$. Thus, for all $\bm f\in F_L$, and all $\bm v \in U(G)$, we must have
\begin{equation}
    \bm f\cdot \bm v = \bm f\cdot \bm l' - \bm f\cdot \bm l = 0.
\end{equation}
That is, $F_L$ must be orthogonal to the cut space $U(G)$. According to Lemma~\ref{le:complement}, $Z(G)$ is the orthogonal complement of $U(G)$, so we conclude that $F_L\subseteq Z(G)$. This completes the second part of our proof. 

\end{proof}

\subsection{Learnability under no-crosstalk assumption}\label{sec:no_crosstalk}
    As we commented before, the way we define the gate noise captures a general form of crosstalk~\cite{sarovar2020detecting}. One may ask, if we further make a favorable assumption that gate noise has no crosstalk, would this make the learning of noise much easier.
    To consider this rigorously, we introduce the following optional assumption. See Fig.~\ref{fig:crosstalk} for an illustration.
    \begin{itemize}
        \item \textbf{Assumption 5} (No crosstalk.) For any $\mc G\in\mf G$ that acts non-trivially only on a $k$-qubit subspace, the associated Pauli noise channel also acts non-trivially only on that subspace. In other words, if $\mc G = \mc G'\otimes \mc I$, we have $\widetilde{\mc G} = \left( \mc G'\circ\Lambda_{\mc G} \right)\otimes \mc I$ where $\Lambda_{\mc G}$ is an $k$-qubit Pauli channel
        depending only on $\mc G$ and the (ordered) subset of qubits on which $\mc G$ acts.
    \end{itemize}
    
    \begin{figure}[!htp]
        \centering
        \includegraphics[width=\columnwidth]{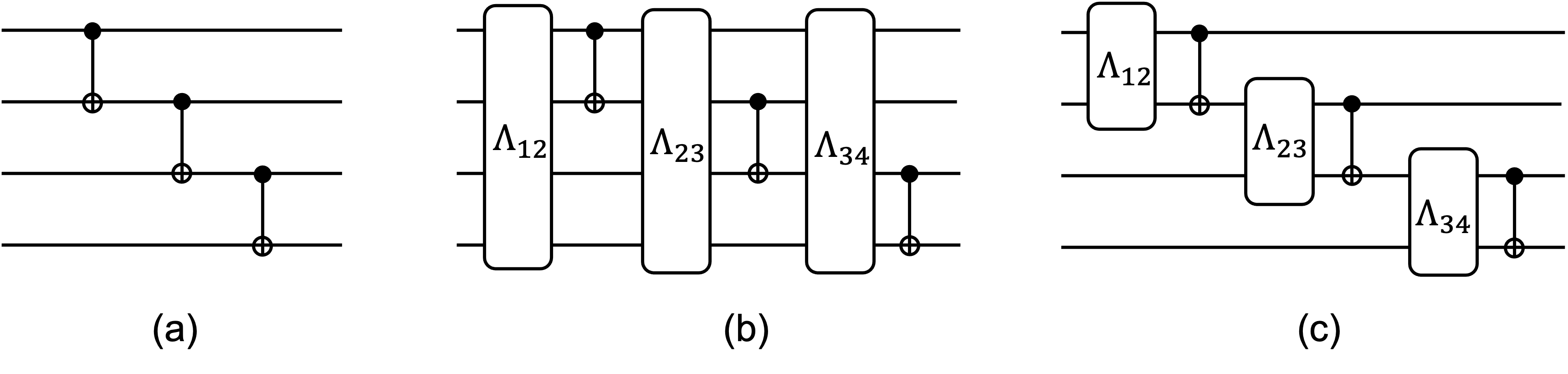}
        \caption{Illustration of the crosstalk model. (a) A $4$-qubit circuit consists of three ideal CNOT gates. (b) Full crosstalk. The noise channels are $4$-qubit and depends on the qubits the CNOT acts on. (c) No crosstalk. The noise channel only acts on a $2$-qubit subspace. It can still depend on the qubits the CNOT acts on.}
        \label{fig:crosstalk}
    \end{figure}

    Assumption 5 reduces the number of independent parameters of a noise model. 
    One might expect certain unlearnable functions to become learnable with this assumption.
    Here, we show that the simple criteria of learnablity given in Theorem~\ref{th:nogo} still hold even in this case, as stated in the following proposition.
    
    \begin{proposition}\label{prop:nogo_nocross}
        With Assumption 1-5, for any $k$-qubit Clifford gate $\mc G$ and Pauli operator $P_a$, the Pauli fidelity $\lambda_a^{\mc G}$ is unlearnable if and only if $\mc G$ changes the pattern of $P_a$, \textit{i.e.}, $\pt(\mc G(P_a))\ne \pt(P_a)$. 
    \end{proposition}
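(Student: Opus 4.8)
The plan is to show that Proposition~\ref{prop:nogo_nocross} follows from the proof of Theorem~\ref{th:nogo} almost verbatim, the only genuinely new ingredient being a check that the gauge transformation used there respects Assumption~5. For the ``only if'' direction (pattern preserved $\Rightarrow$ learnable) nothing needs to be done: imposing Assumption~5 only shrinks the class of admissible noise models, so any function learnable over all models satisfying Assumptions~1--4 remains learnable over the subclass that additionally satisfies Assumption~5. Concretely, the cycle-benchmarking experiment constructed in the ``only if'' part of Theorem~\ref{th:nogo} distinguishes any two noise models with different $\lambda_a^{\mc G}$, and in particular any two no-crosstalk models, so $\lambda_a^{\mc G}$ stays learnable whenever $\pt(\mc G(P_a))=\pt(P_a)$.

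For the ``if'' direction (pattern changes $\Rightarrow$ unlearnable) I would reuse the same gauge transformation $\mc M = \mc D_i\otimes\mc I_{[n]\backslash i}$ from Eq.~\eqref{eq:dep_trans}, where $i$ is a qubit on which $\pt(P_a)$ and $\pt(\mc G(P_a))$ differ. The indistinguishability and the positivity/CPTP bookkeeping are inherited unchanged; the only thing requiring justification is that the transformed model $\mc N'$ still satisfies Assumption~5, i.e.\ that each transformed gate-noise channel $\Lambda_{\mc T}'$ of Eq.~\eqref{eq:noisechannelupdate} again acts non-trivially only on the support $S_{\mc T}$ of $\mc T$. The guiding observation is that a Pauli diagonal map factorizes as (a map on $S$)$\,\otimes\mc I$ precisely when its Pauli fidelities $\lambda_P$ depend only on the restriction $P|_{S}$, so the factorization claim reduces to a statement about which coordinates the transformed fidelities depend on.

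The key step is then a short case analysis on $\mc T$ via the update rule Eq.~\eqref{eq:paulifidelityupdate}. If $i\notin S_{\mc T}$, then $\mc T$ acts as identity on qubit $i$, so $\pt(\mc T(P))_i=\pt(P)_i$ for every $P$; this is the third case of Eq.~\eqref{eq:paulifidelityupdate}, giving $\Lambda_{\mc T}'=\Lambda_{\mc T}$, and factorization on $S_{\mc T}$ is preserved trivially. If $i\in S_{\mc T}$, then writing $\mc T=\mc T'\otimes\mc I$ one sees that both $\pt(P)_i$ and $\pt(\mc T(P))_i$ are functions of $P|_{S_{\mc T}}$ alone (since $\mc T(P)|_{S_{\mc T}}=\mc T'(P|_{S_{\mc T}})$), while $\lambda_P^{\mc T}$ already depends only on $P|_{S_{\mc T}}$ by Assumption~5; hence the transformed fidelity ${\lambda_P^{\mc T}}'$ of Eq.~\eqref{eq:paulifidelityupdate} depends only on $P|_{S_{\mc T}}$, so $\Lambda_{\mc T}'$ still factorizes on $S_{\mc T}$. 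This confirms that $\mc N'$ is a valid no-crosstalk model. The SPAM channels carry no factorization constraint under Assumption~5, and strict positivity of all fidelities and error rates is secured by taking $\eta$ close to $1$ exactly as before.

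Finally I would verify that the transformation genuinely changes the target fidelity. Because $\pt(\mc G(P_a))_i\neq\pt(P_a)_i$, the gate $\mc G$ must act non-trivially on qubit $i$, so $i\in S_{\mc G}$, and the first two cases of Eq.~\eqref{eq:paulifidelityupdate} give ${\lambda_a^{\mc G}}'=\eta^{\pm 1}\lambda_a^{\mc G}\neq\lambda_a^{\mc G}$. Thus $\mc N$ and $\mc N'$ are indistinguishable no-crosstalk models assigning different values to $\lambda_a^{\mc G}$, establishing unlearnability. I expect the factorization check of the third paragraph to be the main (indeed essentially the only) obstacle: the remainder of the argument is a direct transcription of Theorem~\ref{th:nogo}, and the whole point of the proposition is that the extra no-crosstalk structure survives the very gauge map built there.
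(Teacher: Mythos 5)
Your proposal is correct and takes essentially the same approach as the paper: both directions reuse the machinery of Theorem~\ref{th:nogo} verbatim, and the unlearnable direction hinges on the same case analysis (whether the depolarized qubit $i$ lies in the support of $\mc T$) to confirm that the gauge map $\mc D_i\otimes\mc I$ preserves Assumption~5. The only stylistic difference is that the paper checks factorization at the operator level ($\mc M$ commutes with $\widetilde{\mc T}$ when $i\notin S_{\mc T}$, and conjugation stays within the tensor factor when $i\in S_{\mc T}$), whereas you check it via the dependence of the transformed fidelities on $P|_{S_{\mc T}}$ -- these are equivalent.
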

    \begin{proof}
        We just need to modify the proof of Theorem~\ref{th:nogo}.
        For the ``only if'' part, restrict our attention to the $k$-qubit subsystem that $\mc G$ acts on, and do a cycle benchmarking protocol as in the original proof. We can easily conclude that $\lambda_a^{\mc G}$ is learnable if $\pt(P_a) = \pt(\mc G(P_a))$.
        
        For the ``if'' part, construct the same gauge transformation map as in the original proof.
        That is, for an index $i\in[n]$ such that $\pt(P_a)_i\ne \pt(\mc G(P_a))_i$, let $\mc M = \mc D_i\otimes\mc I_{[n]\backslash i}$ where $D_i$ is the single-qubit deplorizing channel on the $i$th qubit with some parameter $\eta$. 
        With the no-crosstalk assumption, a generic $k$-qubit noisy Clifford gate $\widetilde{\mc T}$ transforms as
        \begin{equation}
            \widetilde{\mc T}\otimes\mc I \mapsto \mc M\circ (\widetilde{\mc T}\otimes\mc I)\circ \mc M^{-1}.
        \end{equation}        
        If $\mc T$ does not act on the $i$th qubit, $\mc M$ commutes with $\widetilde{\mc T}$ and the noisy Clifford gate remains unchanged. 
        If $\mc T$ acts non-trivially on the $i$th qubit,
        \begin{equation}
            \widetilde{\mc T}\otimes\mc I \mapsto  (\mc D_i\circ\widetilde{\mc T}\circ\mc D_i^{-1})\otimes\mc I.
        \end{equation}
        This means the transformed noise channel acts non-trivially only on the $k$-qubit subsystem that $\mc G$ acts on, thus satisfies the no-crosstalk assumption.
        The Pauli fidelities of the noise channel will be updated as Eq.~\eqref{eq:paulifidelityupdate}.
        Following the same argument of the original proof, we conclude that $\lambda_{a}^{\mc G}$ is unlearnable if $\pt(P_a) \neq \pt(\mc G(P_a))$.
    \end{proof}

    It is also possible to generalize the graph theoretical characterization in Theorem~\ref{th:space} to the no-crosstalk case.
    One challenge in this case is that, different edges in the pattern transfer graph no longer stand for independent variables.
    For example, consider a $3$-qubit system and a CNOT on the first two qubits.
    Since $\mr{CNOT}(XI) = XX$, we would have the following two edges in the pattern transfer graph 
    $$e_{XII,\mr{CNOT}\otimes\mc I} = (100,110),\quad e_{XIX,\mr{CNOT}\otimes\mc I} = (101,111).$$
    However, with the no-crosstalk assumption, we have 
    \begin{equation}
        \lambda_{XII}^{\mr{CNOT}\otimes \mc I} = 
        \lambda_{XIX}^{\mr{CNOT}\otimes \mc I} =
        \lambda_{XI}^{\mr{CNOT}},
    \end{equation}
    which means the above two edges represent the same Pauli fidelity.
    As a result, a gauge transformation (as defined in the proof of Theorem~\ref{th:space}) that changes $\lambda_{XII}$ and $\lambda_{XIX}$ differently is no longer a valid transformation.
    In other word, a cut represents a valid gauge transformation only if it cuts through all the edges for the same Pauli fidelity simultaneously.
    This could decrease the number of unlearnable degrees of freedom.
    We leave the precise characterization of the learnable space with no-crosstalk assumptions as an open question.
    It is also interesting to study the learnability under other practical assumptions about the Pauli noise model, such as the sparse Pauli-Lindbladian model~\cite{berg2022probabilistic} and the Markovian graph model~\cite{flammia2020efficient,harper2020efficient}.

    \subsection{Learnability of Pauli error rates}
    We have been focusing on the learnability of Pauli fidelities $\bm\lambda$. One may ask similar questions about Pauli error rates $\bm p$. 
    It turns out that, at least in the weak-noise regime (\textit{i.e.}, $\lambda_a$ close to $1$), the learnability of $\bm p$ is $\bm \lambda$ are highly related. To see this, note that
    \begin{equation}
        \begin{aligned}
            p_a &= \frac{1}{4^n}\sum_{b}(-1)^\expval{a,b}\lambda_b\\
            &\approx \frac{1}{4^n}\sum_{b}(-1)^\expval{a,b}(\log\lambda_b + 1)\\
            &=\frac{1}{4^n}\sum_{b}(-1)^\expval{a,b} l_b + \delta_{a,\bm 0},
        \end{aligned}
    \end{equation}
    which means that $p_a$ is approximately a linear function of the logarithmic Pauli fidelity $\bm l$. 
    Therefore, one can in principle use Theorem~\ref{th:space} to completely decide the learnability of any Pauli error rates (with weak-noise approximation).
    Furthermore, since the Walsh-Hadamard transformation is invertible, different $p_a$ corresponds to linearly-independent function of $\bm l$. 
    This means that the number of linearly independent equations we can obtain about the Pauli error rates is the same as the learnable degrees of freedom of the Pauli fidelities.
    In Table~\ref{tab:CNOT_full}, we list a basis for all the learnable Pauli fidelities/Pauli error rates. One can see that there is an exact correspondence between these two. We leave a fully general argument for future study.
    
    \begin{table}[!htp]
        \centering
        \begin{tabular}{|c|c|}
            \hline
            Learnable log Pauli fidelities  & $l_{II},l_{ZI},l_{IX},l_{ZX},
            l_{XZ},l_{YY},l_{XY},l_{YZ},$ 
            \\&$l_{IZ}+l_{ZZ},l_{IY}+l_{ZY},l_{IZ}+l_{ZY},l_{XI}+l_{XX},l_{YI}+l_{YX},l_{XI}+l_{YX}$ \\
            \hline
            Learnable Pauli error rates  & $p_{II},p_{ZI},p_{IX},p_{ZX},
            p_{XZ},p_{YY},p_{XY},p_{YZ},$ 
            \\ (approximately)&$p_{IZ}+p_{ZZ},p_{IY}+p_{ZY},p_{IZ}+p_{ZY},p_{XI}+p_{XX},p_{YI}+p_{YX},p_{XI}+p_{YX}$ \\
            \hline
        \end{tabular}
        \caption{A complete basis for the learnable linear functions of log Pauli fidelities and Pauli error rates (the latter is approximate) for a single CNOT gate.}
        \label{tab:CNOT_full}
    \end{table}

\section{Additional details about the numerical simulations}\label{sec:numerics}

In this section, we provide more details about the numerical simulations mentioned in the main text. The simulation is conducted using \texttt{qiskit}~\cite{Qiskit}, an open-source Python package for quantum computing. We simulate a two-qubit system where single-qubit Clifford gates are noiseless, and CNOT is subject to amplitude damping channels on both qubits. Note that amplitude damping is not Pauli noise, but we apply randomized compiling and will only estimate its Pauli diagonal part. We also note that, \texttt{qiskit} adds the noise channel \emph{after} gate by default, but our theory assume the noise to be \emph{before} gate. These two models can be easily converted between each other via
\begin{equation}
    \mc G\circ\Lambda_{\mc G} = (\mc G\circ\Lambda_{\mc G}\circ\mc G^{\dagger})\circ\mc G = \Lambda_{\mc G}'\circ\mc G.
\end{equation}
If $\mc G$ is Clifford, $\Lambda_{\mc G}$ is a Pauli channel if and only if $\Lambda_{\mc G}'$ is a Pauli channel. In the following, we will be consistent with our theory and assume the noise to be before gate.
Besides, we let the measurement to have $0.3\%$ bit-flip rate on each qubit and the state-preparation to be noiseless.

Fig.~\ref{fig:main_sim_cbraw} shows the estimates collected using standard CB and interleaved CB (circuits shown in Fig.~1 of main text). Compared to the true values, we see that both simulations yields accurate predictions of the learnable Pauli fidelities.

\begin{figure}[!htp]
    \centering
    \includegraphics[width=\linewidth]{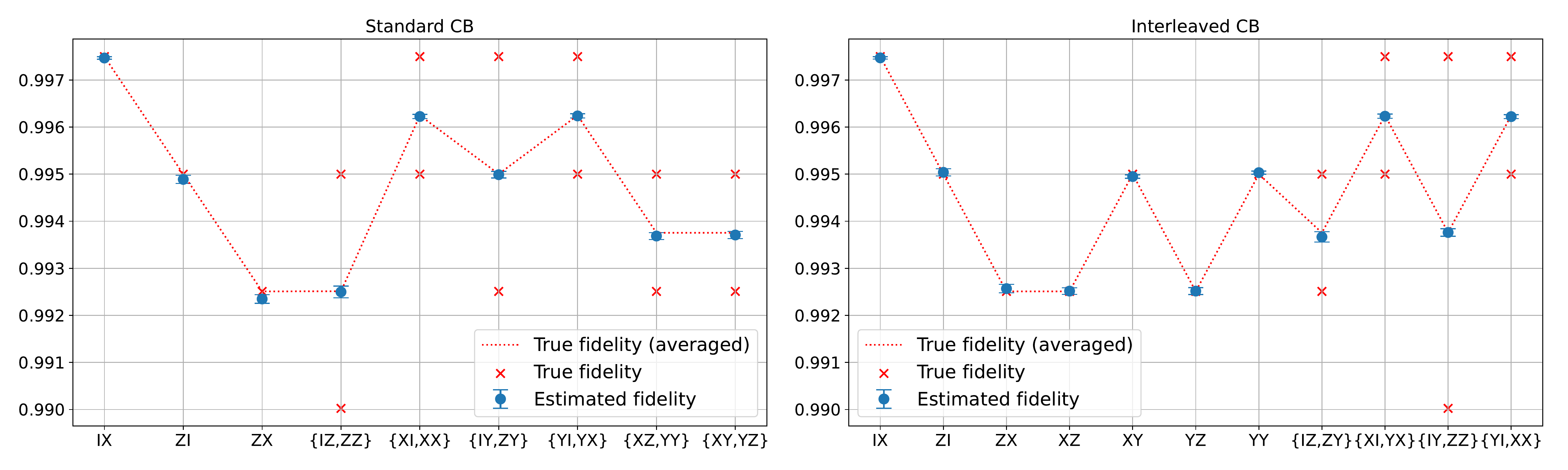}
    \caption{Numerical estimates of Pauli fidelities of a CNOT gate via standard CB (left) and CB with interleaved gates (right), using circuits shown in Fig.~1 of main text. Each Pauli fidelity is fitted using seven different circuit depths $L=[2,2^2,...,2^7]$. For each depth $C=30$ random circuits and $200$ shots of measurements are used.
    The red cross shows the true fidelities and the red dash line shows the average of true fidelities within any two-Pauli group.
    }
    \label{fig:main_sim_cbraw}
\end{figure}

Fig.~\ref{fig:main_sim_cbfeasible} (a) calculates the physically feasible region according to the estimates in terms of $\{\lambda_{XX},\lambda_{ZZ}\}$, using approaches discussed in the main text.
Due to the special structure of the twirled amplitude damping noise (no $Z$-error), the feasible region for $\lambda_{XX}$ is extremely narrow. To eliminate the effect of statistical error, we allow a smoothing parameter $\varepsilon$ in calculating the physical region, making the constraints to be $p_a\ge-\varepsilon$. Here $\varepsilon$ is chosen to be the largest standard deviation in estimating the learnable Pauli fidelities. In Fig.~\ref{fig:main_sim_cbfeasible} (b)(c) we see that the true fidelity indeed falls into the physical region and is actually close to the lower-left corner of the physical region. 

\begin{figure}[!htp]
    \centering
    \subfloat[feasible region]{
    \centering
    \includegraphics[width=0.4\linewidth]{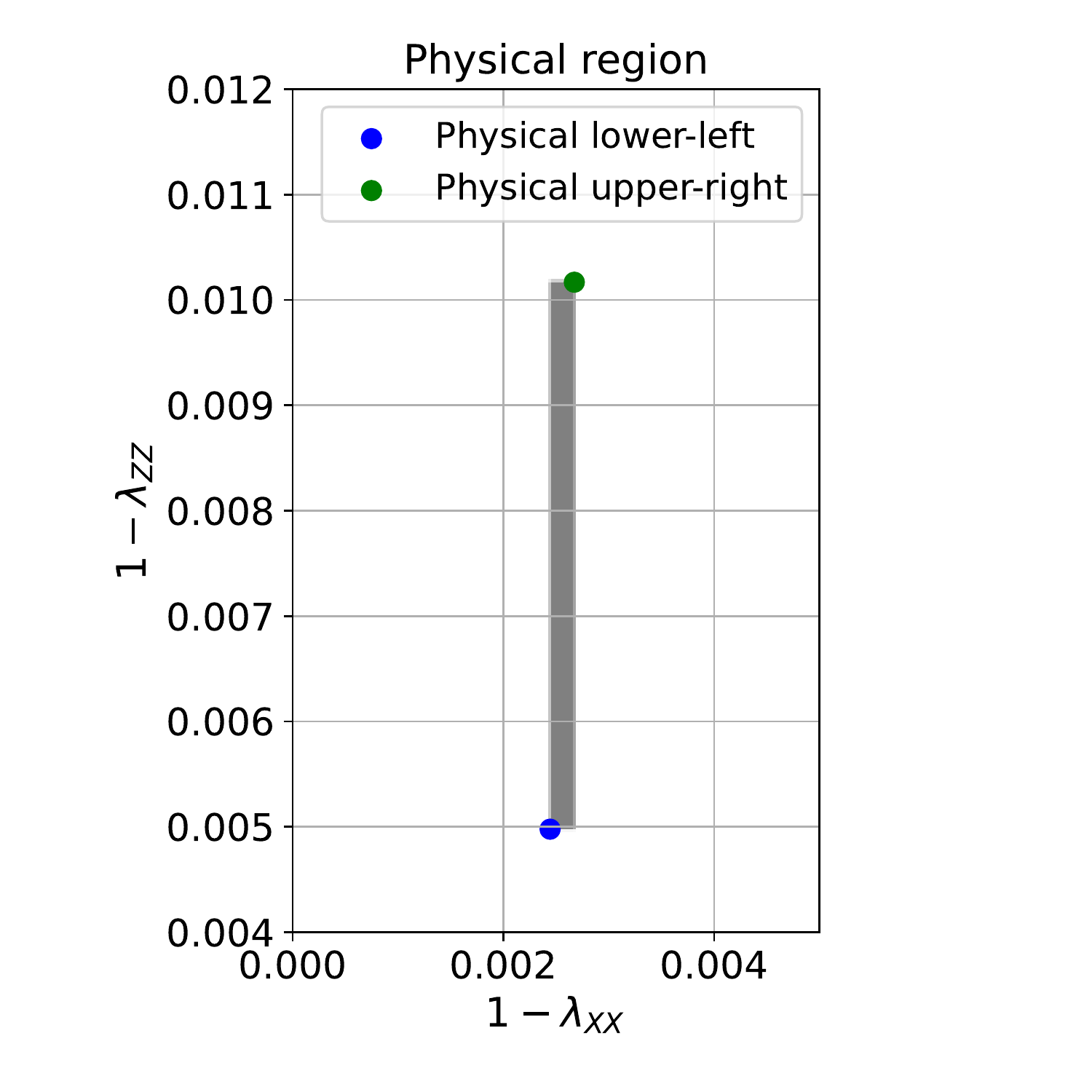}
    }\\
    \subfloat[Pauli fidelities]{
    \centering
    \includegraphics[width=0.5\linewidth]{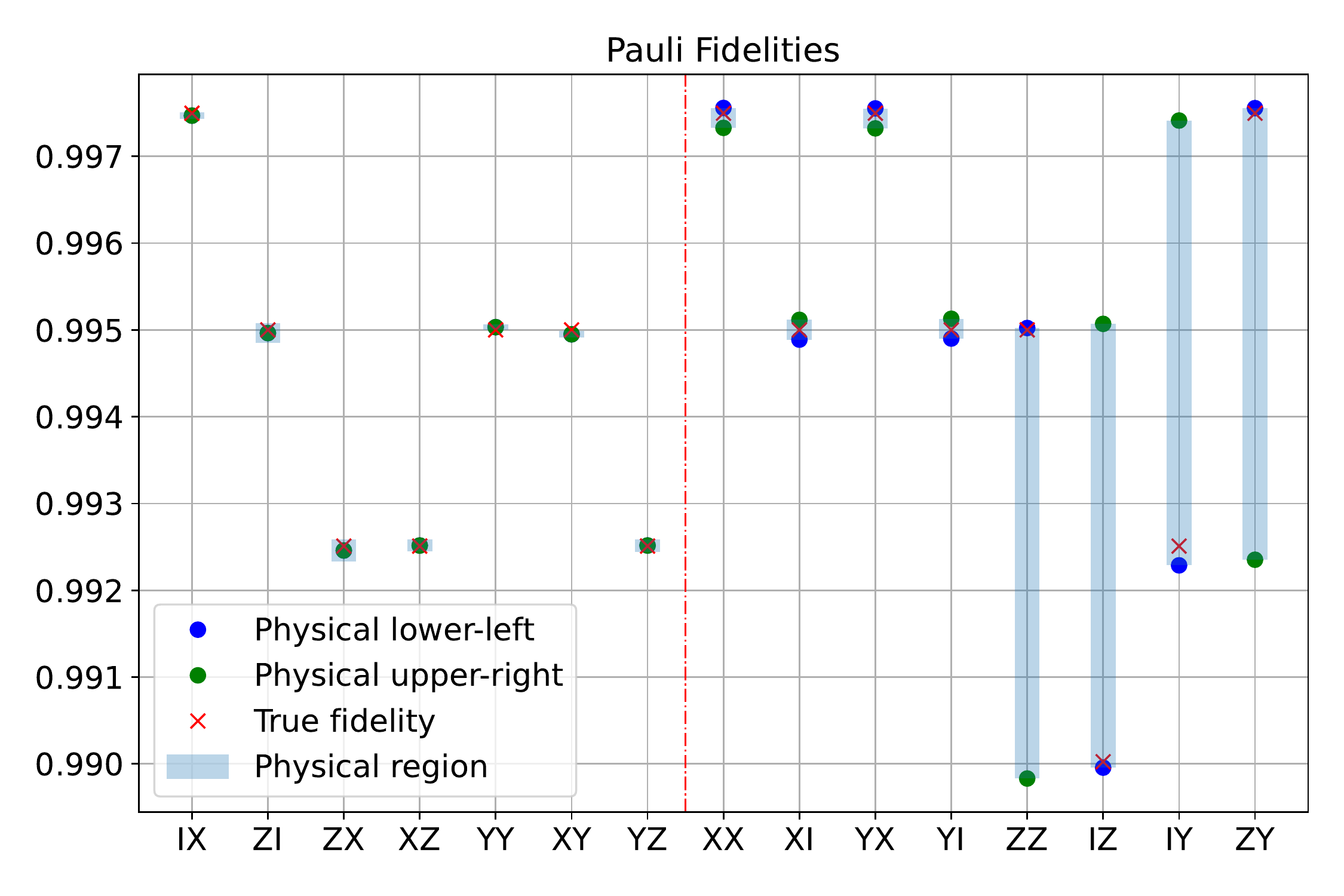}
    }
    \subfloat[Pauli errors]{
    \centering
    \includegraphics[width=0.5\linewidth]{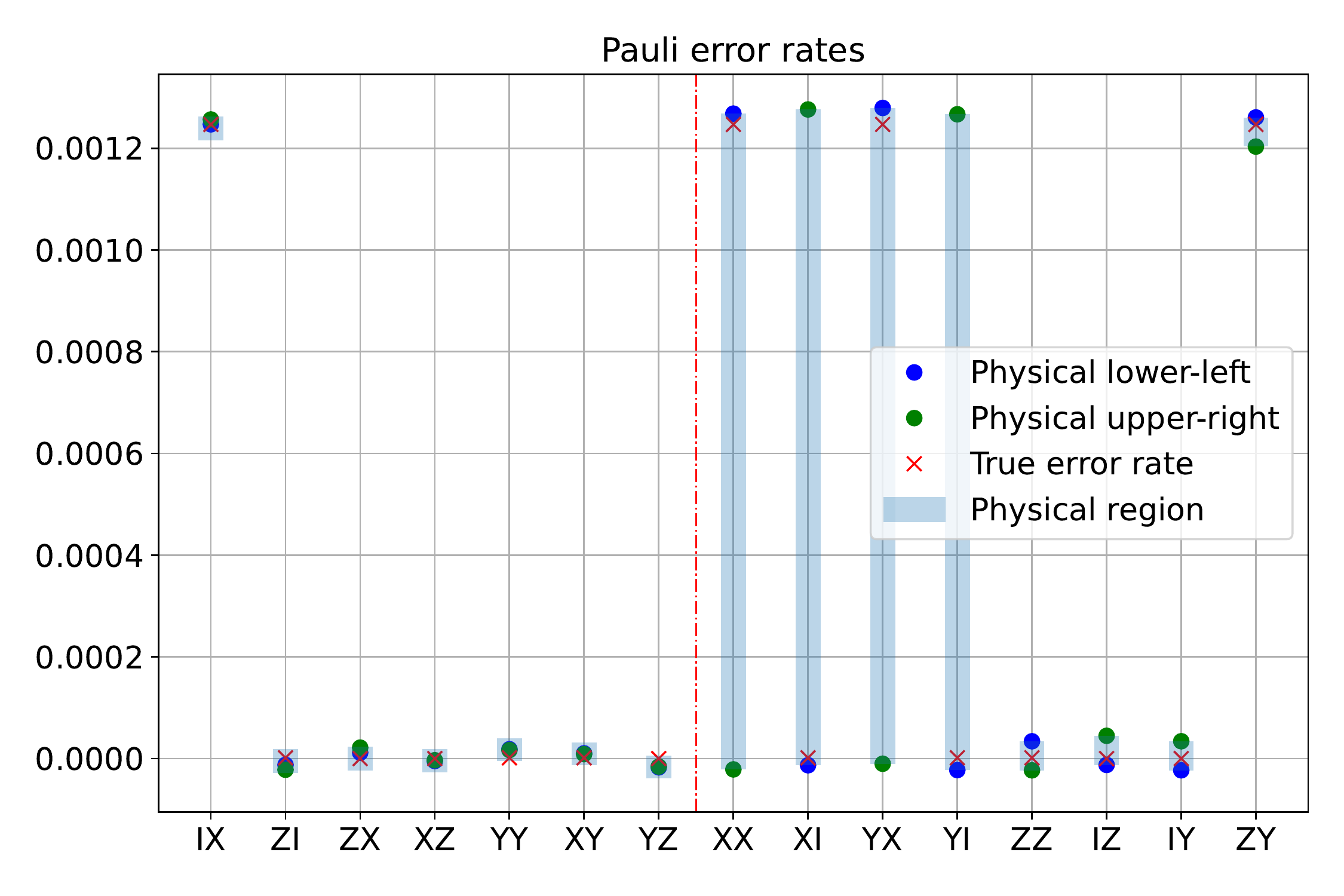}
    }
    \caption{Feasible region of the learned Pauli noise model, using data from Fig.~\ref{fig:main_sim_cbraw}. (a) Feasible region of the unlearnable degrees of freedom in terms of $\lambda_{XX}$ and $\lambda_{ZZ}$. (b) Feasible region of individual Pauli fidelities. (c) Feasible region of individual Pauli errors.}
    \label{fig:main_sim_cbfeasible}
\end{figure}

Fig.~\ref{fig:app_sim_intercept} shows the simulation results of intercept CB. We see that, we obtain an accurate estimate even for the unlearnable Pauli fidelities. Besides, the estimate lies inside the physically feasible region up to a standard deviation.
This shows that intercept CB should work well in resolving the unlearnability if we do have access to noiseless state-preparation (and the method is robust against measurement noise). Therefore, failure of this method in experiment implies a non-negligible state-preparation error, as discussed in the main text.

\begin{figure}[!htp]
    \centering
    \subfloat[]{
    \centering
    \includegraphics[width=0.5\linewidth]{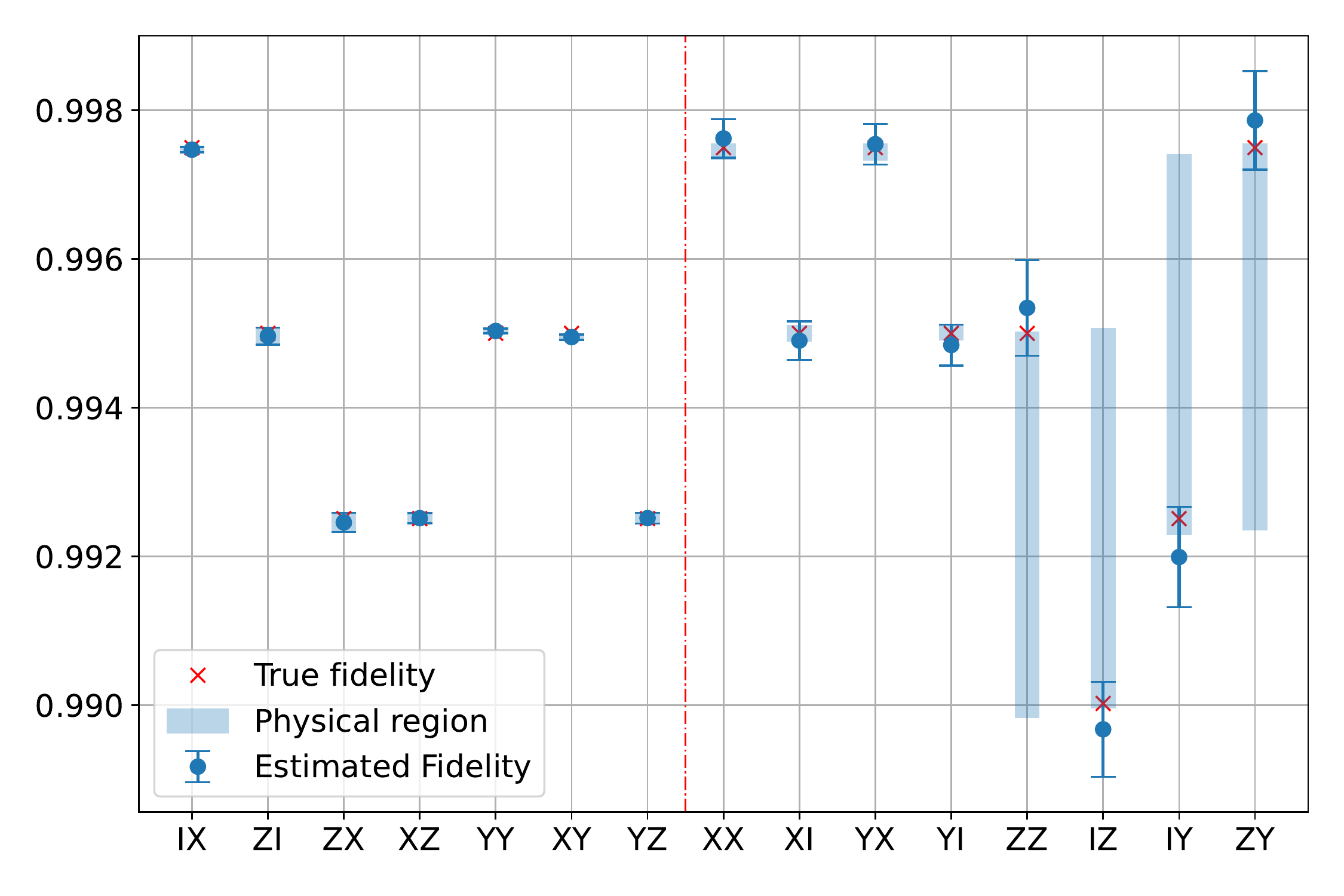}
    }
    \subfloat[]{
    \centering
    \includegraphics[width=0.5\linewidth]{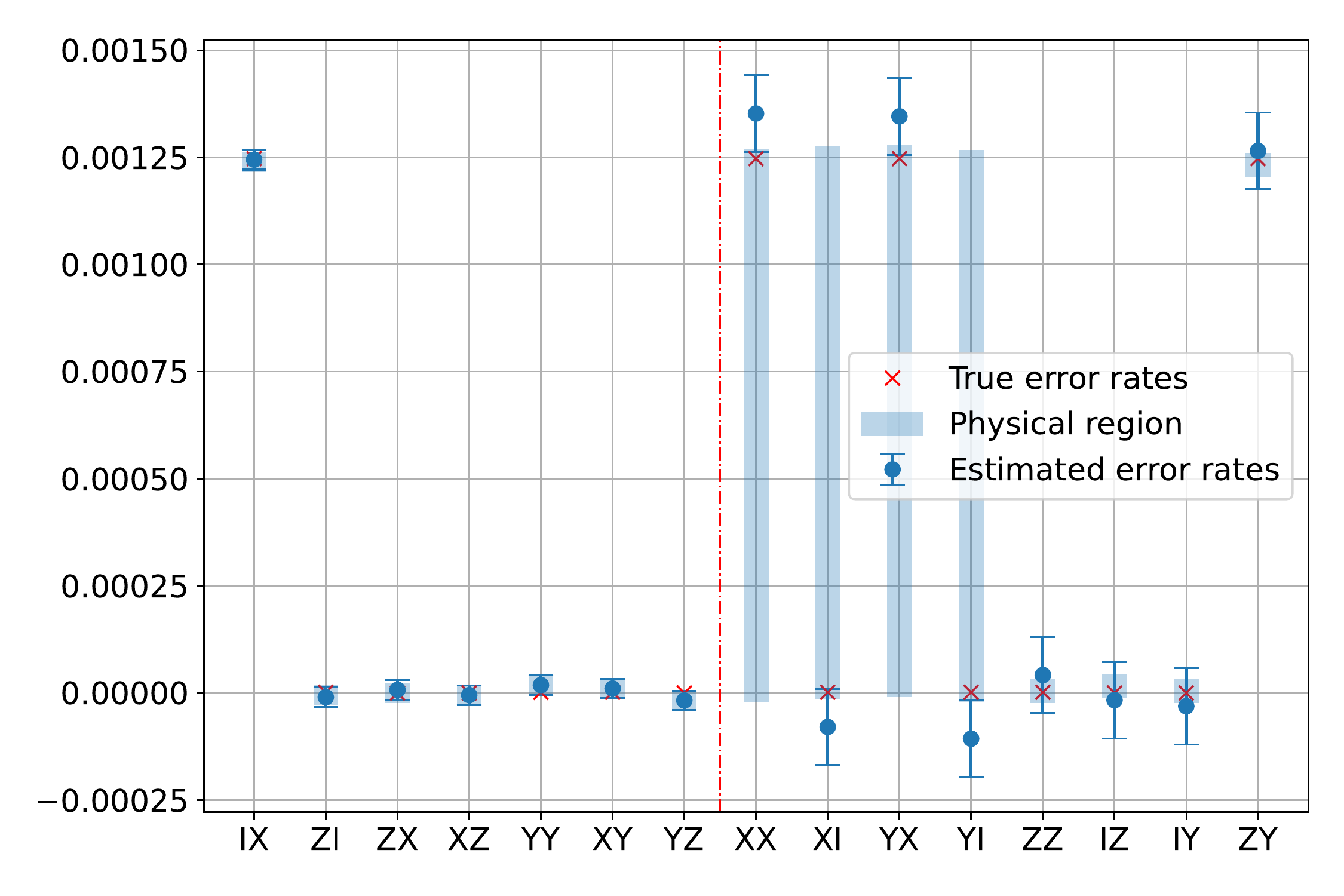}
    }
    \caption{The learned Pauli noise model using intercept CB. The feasible region (blue bars) are taken from Fig.~\ref{fig:main_sim_cbfeasible}. Estimates of Pauli fidelities (a) and Pauli error rates (b). Each data point is fitted using seven different circuit depths $L=[2,2^2,...,2^7]$. For each depth $C=300$ random circuits and $2000$ shots of measurements are used.
    }
    \label{fig:app_sim_intercept}
\end{figure}

\section{Justification for the claim in Sec.~\ref{sec:space}}\label{sec:justification}

We claim in Sec.~\ref{sec:space} that any measurement probability generated in experiment can be expressed as a polynomial of Pauli fidelities, and that each term in the polynomial can be learned in a CB experiment. This is the motivation why we only care for a single monomial of Pauli fidelities. Here we justify this claim.

Consider the most general experimental design: prepare some initial state $\rho_0$, apply some quantum circuit $\mc C$, and conduct a POVM measurement $\{E_j\}_j$. Denote the noisy realization of these objects with a tilde. Because of noise, the probability of obtaining a certain measurement outcome $j$ is
\begin{equation}
    \mr{Pr}(j) = \Tr\left( \widetilde{E}_j\widetilde{\mc C}(\widetilde{\rho}_0) \right) = \Tr\left( E_j\left(\Lambda^M\circ\widetilde{\mc C}\circ\Lambda^S\right)(\rho_0) \right) \equiv \Tr\left( E_j\rho' \right).
\end{equation}
Here $\Lambda^S,\Lambda^M$ are the noise channels for state preparation and measurement, respectively. The Pauli fidelity of them are denoted by $\lambda_a^S,\lambda_a^M$ for Pauli operator $a$, respectively. We define $\rho'\coleq (\Lambda^M\circ\widetilde{\mc C}\circ\Lambda^S)(\rho_0)$ which encodes all the information that can be extracted from a quantum measurements. We will obtain a general formula for $\rho'$.

First note that a general noisy quantum circuit $\widetilde{\mc C}$ satisfying our assumptions can be expressed as
\begin{equation}
    \widetilde{\mc C} = C\od m \circ \widetilde{\mc G}_{{m}} \circ \cdots \circ C\od 1 \circ \widetilde{\mc G}_{{1}} \circ C\od 0,
\end{equation}
where ${\mc G}_{{j}}\in{\mf G}$ is an $n$-qubit Clifford gate and $C\od j$ is the tensor product of single-qubit gates. A crucial property for single-qubit gates is that they never change the Pauli pattern. More rigorously, one have that
\begin{equation}
    C\od{j}(P_a) = \sum_{b\sim \pt(a)}c_{b,a}\od{j}P_b,\quad\forall P_a\in{\sf P}^n,
\end{equation}
where $c_{b,a}\od{j}\in\mbb R$, and the summation is over all $P_b$ that have the same Pauli pattern as $P_a$.

\medskip

\noindent Now consider the action of $\widetilde{\mc C}$ on an arbitrary Pauli operator $P_a$. 
\begin{equation}
    \begin{aligned}
        \widetilde{\mc C}(P_a) &= (C\od m \circ \widetilde{\mc G}_{{m}} \circ \cdots \circ C\od 1 \circ \widetilde{\mc G}_{{1}} \circ C\od 0)(P_a)\\
        &= (C\od m \circ \widetilde{\mc G}_{{m}} \circ \cdots \circ C\od 1 \circ \widetilde{\mc G}_{{1}}) \left(\sum_{b_0\sim \pt(a)} c_{b_0,a}\od{0} P_{b_0} \right)\\
        &= (C\od m \circ \widetilde{\mc G}_{{m}} \circ \cdots \circ C\od 1) \left(\sum_{b_0\sim \pt(a)} c_{b_0,a}\od 0\lambda_{b_0}^{\mc G_{1}} P_{\mc G_{1}(b_0)} \right)\\
        &= (C\od m \circ \widetilde{\mc G}_{{m}} \circ \cdots \circ C\od 2) \left(\sum_{\substack{
        b_0\sim \pt(a),\\
        b_1\sim \pt(\mc G_{1}(b_0))
        }} c_{b_1,\mc G_{1}(b_0)}\od{1}c_{b_0,a}\od{0} \lambda_{b_1}^{\mc G_{2}}\lambda_{b_0}^{\mc G_{1}} P_{\mc G_{2}(b_1)} \right)\\
        &= \cdots\\
        &= \sum_{\substack{
        b_0\sim \pt(a),\\
        b_1\sim \pt(\mc G_{1}(b_0)),\\
        \dots\\
        b_m\sim \pt(\mc G_{m}(b_{m-1}))
        }}c_{b_m,\mc G_{m}(b_{m-1})}\od{m}\cdots c_{b_1,\mc G_{1}(b_0)}\od{1} c_{b_0,a}\od{0} \lambda_{b_{m-1}}^{\mc G_{m}}\cdots\lambda_{b_1}^{\mc G_{2}}\lambda_{b_0}^{\mc G_{1}} P_{b_m}.
    \end{aligned}
\end{equation}
For any initial state $\rho_0$, we can decompose it via Pauli operators as 
\begin{equation}
    \rho_0 = \frac{1}{2^n} I + \sum_{a\ne\bm 0}\alpha_aP_a.
\end{equation}
Going through the state preparation noise, the quantum circuit, and the measurement noise, the state evolves to
\begin{equation}
    \begin{aligned}\label{eq:rho'2new}
        \rho' &= (\Lambda^{M}\circ\widetilde{\mc C}\circ \Lambda^{S})(\frac{1}{2^n}I + \sum_{a\ne\bm 0}\alpha_aP_a)\\
        &= \frac{1}{2^n}I +\sum_{a\ne\bm 0}\alpha_a\sum_{\substack{
        b_0\sim \pt(a),\\
        b_1\sim \pt(\mc G_{1}(b_0)),\\
        \dots\\
        b_m\sim \pt(\mc G_{m}(b_{m-1}))
        }}c_{b_m,\mc G_{m}(b_{m-1})}\od{m}\cdots c_{b_1,\mc G_{1}(b_0)}\od{1} c_{b_0,a}\od{0} ~\lambda_{\pt(b_m)}^M\lambda_{b_{m-1}}^{\mc G_{m}}\cdots\lambda_{b_1}^{\mc G_{2}}\lambda_{b_0}^{\mc G_{1}}\lambda_{\pt(a)}^S P_{b_m}\\
        &\equiv\frac{1}{2^n}I +\sum_{a\ne\bm 0}\alpha_a\sum_{\substack{
        b_0\sim \pt(a),\\
        b_1\sim \pt(\mc G_{1}(b_0)),\\
        \dots\\
        b_m\sim \pt(\mc G_{m}(b_{m-1}))
        }}c_{b_m,\mc G_{m}(b_{m-1})}\od{m}\cdots c_{b_1,\mc G_{1}(b_0)}\od{1} c_{b_0,a}\od{0} ~\Gamma_{\bm b,a} P_{b_m}.
    \end{aligned}
\end{equation}
Here we define $\Gamma_{\bm b,a}=\lambda_{\pt(b_m)}^M\lambda_{b_{m-1}}^{\mc G_{m}}\cdots\lambda_{b_1}^{\mc G_{2}}\lambda_{b_0}^{\mc G_{1}}\lambda_{\pt(a)}^S$, which is a monomial of Pauli fidelities.
The measurement outcome probability $\mr{Pr}(j)$ is a linear combination of such $\Gamma_{\bm b,a}$ plus some constant.
Moreover, each $\Gamma_{\bm b,a}$ of the above form can also be learned from a simple experiment, by choosing the initial state to be a $+1$ eigenstate of $P_a$, measurement operator to be $P_{b_m}$, and $C^{(j)}$ to be the product of single-qubit Clifford gates satisfying $C^{(j)}(\mc G_j({b_{j-1}})) = {b_{j}}$ (which is possible because $\pt(b_j)=\pt(\mc G_j(b_{j-1}))$).
Therefore, to completely characterize a noise model, we only need to extract the products of Pauli fidelities in the form of $\Gamma_{\bm b,a}$. This justifies our earlier claim.

\end{appendix}

\end{document}